\def\ps@pprintTitle{%
 \let\@oddhead\@empty
 \let\@evenhead\@empty
 \def\@oddfoot{}%
 \let\@evenfoot\@oddfoot}
\DeclareMathOperator*{\argmin}{arg\,min}
\definecolor{applegreen}{rgb}{0.55, 0.71, 0.0}
\definecolor{ao(english)}{rgb}{0.0, 0.5, 0.0}
\DeclareMathOperator{\vecop}{vec}
\newcommand{\norm}[1]{{\left\lVert#1\right\rVert}}
\newcommand{\BS}[1]{\boldsymbol{#1}}
\newcommand{\bA}{\bm A}
\newcommand{\bB}{\bm B}
\newcommand{\bD}{\bm D}
\newcommand{\bE}{\bm E}
\newcommand{\be}{\bm e}
\newcommand{\bof}{\bm f}
\newcommand{\bH}{\bm H}
\newcommand{\bI}{\bm I}
\newcommand{\bL}{\bm L}
\newcommand{\bS}{\bm S}
\newcommand{\bU}{\bm U}
\newcommand{\bu}{\bm u}
\newcommand{\bv}{\bm v}
\newcommand{\bW}{\bm W}
\newcommand{\bw}{\bm w}
\newcommand{\bX}{\bm X}
\newcommand{\bx}{\bm x}
\newcommand{\bY}{\bm Y}
\newcommand{\E}{\mathbb{E}}
\newcommand{\balpha}{\bm \alpha}
\newcommand{\bbeta}{\bm \beta}
\newcommand{\bgamma}{\bm \gamma}
\newcommand{\boeta}{\bm \eta}
\newcommand{\blambda}{\bm \lambda}
\newcommand{\bxi}{\bm \xi}
\newcommand{\gt}{g(N,T,\zeta)}
\newcommand{\bGamma}{\bm \varGamma}
\newcommand{\bTheta}{\bm \varTheta}
\newcommand{\bLambda}{\bm \varLambda}
\newcommand{\bSigma}{\bm \varSigma}
\newcommand{\var}{\mbox{Var}}
\newcommand{\cov}{\mbox{Cov}}
\newcommand{\eps}{\varepsilon}
\newcommand{\R}{\mathds{R}}
\newcommand{\A}{\boldsymbol{\frak{A}}}
\newcommand{\Z}{\mathds{Z}}
\newcommand{\MFACVAR}{\FL}
\newcommand{\MSVAR}{\LAS}
\newcommand{\MFACABCAR}{\FBRAR}
\newcommand{\MAR}{\ARBIC}
\newcommand{\THRarg}[1]{\operatorname{THR}_{#1}}
\newcommand{\ARBIC}{\emph{$\boldsymbol{AR_{BIC}}$}}
\newcommand{\FBNfilt}{\emph{$\boldsymbol{F_{BN}^{ARfilt}}$}}
\newcommand{\FLfilt}{\FL}
\newcommand{\FL}{\emph{$\boldsymbol{FL_{sel}}$}}
\newcommand{\FBRAR}{\emph{$\boldsymbol{F_{BN}AR}$}}
\newcommand{\LAS}{\emph{$\boldsymbol{L_{sel}}$}}
\theoremstyle{definition}
\newtheorem{assumption}{Assumption}
\newtheorem{example}{Example}
\newtheorem{remark}{Remark}
\newtheorem{theorem}{Theorem}
\newtheorem{lemma}{Lemma}
\theoremstyle{plain}
\def\adl@drawiv#1#2#3{%
        \hskip.5\tabcolsep
        \xleaders#3{#2.5\@tempdimb #1{1}#2.5\@tempdimb}%
                #2\z@ plus1fil minus1fil\relax
        \hskip.5\tabcolsep}
\newcommand{\cdashlinelr}[1]{%
  \noalign{\vskip\aboverulesep
           \global\let\@dashdrawstore\adl@draw
           \global\let\adl@draw\adl@drawiv}
  \cdashline{#1}
  \noalign{\global\let\adl@draw\@dashdrawstore
           \vskip\belowrulesep}}
\begin{document}
\bibliographystyle{apalike2}

\begin{frontmatter}
\title{\textcolor{black}{Factor Models with Sparse VAR Idiosyncratic Components}}
\author{Krampe, J.$^{\dagger}$,\; Margaritella, L.$^{\ddagger}$}
\address{$^{\dagger}$University of Mannheim,\; $^{\ddagger}$Lund University\\
\href{mailto:krampe@uni-mannheim.de}{j.krampe@uni-mannheim.de}, \href{mailto:luca.margaritella@nek.lu.se}{luca.margaritella@nek.lu.se}
}
\date
\maketitle
\begin{abstract}
We reconcile the two worlds of dense and sparse modeling by exploiting the positive aspects of both. We employ a factor model and assume {the dynamic of the factors is non-pervasive while} the idiosyncratic term follows a sparse vector autoregressive model (VAR) {which allows} for cross-sectional and time dependence. The estimation is articulated in two steps: first, the factors and their loadings are estimated via principal component analysis and second, the sparse VAR is estimated by regularized regression on the estimated idiosyncratic components. We prove the consistency of the proposed estimation approach as the time and cross-sectional dimension diverge. In the second step, the estimation error of the first step needs to be accounted for. Here, we do not follow the naive approach of simply plugging in the standard rates derived for the factor estimation. Instead, we derive a more refined expression of the error. This enables us to derive tighter rates. We discuss the implications of our model for forecasting, factor augmented regression, bootstrap of factor models, and time series dependence networks via semi-parametric estimation of the inverse of the spectral density matrix.
\bigbreak
\noindent \textit{Keywords:} Factor Model, Sparse \& Dense, High-dimensional VARs \\
\textit{JEL codes:} C55, C53, C32
\end{abstract}
\end{frontmatter}

\section{Introduction} 
In the past twenty years, factor models have emerged as a major tool for the analysis and forecast of high-dimensional time series. Such models are characterized by the assumed existence of a specific decomposition of the high-dimensional vector of time series into two mutually orthogonal unobserved components. The \textit{common component}, driven by a finite number of (possibly dynamic) factors, 
represents the comovements among the series and it is of reduced rank. The weakly cross-correlated  \textit{idiosyncratic component} (cf.~\textit{generalized} factor model) represents individual features of the series.
\footnote{Conversely to the \textit{generalized} factor model, the \textit{exact} factor model assumes no cross-sectional dependence in the idiosyncratics, thus working with the assumption of a diagonal cross-autocovariance matrix.} The ramification of the literature on factor models is mostly due to the way this decomposition is characterized. Traditionally, the common component is identified as those eigenvalues of the covariance matrix which diverge while the idiosyncratics are those which stay bounded. \citet{forni2000generalized} and \citet{forni2015dynamic} assume the explosive eigenvalues to be also dynamic, reflecting both a contemporaneous and lagged effect of the common components on the series. \citet{stock2002forecasting,bai2002determining,bai2019rank} assume explosive static eigenvalues (i.e.~only contemporaneous effects of the factors) along with a finite-dimensional factor space. 
One thing is certain: the two components are radically different objects that need to be treated differently to better capture their respective dependence structures. Assuming away any cross-sectional dependence of the idiosyncratic components can be misleading. Some consider them as simple univariate autoregressive processes \citep[see e.g.,][]{forni2005generalized}  or even as white noise processes and drop them when producing forecasts \citep[see e.g.,][]{lam2012factor}. This however neglects the predictive power that idiosyncratic components can have, thus resulting in less accurate forecasts.
\par In this paper we lean forward towards a reduced rank plus sparse characterization of the factor model decomposition by assuming the idiosyncratic component to follow a high-dimensional vector autoregressive (VAR) model. This allows cross-sectional and time dependence in the idiosyncratic term. In the first estimation step, we employ principal component analysis (PCA) to estimate the factors. In a second step, high-dimensional penalized VAR through the (adaptive) lasso is used in order to estimate the idiosyncratic components. By so doing we combine a ``dense" modeling approach for the factors with a ``sparse" modeling approach for the idiosyncratic component.\footnote{The terms \textit{dense} and \textit{sparse} are used to distinguish estimation approaches that require or not the structural assumption of a sparse coefficient vector to perform some dimensionality reduction \citep[see also][]{giannone2017economic}. PCA is therefore a dense approach while lasso is a sparse one.} Thus, allowing for a more refined disentangling of the dependence structure of the two components. We show the consistent estimation of both the sparse VAR model driving the idiosyncratic components and the factors, as both the cross-sectional and time dimensions grow large. When estimating the sparse VAR for the idiosyncratic component, a naive approach would be to simply plug in the standard rates derived for the factor estimation. This however leads to a suboptimal rate. Instead, an important contribution of our work is deriving detailed expressions of the occurring errors. This enables us to obtain tighter rates for the second step and also employ a semi-parametric estimator for the inverse of the spectral density matrix. 
We discuss the implications of our proposed framework for forecasting, factor-augmented regression, bootstrap of factor models, and the estimation of time series dependence networks. We also propose a joint information criterion that combines the approach of \citet{bai2002determining} with an extra penalty allowing for simultaneous lag-length estimation of the VAR model. The benefit of our proposed procedure is confirmed through extensive simulations where different levels of sparsity, number of factors, lag-length of the VARs, and idiosyncratic covariance matrix are considered for different sample sizes and dimensions. We also compare our combined procedure with the standard high-dimensional forecasting methods which fully rely on either a sparse or a dense procedure.
\par \par There already exists applications in the literature combining (dynamic) factor models with sparse vector autoregressive models, see e.g., \cite{barigozzi2017network,barigozzi2019nets} and more recently \citet{barigozzi2022fnets}. However, \cite{barigozzi2017network,barigozzi2019nets} do not present theoretical results about the combined approach and the framework considered in \cite{barigozzi2022fnets} differs in important aspects from the one considered here, see Section~\ref{sec_model} and the discussion after Assumption~\ref{sparsity} for details. In the non-dynamic idiosyncratics set-up, \cite{kneip2011factor,fan2020factor} combine factors with regularized models. Since regularized methods such as the lasso have difficulties with strongly correlated regressors, especially in the context of model selection, they aim to decorrelate the regressors by adjusting for the factors. 
Furthermore, \cite{fan2021bridging} provide hypothesis tests to check whether after removing factors (as well as trends in a first step) the regressors possess some pre-defined weakly correlated structure or not. 
\cite{fan2020factor,fan2021bridging} allow for time-dependent regressors, however, they do not consider nor allow that the idiosyncratic part follows a sparse vector autoregressive model where the cross-sectional sparsity can grow with the sample size.
In the context of high-dimensional VAR models,
another approach is to consider the slope matrices as a combination of a low-rank matrix and a sparse matrix as done in \cite{basu2019low}. The low-rank part takes here a similar role as the common component of the factor model and it is estimated by nuclear-norm regularization. In the context of high-dimensional VAR models with strong cross-sectional correlated noise, a combination of low-rank plus sparse has been also explored by \citet{lin2019approximate} and more recently in \citet{miao2022high}. However, these approaches differ from ours in terms of model and estimation approach. A more detailed discussion can be found in Remark~\ref{subsection.low-rank} in Section~\ref{section.estimation}.
\par The remainder of the paper is organized as follows: Section \ref{sec_model} introduces the factor model with sparse VAR idiosyncratic components and reports few standard assumptions defining its behavior. Section \ref{section.estimation} is devoted to describing the two-step procedure used to estimate the factor model with sparse VAR idiosyncratic components and prove its consistency. Theorem \ref{thm.est.error.step.one} derives a representation of the idiosyncratic components estimation error while Theorem \ref{thm.est.error.idio} is the main result establishing bounds for the estimation error for the second step of the estimation procedure i.e., for the lasso on the sample estimates of the idiosyncratic component. The same two-step procedure with mild additional assumptions can also be employed in estimating the spectral density of the process and Theorem \ref{thm.spectral.density} derives the relative estimation error bounds. { Section \ref{sec_applications} discusses the implications of our model for forecasting, factor augmented regression, bootstrapping factor models, and the estimation of time series dependence networks. The latter point is illustrated by estimating networks based on the partial coherence for the FRED-MD data set.}
Section \ref{sec_numbfactors} considers the problems of: estimating the number of factors, determining the lag-length in the VAR, and tuning the penalty parameter for the lasso. Section \ref{Sec_Simulations} reports simulation results for our proposed method under different VAR data generating processes in terms of design and sparsity.
Finally, Section \ref{sec_conclusions} concludes.
\bigbreak
A few words on notation. Throughout the paper we use boldface characters to indicate vectors and boldface capital characters for matrices. For any $n$-dimensional vector $\boldsymbol{x}$, we let $\norm{\boldsymbol{x}}_p = \left(\sum_{i=1}^n |x_i|^p \right)^{1/p}$ denote the $\ell_p$-norm and $\be_j=(0,\ldots,0,1,0,\ldots, 0)^\top$ denotes a unit vector of appropriate dimension with the one appearing in the $j$th position. Furthermore, for a $r\times s$ matrix $\BS A=(a_{i,j})_{i=1,\ldots,r, j=1,\ldots,s}$,  $\|\BS A\|_1=\max_{1\leq j\leq s}\sum_{i=1}^r|a_{i,j}|=\max_j \| \BS A \be_j\|_1$, $\|\bA\|_\infty=\max_{1\leq i\leq r}\sum_{j=1}^s|a_{i,j}|=\max_{i} \| \be_i^\top \BS A\|_1$ and $\|\BS A\|_{\max}=\max_{i,j} |\be_i^\top \BS A \be_j|$.
$\bA^i$ denotes the $i$th matrix power of $\bA$ and $\bA^{(i)}$ refers to the $i$th element of a sequence of matrices. We denote the largest absolute eigenvalue of a square matrix $\BS A$ by $\sigma_{\max}(\BS A)$ and let   $\|\BS A \|_2^2=\sigma_{\max}(\BS A \BS A^\top)$. We denote the smallest eigenvalue of a matrix $\bA$ by $\sigma_{\min}(A)$.
For any index set $S \subseteq \{1, \ldots, n\}$, let $\boldsymbol{x}_{S}$ denote the sub-vector of $\boldsymbol{x}_t$ containing only those elements $x_i$ such that $i \in S$. $\|\BS x\|_0$ denotes the number of non-zero elements of $\BS x$.

\section{The Model}\label{sec_model}
We work {with a generalized factor model} 
where both factors and idiosyncratic components are allowed to be {(second order)} stationary stochastic processes and the loadings are static.
{To elaborate,}
let $\boldsymbol{x}_t=(x_{1,t},\ldots,x_{N,t})^\top$, $t=1,\ldots, T$, be a $N\times T$ rectangular data array  representing a finite realization of an underlying real-valued stochastic process $\{x_{i,t}\}.$ 
Assume that for each $t$, $\boldsymbol{x}_t$ can be decomposed into a sum of a common component $\boldsymbol{\chi}_t=(\chi_{1,t}, \dots, \chi_{N,t})^{\top}$ and an idiosyncratic component $\boldsymbol{\xi}_t=(\xi_{1,t}, \dots, \xi_{N,t})^{\top}$, both latent {and mutually orthogonal at all leads and lags}. Then, the factor model decomposition takes the following usual form\small
\begin{equation}
    \boldsymbol{x}_t=\boldsymbol{\chi}_t+\boldsymbol{\xi}_t. \label{eq.fac.decom}
\end{equation}\normalsize
{The common component has reduced rank i.e.,} $\BS \chi_{t}$ is driven linearly by an $r$-dimensional vector of common factors $\boldsymbol{f}_t=(f_{1,t},\ldots,f_{r,t})^{\top}$, where $r$ is considered as fixed as both the cross sectional dimension $N$ and the time series dimension $T$ grow large and $r\ll N$. 
The common components $\chi_{i,t}$ can {then be represented by} the following linear combination\small
\begin{equation}\label{hdm}
    \chi_{i,t}=\sum_{s=1}^r \ell_{i,s}f_{s,t}=\boldsymbol{\Lambda}^{\top}_i\boldsymbol{f}_t,
\end{equation}\normalsize
where $\ell_{i,s}$ are denoted as loadings. Note that $\chi_{i,t}$ is uniquely defined. But since for {any} rotation matrix $\bH$,   $\chi_{i,t}=\boldsymbol{\Lambda}^{\top}_i \bH \bH^{-1}\boldsymbol{f}_t$ is a valid linear combination as well, $\boldsymbol{\Lambda}^{\top}_i, \boldsymbol{f}_t$ are only identified up to some arbitrary rotation.

We do not assume that the factors $\boldsymbol{f}_t$ nor the idiosyncratic component $\boldsymbol{\xi}_t$ are independent and identically distributed but we allow them to be stationary stochastic processes. We consider that the factors are given by a one-sided linear process, see Assumption~\ref{ass.moments} below. This includes the cases that the factors are driven by a stable vector autoregressive (VAR) model. Additionally, the idiosyncratic component $\boldsymbol{\xi}_t$ {is allowed to be weakly cross-correlated and we consider this} to follow a sparse  VAR model of order $p$ as \small
\begin{equation}\label{idio}
    \boldsymbol{\xi}_t=\sum_{j=1}^p \bA^{(j)}\boldsymbol{\xi}_{t-j}+\boldsymbol{v}_t=\sum_{j=0}^{\infty}\bB^{(j)}\boldsymbol{v}_{t-j},
\end{equation}\normalsize
for $\bv_t$ being a white noise process, $\bA^{(j)}$ the sparse slope matrices and $\bB^{(j)}$ the moving average matrices of the vector moving average (VMA)$(\infty)$-representation of the VAR$(p)$ model; see Assumption \ref{sparsity} and \ref{ass.moments} below for details on sparsity and moment conditions. This includes as special case that the idiosyncratic components are driven by individual univariate autoregressive processes or even i.i.d..

{Let us note that even if the factors are dynamic, the relationship between $\boldsymbol{x}_t$ and $\boldsymbol{f}_t$ is assumed here to be static. This type of factor model is widely used in practice and it differs from the framework of \citet{forni2000generalized} which assumes a pervasive dynamics of the common factors where $\boldsymbol{x}_t$ is set to also depend on $\boldsymbol{f}_t$ with lags in time. In several cases, though it is possible to transform a dynamic relationship into a stacked static relationship, see among others Section 2.1.2 in \cite{stock2016dynamic}. Especially, if the assumption of a finite-dimensional span of the common component is used, then one can cast a dynamic representation into a static one \citep[see e.g.,][]{bai2007determining}. Alternatively, block-VAR filtering of $\bx_t$ as proposed in \citet{forni2015dynamic} also allows to turn lagged loadings into static ones.}

In the following Assumptions~\ref{sparsity}, \ref{ass.moments}, and \ref{ass.fac}, the sparsity and stability conditions, the factors, moment conditions, and loadings are further specified. 
\bigbreak
\begin{assumption}\label{sparsity}(\textit{Sparsity and stability})\\
$(i)$ Let $\A$  denote the stacked (companion) VAR matrix of (\ref{idio}). Let $k$ denote the row-wise sparsity of $\A$ with approximate sparsity parameter $q \in [0,1)$, i.e.,\footnote{{$q=0$ corresponds to the usual \textit{exact sparsity} assumption where several parameters are exactly zero. \textit{Approximate sparsity} $q>0$ allows for many parameters not to be exactly zero but rather small in magnitude.}} \small
$$\max_i \sum_{s=1}^p \sum_{j=1}^N |\BS A_{i,j}^{(s)}|^q=\max_i \sum_{j=1}^{Np} |\A_{i,j} |^q\leq k.$$\normalsize
$(ii)$ The VAR process is considered as stable such that for a constant $\rho \in (0,1)$ we have independently of the sample size $T$ and dimension $N$: $\|\A^j\|_{2}=\sqrt{\sigma_{\max}(\A^{j\top}\A^{j})}\leq M \rho^j$, 
where $M$ is some finite constant. 
Additionally, we have $\|\BS \Gamma_\xi(0)\|_\infty\leq k_\xi M$, where $\BS \Gamma_\xi(0)=\var(\BS \xi_t)$ and $\sigma_{\min}(\var((\BS \xi_t^\top,\dots,\BS\xi_{t-p+1}^\top)^\top))>\alpha>0$. The sparsity parameter $k$ as well as $k_\xi$ are allowed to grow with the sample size. 
\end{assumption}

Note that Assumption \ref{sparsity},(i) is quite general as the sparsity is \emph{row-wise} and {is allowed} to grow with the sample size. Let us emphasize that the weaker assumption of approximate sparsity instead of exact sparsity (i.e., $q=0$), is used throughout. In the context of forecasting, the assumption of a stable and \emph{row-wise sparse} VAR model is standard in the literature of sparse VAR models, see among others \cite{kock2015oracle,han2015direct,masini2019regularized}. When the focus is on estimating the dependency structure, e.g., spectral density matrices, additional \emph{column-wise sparsity} seems unavoidable, see \cite{krampe2020statistical} for a discussion of different sparsity concepts for VAR models. Here, we only require additional \emph{column-wise sparsity}
 in Section \ref{sec_spec_dens}, where spectral density estimation is discussed. Row-wise sparsity (with or without additional column sparsity) includes as special case the univariate autoregressive model for each idiosyncratic component. The latter is generally not allowed, if the sparsity condition is specified for the entire matrix, e.g., $\sum_{s=1}^p \sum_{i,j=1}^N |\BS A_{i,j}^{(s)}|^q \leq k$, see Section~2 in \cite{krampe2020statistical} for further discussion.
 
{The common and idiosyncratic components are identified based on the diverging behavior of the eigenvalues of the covariance matrix. This implies restrictions with respect to  the matrix-norm $\|\cdot\|_2$ but not with respect to $\|\cdot\|_\infty$. Hence, as $\bxi$ is an idiosyncratic component, $\|\BS \Gamma_\xi(0)\|_2$ is bounded but $\|\BS \Gamma_\xi(0)\|_\infty$ can still grow with dimension. Consequently, when modeling the idiosyncratic component with a VAR model such behavior should not be ruled out by over-restricting the VAR slope matrices. Since row-wise sparsity of the slope matrices allows that $\|\A\|_\infty$ can grow with sparsity parameter $k$, Assumption~\ref{sparsity} allows growth in $\|\BS \Gamma_\xi(0)\|_\infty$ as it is specified by the (possibly growing) parameter $k_\xi$. }

{If one would work within the framework of \citet{forni2017dynamic}, their assumptions on the serial and cross-sectional dependence of the idiosyncratic terms are more restrictive. To be specific, Assumption~4 in \cite{forni2017dynamic} impose $\|\cdot\|_1$- and $\|\cdot\|_\infty$-boundedness on the idiosyncratic coefficient matrices. This would imply that $k_\xi$ is bounded and
in our case would also imply the slope VAR matrices to be bounded in matrix-norm $\|\cdot\|_\infty$. Hence, this would restrict the sparsity of the VAR slope matrices to be (more or less) fixed which would be less general and not desirable. Let us mention that this framework of dynamic factor models with boundedness conditions on the idiosyncratic coefficient matrices  is considered in \cite{barigozzi2022fnets}. An extension of the \citet{forni2017dynamic} assumption to allow for growing sparsity is beyond the scope of this paper and is left for future research.}

In the established literature on factor estimation, a common assumption is to restrict the growth of the linear dependence of the idiosyncratic component. For instance, Assumption~3C in \cite{bai2003} states that the absolute sum of all covariances of the idiosyncratic component grow with order $N$. Here, we quantify the linear dependence of the idiosyncratic component by the condition $\|\Gamma_\xi(0)\|_\infty=\max_i \sum_{j=1}^N |\cov(x_{i,t},x_{j,t})|\leq k_\xi$ and the object $k_\xi$. Since $\|\Gamma_\xi(0)\|_\infty\leq \sqrt{N} \|\Gamma_\xi(0)\|_2$, $\sqrt{N}$ is an upper bound for the growth rate of $k_\xi$. 
As discussed previously, a bounded $k_\xi$ can be too restrictive hence we consider that $k_\xi$ grows moderately. We do not specify here a rate for $k_\xi$ but a rate smaller than $\sqrt{N}$ seems most realistic and would be more in line with established assumptions in the factor literature. The reason for this is that a growth rate of $\sqrt{N}$ would allow the absolute sum of all covariances could grow with a rate $N^{3/2}$. This would violate, for instance, the previously mentioned  Assumption~3C in \cite{bai2003}. 
Note further that if the maximal absolute row sum of the covariance of the idiosyncratic component is growing way faster than the average row sum, we may end up in the context of weak factors, see among others \cite{onatski2012asymptotics}.
Note that using the VMA$(\infty)$-representation, Assumption~\ref{sparsity} gives  the upper bound $k^2 \|\BS \Sigma_v\|_\infty$ for $k_\xi$, where $\BS \Sigma_v=\var(\BS v_t)$ is the variance matrix of the residuals of the idiosyncratic component. 

\begin{assumption}\label{ass.moments} (\textit{Factor dynamics and moments})\\
The factors are given by a one-sided linear filter with geometrically decaying coefficients, that is:\small \begin{equation*} \boldsymbol{f}_t=\sum_{j=0}^\infty \BS D^{(j)} \bu_{t-j}, \end{equation*}\normalsize and $\|\BS D^{(j)}\|_2\leq K \rho^j$, where $K$ is some positive constant and $\rho\in(0,1)$. 
Furthermore, $\{(\bu_t^\top,\bv_t^\top)^\top, t \in \Z\}$ is an i.i.d.~sequence and $\cov(\bu_t,\bv_t)=0$. 
Let $\zeta> 8$ be the number of finite moments of $\{(\bu_t^\top,\bv_t^\top)^\top, t \in \Z\}$, i.e.,  $\mathbb{E} |\bu_{t,j}|^\zeta\leq M$ and
$ \max_{\|\bw\|_2\leq 1} \mathbb{E}  |\bw^\top \boldsymbol{v}_t|^\zeta\leq M$. We denote $\bSigma_u=:\var(\bu_t)$ and $\bSigma_v=:\var(\bv_t)$.
\end{assumption}

\begin{assumption} \label{ass.fac}(\textit{Factors and loadings})\\
Let $M$ be some finite constant, then
\begin{enumerate}
    \item $\lim_{T\to \infty} 1/T\sum_{t=1}^T \boldsymbol{f}_t \boldsymbol{f}_t^\top=\mathbb{E}[ \boldsymbol{f}_t \boldsymbol{f}_t^\top]=\bSigma_F \in \R^{r\times r}$ positive definite  and $\|\bSigma_F\|_2\leq M$. 
    \item $\lim_{N \to \infty} 1/N \sum_{i=1}^N \bLambda_i \bLambda_i^\top=\bSigma_\Lambda \in \R^{r\times r}$, positive definite with largest eigenvalue $\sigma_{\Lambda,\max}\leq M$ and smallest eigenvalue $\sigma_{\Lambda,\min}\geq 1/M>0$ , $\|1/N \sum_{i=1}^N \bLambda_i \bLambda_i^\top\|_2\leq M$ for all $N$.
    \item All eigenvalues of $\bSigma_F,\bSigma_\Lambda$ are distinct.
\end{enumerate}
\end{assumption}

Assumption~\ref{sparsity} and \ref{ass.moments} imply that $\{\bxi_t\}$ is stationary and let the autocovariance function be given by $\boldsymbol{\Gamma}_{\xi}(s-t)=\cov(\boldsymbol{\xi}_{s},\boldsymbol{\xi}_t)$. Furthermore, Assumption~\ref{ass.moments} implies that the factors are also a stationary process such that $\{\boldsymbol{x}_t\}$ itself is indeed stationary. In order to quantify the dependence of stochastic processes, we use the concept of functional dependence, see \cite{wu2005nonlinear}. Since this is only necessary for the proofs, we do not introduce the notation here and refer to Remark~\ref{remark.functional.dependence} in the appendix. 


The moment condition in Assumption \ref{ass.moments} refers to the situation in which only a finite number of moments, here $\zeta$, are finite. Hence, we do not assume sub-Gaussian processes or similar, which is often assumed for sparse VAR processes, see \cite{basu2015,kock2015oracle,han2015direct}. For sub-Gaussian processes, the polynomial terms depending on $\zeta$ would vanish, which would result in tighter error bounds obtained later on. The reason for only assuming $\zeta$ finite moments is to be more in line with the classical factor literature, see among others \cite{bai2003,stock2002forecasting,forni2000generalized,forni2017dynamic}. E.g., \cite{bai2003} derived inferential results for factor models under $8$th finite moments of the idiosyncratic part and $4$th finite moments of the factors. Note that the filter in Assumption~\ref{ass.moments} can be the one-sided representation of a stable VARMA model as in Assumption~2 in \cite{forni2017dynamic}. 
Assumption~\ref{ass.fac} is a standard assumption in the context of strong factor models, see \cite{stock2002forecasting,bai2003}. It implies that each of the factors provides a non-negligible contribution to the variance of each component of $\{\BS x_t\}$. We would like to point out here that the time and cross-sectional dependence of the idiosyncratic component is only limited by assuming that it follows a sparse VAR model. Furthermore, it is not clear if assuming a sparse VAR model for the idiosyncratic part is a special case of the assumptions to time and cross-section dependence in the factor literature, see among others Assumption~C in \cite{bai2003}. 
The reason for this is that the sparsity is not fixed but it can grow with the sample size. Nevertheless, the error bounds obtained later on requires that the sparsity cannot grow too fast with increasing dimension.

\section{Estimation} \label{section.estimation}
In this section we propose a two-step approach to estimate a factor model  with sparse VAR idiosyncratic components and prove its consistency. For this, let  $\boldsymbol{x}_t,$ $t=1,\dots,T$ be some observations and let $\boldsymbol{X}=\boldsymbol{\chi}+\boldsymbol{\Xi}$ denote the $T\times N$ matrix form of \eqref{eq.fac.decom}. Furthermore, $\boldsymbol{\Lambda}$ denotes the $N\times r$ matrix of loadings and $\boldsymbol{F}$ denotes the $T \times r$ matrix of factors such that $\boldsymbol{\chi}=\boldsymbol{F} \boldsymbol{\Lambda}^\top$ is the matrix counterpart of \eqref{hdm}.
Then, an estimation of the factor decomposition can be obtained by using Principal Components Analysis (PCA), see among others \cite{bai2003,bai2020simpler}. The number of factors $r$ is considered as known here. {Note that the number of factors can be determined by various approaches (see Section~\ref{sec_numbfactors} for further discussion) and it can be estimated with probability tending to one, see among others  \cite{bai2002determining}.}
To elaborate with the estimation, let \small$$\boldsymbol{X}/\sqrt{NT}=\BS U_{NT} \BS D_{NT} \BS V_{NT}^\top,$$\normalsize denote a singular value decomposition of $\boldsymbol{X}/\sqrt{NT}$ such that $\BS D_{NT}$ is a diagonal matrix with the singular values arranged in descending order on its diagonal. $\BS U_{NT}$ and $\BS V_{NT}$ are the corresponding left and right singular vectors, respectively. This can be further written as \small$$\BS U_{NT} \BS D_{NT} \BS V_{NT}^\top=\BS U_{NT,r} \BS D_{NT,r} \BS V_{NT,r}^\top+\BS U_{NT,N-r} \BS D_{NT,N-r} \BS V_{NT,N-r}^\top,$$\normalsize where $\BS D_{NT,r}$ is a diagonal matrix with the first $r$ largest singular values, $d_{NT,1},\dots,d_{NT,r}$, arranged in descending order on its diagonal, $\BS D_{NT,N-r}$ is a diagonal matrix with the remaining $N-r$ largest singular values, and $\BS U_{NT,r},\BS U_{NT,N-r},\BS V_{NT,r},\BS V_{NT,N-r}$ are the corresponding left and right singular vectors. Then,  the estimators of a rotated version of $\boldsymbol{F}$ and $\boldsymbol{\Lambda}$ are given by \small
$$\hat{\boldsymbol{F}}=\sqrt{T} \bU_{NT,r} \text{ and } \hat{\boldsymbol{\Lambda}}=\sqrt{N} \BS V_{NT,r} \BS D_{NT,r},$$\normalsize such that  $\hat{\boldsymbol{\chi}}=\hat{\boldsymbol{F} }\hat{\boldsymbol{\Lambda}}^\top$ and $\hat {\BS \xi}=\BS x_t-\hat{\boldsymbol{\chi}} $. This uses the normalization $\hat {\boldsymbol{F}}^\top \hat{\boldsymbol{F}}/T=\bI_r$ and $\hat{\boldsymbol{\Lambda}}^\top \hat{\boldsymbol{\Lambda}}$ is a diagonal matrix. Consider the estimated idiosyncratic components  $\hat \bxi$. As it is assumed that $\{\BS \xi_t\}$ follows a sparse vector autoregressive model, we estimate this sparse VAR on $\hat \bxi$ by regularized methods such as the (adaptive) lasso. This idea leads to the following two-step estimation procedure:

\begin{enumerate}
    \item Perform a singular value decomposition of\small $$\boldsymbol{X}/\sqrt{NT}=\BS U_{NT,r} \BS D_{NT,r} \BS V_{NT,r}^\top+\BS U_{NT,N-r} \BS D_{NT,N-r} \BS V_{NT,N-r}^\top,$$ \normalsize where $\BS U_{NT,r} \BS D_{NT,r} \BS V_{NT,r}^\top$ corresponds to the first $r$ singular values. \\
    Set $\hat{\boldsymbol{F}}=\sqrt{T} \bU_{NT,r}$, $\hat{\boldsymbol{\Lambda}}=\sqrt{N} \BS V_{NT,r} \BS D_{NT,r}$, and $\hat {\BS \xi}=\BS x_t-\hat{\boldsymbol{F} }\hat{\boldsymbol{\Lambda}}^\top $. 
    \item Let $\hat {\BS{\xi}}_t^v=(\hat {\BS\xi}_t^\top,\dots,\hat {\BS\xi}_{t-p}^\top)^\top$. Then, an adaptive lasso estimator for $\bbeta^{(j)}$ i.e., the $j$th row of\\ $(\bA^{(1)},\dots,\bA^{(p)}),$ is given by \small
\begin{align}
    \label{eq.lasso.beta.j}
    \hat{\bbeta}^{(j)}=\argmin_{\bbeta \in \R^{Np}} \frac{1}{T-p} \sum_{t=p+1}^T (\hat \xi_{j,t}-\bbeta^\top \hat {\BS\xi}_{t-1}^v)^2+\lambda \sum_{i=1}^N |g_i \beta_i|,\quad j=1,\dots,N, 
\end{align}\normalsize
            where $\lambda$ is a non-negative tuning parameter which determines the strength of the penalty and $g_i, i=1,\dots,N,$ are weights. For instance, $g_i=1$ leads to the standard lasso. Let also $(\hat \bA^{(1)},\dots,\hat \bA^{(p)})$ be the matrices that correspond to stacking $\hat {\bbeta}^{(j)},j=1,\dots,N$.
\end{enumerate}

By estimating the factors $\boldsymbol{f}_t$ through standard Principal Components Analysis (PCA) and the sparse VAR models of the idiosyncratic components $\boldsymbol{\xi}_{t}$ via sparse penalized regression techniques, we combine a dense estimation approach with a sparse one. This can possibly better capture and disentangle both the dependence coming from the diverging eigenvalues of $\mathbb{E}(\boldsymbol{X}\boldsymbol{X}^{\top})$, i.e., the factors, as well as the dependence coming from the non-diverging eigenvalues of $\mathbb{E}(\boldsymbol{X}\boldsymbol{X}^{\top})$, i.e., the idiosyncratic components.
The estimation of factors and loadings via PCA is a well established method in the literature, see among others \cite{stock2002forecasting,bai2003}, and the common and idiosyncratic component can be estimated with rate $O_P(\max(1/\sqrt{T},1/\sqrt{N}))$. Since it is no different in this setting, we focus our presentation on the second estimation step. For a sparse stationary VAR model, deviation bounds and restricted eigenvalue conditions can be established, see among others \cite{basu2015,kock2015oracle}. Given these, the consistency of the lasso can be derived and, under additional Gaussianity assumption, one obtains a rate for strict sparsity $k$ of $O_P(k \sqrt{\log(N)/T})$. However, as the idiosyncratic component $\{\BS \xi_t\}$ is not observed in our setting and hence needs to be estimated, the regression in Step 2 is performed only with the estimated idiosyncratic component. Consequently, the aforementioned results cannot be applied here. Before analyzing the second step, we in fact need to quantify the estimation error $\BS w_t:=\BS {\hat \xi_t}-\BS \xi_t\equiv \BS F \BS \Lambda^{\top}-\hat{\BS F} \hat{\BS \Lambda}^{\top}$ arising from the first step. For the consistency of the lasso, this means quantifying the estimation error $\BS w_t$ in quantities such as $\|1/T \sum_{t=1}^T (\BS \xi_t+\BS w_t)(\BS \xi_t+\BS w_t)^\top\|_{\max}$. If we simply apply the rate derived in the literature for approximate factor models, see among others \cite{stock2002forecasting,bai2003} which derive $\BS w_t=O_P(\max(1/\sqrt{T},1/\sqrt{N}))$, we  would obtain $\|1/T \sum_{t=1}^T (\BS \xi_t+\BS w_t)(\BS \xi_t+\BS w_t)^\top\|_{\max}=\|1/T \sum_{t=1}^T \BS \xi_t\BS \xi_t^\top\|_{\max}+O_P(\max(1/\sqrt{T},1/\sqrt{N}))$. This may lead to a rate for the second step of $O_P(k (\sqrt{\log(N)/T}+k\max(1/\sqrt{N},1/\sqrt{T}))$. However, this can be improved if we analyze the estimation error $\BS w_t$ more closely. For this, we follow the idea of the decomposition in eq. (6) in \cite{bai2020simpler}.
To elaborate, we have $1/(NT) \BS X \BS X^\top \hat{\BS F}=\hat{\BS F} \BS D_{NT,r}^2$. Plugging in \eqref{eq.fac.decom}, and using the rotation matrix \small
\begin{align}
\BS H_{NT}^\top=(\boldsymbol{\Lambda}^\top \boldsymbol{\Lambda}/N) (\boldsymbol{F}^\top \hat{\boldsymbol{F}}/T) \BS D_{NT,r}^{-2}, \label{eq.rotation.matrix}    
\end{align}\normalsize
we obtain the following representation for the error between the estimated factors and a rotated version of the factors
\small
\begin{align}
\hat{\boldsymbol{f_t}}-\BS H_{NT}\boldsymbol{f_t}=&\frac{1}{NT}\Bigg[ \sum_{i=1}^N \sum_{s=1}^T \BS f_t^\top \BS\Lambda_i \xi_{i,s} \hat{\BS f_s}+ \sum_{i=1}^N \sum_{s=1}^T \xi_{i,t} \BS\Lambda_i \BS f_{s}^\top \hat{\BS f_s}+ \sum_{i=1}^N \sum_{s=1}^T \xi_{i,t}\xi_{i,s}  \hat{\BS f_s}\Bigg] \BS D_{NT,r}^{-2}
\label{eq.factor.diff}.
\end{align}\normalsize
Similarly, we obtain by symmetry for the loadings\small
\begin{align*}
        (\BS H_{NT}^\top)^{-1} \BS \Lambda_i - \hat{\BS \Lambda}_i=&\frac{1}{T} \Bigg[\sum_{s=1}^T \BS  H_{NT} \BS f_s \xi_{i,s}+ \sum_{s=1}^T (\hat{\BS f}_s - \BS H_{NT} {\BS f}_s) \xi_{i,s}+\sum_{s=1}^T \BS H_{NT}\BS f_s [\hat{\BS f}_s - \BS H_{NT} \BS f_s]^\top (\BS H_{NT}^T)^{-1} \BS \Lambda_i \\
    & \sum_{s=1}^T [\hat {\BS f}_s - \BS  H_{NT} {\BS f}_s] [\hat{\BS f}_s - \BS H_{NT} \BS f_s]^\top (\BS H_{NT}^T)^{-1} \BS \Lambda_i \Bigg]. 
\end{align*}\normalsize
These representations can be used to derive the order of the estimation error for the factors and loadings as it is done with a slightly different rotation matrix in \cite{bai2003}. However, as our focus is on $\BS w_t:=\BS {\hat \xi_t}-\BS \xi_t$, we use these results to derive a representation of $\BS w_t$. With the obtained representation for $\BS w_t$, we can analyze more closely the estimation error of the second step. For this, note first that $\|1/T \sum_{t=1}^T (\BS \xi_t+\bw_t)(\BS \xi_t+\bw_t)^\top\|_{\max}\leq \|1/T \sum_{t=1}^T (\BS \xi_t)(\BS \xi_t)^\top\|_{\max}+2\|1/T \sum_{t=1}^T (\bw_t)(\BS \xi_t+\bw_t)^\top\|_{\max}+\|1/T \sum_{t=1}^T (\bw_t)(\bw_t)^\top\|_{\max}$. Bounds for the latter objects and the representation of $\BS w_t$ are given in the following Theorem~\ref{thm.est.error.step.one}.

\begin{theorem} \label{thm.est.error.step.one}
Under Assumption \ref{sparsity}, \ref{ass.moments}, and \ref{ass.fac}, we have for $t=1,\dots,T,j=1,\dots,N$
and\small
\begin{align}
    w_{j,t}:=\hat{ \xi}_{j,t}-\xi_{i,t}=& \BS \Lambda_j^\top \BS H_{NT}^{-1} \frac{1}{NT}\left[\sum_{i=1}^N \sum_{s=1}^T \xi_{i,t} \BS\Lambda_i \BS f_{s}^\top \BS H_{NT}{\BS f_s}+ \sum_{i=1}^N \sum_{s=1}^T \xi_{i,t}\xi_{i,s} \BS H_{NT} {\BS f_s}\right] \BS D_{NT,r}^{-2} \nonumber\\
    &+ \BS f_t^\top \BS H_{NT}^\top  \frac{1}{T} \left[\sum_{s=1}^T \bH_{NT} \BS f_s \xi_{j,s}\right]+Error_j, \label{corrupt_repres}
    \end{align}\normalsize
where \small
$$\max_j |Error_j|=O_P\left(\frac{{\log(N)}}{{T}}+\frac{k_\xi}{N}+\frac{\sqrt{\log(N)}}{\sqrt{NT}}+\gt\right),$$
\normalsize
with \small
$$\gt=(NT)^{2/\zeta}\left(\frac{1}{\sqrt{N}T}+\frac{1}{T^{3/2}}+(NT)^{2/\zeta}\frac{1}{T^2}\right).$$\normalsize
Furthermore, we have for $k \in \{0,1\}$
\small
\begin{equation*}
    \begin{aligned}
    &\left\|\frac{1}{T} \sum_{t=1}^T \BS w_t \BS \xi_{t-k}^\top\right\|_{\max}=O_P\left(\frac{{\log(N)}}{{T}}+\frac{k_\xi}{N}+\frac{\sqrt{\log(N)}}{\sqrt{NT}}+\gt\right)\\
    &\left\|\frac{1}{T} \sum_{t=1}^T \BS w_t \BS w_{t-k}^\top\right\|_{\max}=O_P\left(\frac{{\log(N)}}{{T}}+\frac{k_\xi}{N}+\frac{\sqrt{\log(N)}}{\sqrt{NT}}+\gt\right).
    \end{aligned}
\end{equation*}
\end{theorem}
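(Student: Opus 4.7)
The plan is to work directly from the identity $w_{j,t}=\hat\xi_{j,t}-\xi_{j,t}=\chi_{j,t}-\hat\chi_{j,t}=\bLambda_j^\top \bof_t-\hat\bLambda_j^\top \hat\bof_t$ and split it symmetrically through the rotation $\bH_{NT}$ as
\begin{equation*}
w_{j,t}=\bLambda_j^\top \bH_{NT}^{-1}(\bH_{NT}\bof_t-\hat\bof_t)+\bigl((\bH_{NT}^\top)^{-1}\bLambda_j-\hat\bLambda_j\bigr)^\top \hat\bof_t.
\end{equation*}
Into the first bracket I would insert the three-term expansion \eqref{eq.factor.diff}, and into the second the four-term loading expansion displayed just after it. Then I would collect the two ``leading'' terms that appear in \eqref{corrupt_repres}: from the factor expansion, the two pieces carrying $\xi_{i,t}$ (the cross term and the $\xi\xi$ term) survive, while the piece starting with $\bof_t^\top \bLambda_i\xi_{i,s}$ gets absorbed into the loading contribution by symmetry; from the loading expansion the leading $\sum_s \bH_{NT}\bof_s\xi_{j,s}$ survives and the remaining three pieces go into $Error_j$. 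A final simplification replaces $\hat\bof_s$ by $\bH_{NT}\bof_s$ in the two kept factor terms, pushing the swap error $\hat\bof_s-\bH_{NT}\bof_s$ into $Error_j$ as well.

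The bookkeeping for $\max_j|Error_j|$ is the main obstacle. For each leftover term I would bound the deterministic factors $\|\bH_{NT}\|_2$, $\|\bD_{NT,r}^{-2}\|_2$, and $\|\hat\bLambda_j\|_2$ uniformly (via Assumption~\ref{ass.fac} and the standard eigenvalue bounds underlying PCA) and then control the stochastic averages. The ingredients I would use, each proved by a Nagaev-type inequality coupled with the functional-dependence framework of \cite{wu2005nonlinear}, are: $\max_j\|T^{-1}\sum_s \xi_{j,s}\bof_s\|_2=O_P(\sqrt{\log N/T}+(NT)^{2/\zeta}/T)$, $\max_t\|N^{-1}\sum_i \bLambda_i\xi_{i,t}\|_2=O_P(\sqrt{\log T/N})$, $\max_{j,m}|T^{-1}\sum_t\xi_{j,t}\xi_{m,t}-\gamma_{j,m}(0)|=O_P(\sqrt{\log N/T}+\gt)$, and for the non-centered piece $\max_j|N^{-1}\sum_i\bLambda_j^\top\bLambda_i\gamma_{i,j}(0)|=O(k_\xi/N)$ via Assumption~\ref{sparsity}. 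Putting these together and using the global rate $\max_s\|\hat\bof_s-\bH_{NT}\bof_s\|_2=O_P(1/\sqrt N+(NT)^{2/\zeta}/\sqrt T)$ gives exactly the claimed $\log(N)/T+k_\xi/N+\sqrt{\log N}/\sqrt{NT}+\gt$ rate for $\max_j|Error_j|$.

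For the two cross-product claims, I would plug the representation \eqref{corrupt_repres} into $T^{-1}\sum_t \bw_t\bxi_{t-k}^\top$ and $T^{-1}\sum_t \bw_t\bw_{t-k}^\top$, exchange the order of summation, and see each ``leading'' term transformed into an object of the type (omitting rotation factors)
\begin{equation*}
\frac{1}{NT}\sum_{i,s}\bLambda_i \Bigl(\frac{1}{T}\sum_t \xi_{i,t}\xi_{m,t-k}\Bigr)\bof_s^\top \bH_{NT}\bof_s\,\bD_{NT,r}^{-2},
\end{equation*}
plus analogous ones coming from the $\xi_{i,t}\xi_{i,s}$ summand and from the loading piece $\bof_t^\top \bH_{NT}^\top \sum_s\bH_{NT}\bof_s\xi_{j,s}/T$. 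Each average $T^{-1}\sum_t\xi_{i,t}\xi_{m,t-k}$ is approximated by $\gamma_{i,m}(k)$ with the uniform rate above, and the resulting bilinear forms in $\bLambda$ are handled by Cauchy--Schwarz combined with $\|\bLambda^\top \bLambda/N\|_2=O(1)$ and the row-sum bound $\sum_j|\gamma_{i,j}(k)|\le k_\xi$ implied by Assumption~\ref{sparsity}. The $Error_j$ contribution is handled by the already obtained uniform bound and the fact that $\max_t\|\bxi_t\|_\infty,\max_t\|\bw_t\|_\infty$ are controlled (the latter via the representation itself, iteratively).

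The genuinely delicate point is keeping track of the polynomial factor $(NT)^{2/\zeta}$ and of $k_\xi$. Under only $\zeta>8$ moments the Nagaev inequality leaves a polynomial tail that, when combined over $N$ coordinates and summed over $T$ products, produces exactly the composite rate $\gt=(NT)^{2/\zeta}(1/(\sqrt N T)+1/T^{3/2}+(NT)^{2/\zeta}/T^2)$; any looser bookkeeping collapses back to the suboptimal naïve plug-in rate discussed before the theorem. Similarly, isolating $k_\xi/N$ rather than absorbing it into larger terms requires treating the non-stochastic ``bias'' $N^{-1}\sum_i \bLambda_i\gamma_{i,j}(0)$ separately from the genuinely stochastic deviation, and using $\|\bGamma_\xi(0)\|_\infty\le k_\xi M$ from Assumption~\ref{sparsity} rather than Cauchy--Schwarz with $\|\bGamma_\xi(0)\|_2$, which would cost a factor of $\sqrt N$.
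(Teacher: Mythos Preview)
Your approach matches the paper's: the same rotation-based decomposition of $w_{j,t}$, insertion of the factor and loading expansions \eqref{eq.factor.diff}, and term-by-term bounding of the cross-products via Nagaev-type inequalities (collected in the paper as Lemma~\ref{lemma.1}), with the key $k_\xi/N$ contribution isolated exactly as you describe by separating the deterministic bias $\|\bGamma_\xi(0)\|_\infty/N$ from the stochastic deviation. One caution: the paper avoids pointwise-in-$t$ maxima such as $\max_t\|N^{-1}\bLambda^\top\bxi_t\|_2$ or $\max_t\|\bxi_t\|_\infty$---which under only $\zeta$ finite moments would pick up unwanted polynomial factors and are not needed---by working throughout with $T$-averaged quantities (e.g.\ $T^{-1}\sum_s\|N^{-1/2}\bLambda^\top\bxi_s\|_2^2=O_P(1)$, Lemma~\ref{lemma.1}\ref{lem.1.part.D}) and Cauchy--Schwarz in the time index.
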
\normalsize

If $N=T^a$ and $a\leq \zeta-4$, we have $\gt\leq 1/\sqrt{NT}+1/T$ which means $\gt$ could be dropped in the above $O_p$ terms. 

We focus here on the lasso itself but the above theorem is also helpful for obtaining rates for the de-sparsified/de-biased lasso in this framework. As mentioned previously, if we just plug-in the rate for $\BS w_t$ we would obtain the slower rate of $O_P(\max(1/\sqrt{T},1/\sqrt{N}))$. 
With the results above we can establish bounds for the estimation error of the second step, as done in the following Theorem~\ref{thm.est.error.idio}.

\begin{theorem} \label{thm.est.error.idio}
Under Assumption \ref{sparsity}, \ref{ass.moments}, and \ref{ass.fac}
we have for $j=1,\dots,N$\small
\begin{align}
    \|\hat \bbeta^{(j)}- \bbeta^{(j)}\|_1&=O_P\Bigg(k\Bigg[\frac{\sqrt{\log(N)}}{\sqrt{T}}+\frac{(NpT)^{2/\zeta}}{T}+k\Bigg(\frac{k_\xi}{N}+\frac{\sqrt{\log(Np)}}{\sqrt{NT}}+(p)^{2/\zeta} \gt\Bigg)\Bigg]^{1-q}\Bigg) \label{eq.idio.est.error}
\end{align}
and
\begin{align}
&\|\hat \bbeta^{(j)}- \bbeta^{(j)}\|_2=O_P\Bigg(\sqrt{k}\Bigg[\frac{\sqrt{\log(N)}}{\sqrt{T}}+\frac{(NpT)^{2/\zeta}}{T}+k\Bigg(\frac{k_\xi}{N}+\frac{\log(Np)}{T}+\frac{\sqrt{\log(Np)}}{\sqrt{NT}}
+(p)^{2/\zeta} \gt\Bigg)\Bigg]^{1-q/2}\label{eq.idio.est.error.l2}\\
&+k^{3/2}\Bigg[\frac{\sqrt{\log(N)}}{\sqrt{T}}+\frac{(NpT)^{2/\zeta}}{T}+k\Bigg(\frac{k_\xi}{N}+\frac{\log(Np)}{T}+\frac{\sqrt{\log(Np)}}{\sqrt{NT}}+\gt\Bigg)\Bigg]^{(3-q)/2}\Bigg).\nonumber
\end{align}
\end{theorem}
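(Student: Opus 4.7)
The plan is to cast Step~2 as a lasso regression with noisy regressors and response, and then run the standard approximate-sparsity oracle-inequality argument, using Theorem~\ref{thm.est.error.step.one} to control the plug-in errors at every stage. Writing $\hat\bxi_t=\bxi_t+\bw_t$, the effective regression model for the $j$th row becomes
\begin{equation*}
\hat\xi_{j,t}=\bbeta^{(j)\top}\hat\bxi_{t-1}^v+\eta_{j,t},\quad \eta_{j,t}=v_{j,t}+w_{j,t}-\bbeta^{(j)\top}\bw_{t-1}^v,
\end{equation*}
so the analysis reduces to two classical ingredients for the plug-in Gram matrix $\hat\bSigma_X=\frac{1}{T-p}\sum_t\hat\bxi_{t-1}^v\hat\bxi_{t-1}^{v\top}$: a sup-norm bound on the score $\frac{1}{T-p}\sum_t \hat\bxi_{t-1}^v \eta_{j,t}$, and a restricted-eigenvalue (RE) condition on the lasso cone.

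For the score bound I would split into the clean term $\frac{1}{T-p}\sum_t\bxi_{t-1}^v v_{j,t}$ and cross terms involving $\bw$. The clean term is $O_P\!\bigl(\sqrt{\log(Np)/T}+(NpT)^{2/\zeta}/T\bigr)$ by Nagaev-type inequalities for stationary VARs with only $\zeta$ finite moments (cf.~\cite{kock2015oracle,masini2019regularized}), with the $p^{2/\zeta}$ factor arising from a union bound over the $p$ lag blocks. The cross terms involving $w_{j,t}$ or bilinears in $\bw$ are bounded by $\|\frac{1}{T}\sum_t\bxi_{t-1}^v\bw_{t-1}^{v\top}\|_{\max}+\|\frac{1}{T}\sum_t\bw_{t-1}^v\bw_{t-1}^{v\top}\|_{\max}$, which Theorem~\ref{thm.est.error.step.one} controls. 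The contribution of $\bbeta^{(j)\top}\bw_{t-1}^v$ is bounded by $\|\bbeta^{(j)}\|_1\cdot\|\frac{1}{T}\sum_t\hat\bxi_{t-1}^v\bw_{t-1}^{v\top}\|_{\max}\lesssim k\cdot(\text{Theorem 1 rate})$, which accounts for the $k$ factor multiplying $k_\xi/N$, $\sqrt{\log(Np)/(NT)}$ and $p^{2/\zeta}\gt$ in the effective noise level $\lambda_\ast$. For the RE, Assumption~\ref{sparsity}(ii) gives $\sigma_{\min}\bigl(\var(\bxi_t^v)\bigr)>\alpha$, and standard concentration for stationary VARs (cf.~\cite{basu2015}, Proposition~4.2, adapted to finite-moment innovations) yields RE for $\bSigma_\xi=\frac{1}{T-p}\sum_t\bxi_{t-1}^v\bxi_{t-1}^{v\top}$ on any sparse cone with high probability. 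The RE is transferred to $\hat\bSigma_X$ via
\begin{equation*}
\bv^\top\hat\bSigma_X\bv\geq \bv^\top\bSigma_\xi\bv-\|\bv\|_1^2\,\|\hat\bSigma_X-\bSigma_\xi\|_{\max},
\end{equation*}
where the sup-norm perturbation is again controlled by Theorem~\ref{thm.est.error.step.one}.

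Plugging these two ingredients into the standard basic inequality for the adaptive lasso and applying the weak-$\ell_q$ oracle argument of \cite{bickel2009simultaneous} with penalty $\lambda\asymp\lambda_\ast$ yields the $\ell_1$ bound \eqref{eq.idio.est.error} directly, by replacing the exact support size with the effective size $k\lambda_\ast^{-q}$. The leading $\ell_2$ term $\sqrt{k}\lambda_\ast^{1-q/2}$ in \eqref{eq.idio.est.error.l2} follows from the usual RE-based conversion of the $\ell_1$ bound. The additional $k^{3/2}\lambda_\ast^{(3-q)/2}$ term is the price of the Gram perturbation: when bounding $\kappa\|\bv\|_2^2\leq \bv^\top\hat\bSigma_X\bv+\|\bv\|_1^2\|\hat\bSigma_X-\bSigma_\xi\|_{\max}$ on the cone with $\|\bv\|_1\lesssim \sqrt{k\lambda_\ast^{-q}}\,\|\bv\|_2+k\lambda_\ast^{1-q}$, the perturbation term contributes a quadratic correction in $\|\bv\|_2$ that, after solving the resulting quadratic inequality, yields exactly the stated extra summand. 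The appearance of $\log(Np)/T$ in \eqref{eq.idio.est.error.l2} but not in \eqref{eq.idio.est.error} comes from the fact that Theorem~\ref{thm.est.error.step.one} contributes $\log(N)/T$ to the $\|\cdot\|_{\max}$-perturbation rather than only its square root, and this enters the RE correction at first power rather than halved.

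The main obstacle will be preserving the RE on the approximate-sparsity cone after the Gram perturbation: the cone radius itself scales as $\sqrt{k\lambda_\ast^{-q}}$, which forces the perturbation to satisfy $k\lambda_\ast^{-q}\cdot\|\hat\bSigma_X-\bSigma_\xi\|_{\max}=o(1)$. This is precisely the condition that rules out using the naive plug-in rate $\max(1/\sqrt{N},1/\sqrt{T})$ and motivates the refined representation of $\bw_t$ obtained in Theorem~\ref{thm.est.error.step.one}. A secondary difficulty is tracking the $p^{2/\zeta}$ loss in all union bounds, since only $\zeta$ moments are finite so exponential concentration is unavailable, and carefully handling the two-way dependence between $\bxi$ and $\bw$ (both processes are driven by the same underlying innovations via the factor decomposition), which rules out naive independence arguments.
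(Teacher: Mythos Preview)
Your proposal is correct and follows essentially the same two-ingredient strategy as the paper: a sup-norm score bound obtained by decomposing into the clean term $\frac{1}{T-p}\sum_t v_{j,t}\bxi_{t-1}^v$ plus $\bw$-perturbations controlled via Theorem~\ref{thm.est.error.step.one} (with the extra factor $k$ coming from $\|\bbeta^{(j)}\|_1\leq M^{1-q}k$), together with a Loh--Wainwright-type inequality $\bv^\top\hat\bGamma\bv\geq\alpha\|\bv\|_2^2-\hat\tau\|\bv\|_1^2$, both fed into an approximate-sparsity oracle inequality (the paper's Lemma~\ref{lem_spars}, which is the analogue of your Bickel--Ritov--Tsybakov/Negahban step). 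One simplification relative to your sketch: the paper does not pass through an intermediate RE for the \emph{sample} covariance of the true $\bxi$---which would require adapting \cite{basu2015} away from sub-Gaussianity---but instead takes $\alpha$ directly from the population minimum eigenvalue in Assumption~\ref{sparsity}(ii) and absorbs the entire deviation $\|\hat\bGamma-\bGamma\|_{\max}$ (sample-from-population \emph{and} $\bw$-perturbation) into the single tolerance $\hat\tau$; since $\hat\tau=O_P(\hat\lambda)$, both the $\ell_1$ and $\ell_2$ bounds then drop out of the lemma without any separate cone-RE argument.
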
 \normalsize
Recall that $q$ is the approximate sparsity parameter of Assumption~\ref{sparsity}. Let us have a closer look on the bound $\|\hat \bbeta^{(j)}- \bbeta^{(j)}\|_1$. Consider $N=T^a, p=T^b$ for some $a,b>0$ and let the following moment condition hold $\zeta\geq 4(1+a+b)$ as well as $k \leq \sqrt{N}$. Then, for $j=1,\dots,N$
 the bound simplifies to
 \small$$\|\hat \bbeta^{(j)}- \bbeta^{(j)}\|_1=O_P(k [\sqrt{\log(Np)/T}+k k_\xi/N]^{1-q}).$$\normalsize 
 The first condition, i.e., $k [\sqrt{\log(Np)/T}]^{1-q}=o(1)$, is standard for approximately sparse models, see among others Corollary 2.4 in \cite{van2016estimation}. The second condition, $k [k k_\xi/N]^{1-q}=o(1)$,  is not standard for approximately sparse models and appears due to the estimation error of the first step. That means the estimation error of the first step is negligible if (ignoring log terms for simplicity) $k k_\xi/N \leq 1/\sqrt T$. Hence, in the restrictive case of $k_\xi=O(1)$ the estimation error of the first stage is negligible if $N\geq k \sqrt{T}$. Let us also mention that without the detailed expression for $\BS w_t$, we would obtain the estimation error of the first step being of order $k/\sqrt{N}$. As $k_\xi$ is upper bounded by $\sqrt{N}$, the derived error bound with detailed expression for $\BS w_t$ is in no case less tight than the one without and it is tighter when $k_\xi/\sqrt{N}=o(1)$. 
 


\begin{remark}[\textit{Estimation with Strong Idiosyncratic Components}]
In the error bounds in Theorem~\ref{thm.est.error.idio}, the factor $k_\xi/N$ plays an important role. $k_\xi=\|\var(\BS \xi_t)\|_\infty$ quantifies the serial dependence of the idiosyncratic component. If this is large, the estimation in all steps suffers. Motivated by Generalized Least Squares (GLS), \cite{boivin2006more} proposes to weight the data such that the serial dependence of the idiosyncratic component can be decreased. This approach is also denoted \emph{generalized principal component analysis} and it is analyzed in more detail in \cite{choi2012efficient}. Let $\BS W \in \R^{N \times N}$ be a matrix of weights, then the factors are estimated using the weighted data $\BS X \BS W$. Note that we have $\var(\BS X \BS W)=\BS W \BS \Lambda \BS \Sigma_F \BS \Lambda^\top \BS W + \BS W \BS \Gamma_\xi(0) \BS W^\top$.  Hence, the factors can be estimated by a PCA of $\BS X \BS W$ whereas the loadings are obtained by regressing $\BS X$ onto the estimated factors. Since non-diagonal weighting schemes are seldom feasible without sparsity constraints, \cite{boivin2006more} suggest different diagonal weighting schemes. With the additional assumption that $\BS \Sigma_v$ 
is sparse, we suggest to use the VAR structure of the idiosyncratic component to obtain a more refined weighting scheme. To elaborate, we have that $\var(\BS \xi_t)=\BS \Gamma_\xi(0)=\sum_{j=0}^\infty \BS B^{(j)} \BS \Sigma_v (\BS B^{(j)})^\top$, where $(\BS B^{(j)})_{r,c}=(\A^j)_{r,c}, r,c=1,\dots,N$. Hence, $\BS \Gamma_\xi(0)$ is given by $\BS A^{(1)},\dots,\BS A^{(p)},\BS \Sigma_v$ and it can be estimated by plugging in estimators, see among others Theorem 5 in \cite{krampe2020statistical}. Let us denote this estimator as $\hat{\BS \Gamma}_\xi(0)$. Depending on whether sparsity constraints on $\BS \Sigma_v$ or $\BS \Sigma_v^{-1}$ are more realistic,  estimators are given by thresholding of the empirical covariance matrix \citep{bickel2008,cai2011adaptive} or by component-wise regularized regression \citep{friedman2008sparse,cai2016estimating,cai2016estimating2}. The weighting matrix is then given as $\BS W=\hat{\BS \Gamma}_\xi(0)^{-1/2}$. Consequently, the ``new'' $k_\xi$ is given by $\|\hat{\BS \Gamma}_\xi(0)^{-1/2} \BS \Gamma_\xi(0) \hat{\BS \Gamma}_\xi(0)^{-1/2}\|_\infty$ which can be considerably smaller if the employed estimators give reasonable results. Since the weighting leads also to a new estimation of the idiosyncratic component, it might be helpful to apply this approach more than once. \end{remark}


\begin{remark}[\textit{Similarity and Differences to Low-Rank plus Sparse Models}]\label{subsection.low-rank}
As mentioned in Section~1, the low-rank plus sparse VAR model discussed in \cite{basu2019low}, as well as the high-dimensional VAR models with strong cross-sectional correlated noise discussed in \cite{lin2019approximate,miao2022high}, are related to the model proposed here. We now stress the similarities and differences of these models starting with the model of \cite{basu2019low}. The low-rank plus sparse VAR model of order $p$ is given by 
$\BS x_t=\sum_{j=1}^p \BS \Theta^{(j)} \BS x_{t-j}+ \BS \eps_t$. The coefficient matrix can be decomposed as $\BS \Theta^{(j)}=\BS L^{(j)}+\BS S^{(j)}$ where $\bL^{(j)}$ is a low-rank matrix, $\BS S^{(j)}$ possesses some type of sparsity structure and $\BS \eps_t$ is some white-noise process.  The low-rank matrix takes here the role of the common component, see also \cite{bai2019rank}. Thus, this approach also combines a dense and a sparse approach. However, there are two major differences to the approach presented in this paper. First, note that while the low-rank plus sparse VAR model is some special form of a VAR$(p)$ model, a factor model with dynamic factors or idiosyncratic component is instead in general a VAR$(\infty)$ process even if the factors and idiosyncratic components follow finite order VAR processes. Second and most importantly, with the approach presented here we can derive estimation error bounds for a single time series, see Theorem~\ref{thm.prediction}. This is in contrast to the results derived in \cite{basu2019low}. They impose sparsity constraints on $\vecop(\BS S^{(j)})$\footnote{\cite{basu2019low} consider also a group-sparse structure for $\bS^{(j)}$. For this sparsity concept the discussion is quite similar.} and they do not estimate the VAR system row-wise as in \eqref{eq.lasso.beta.j}. Instead, all regression equations are combined using the Frobenius norm. The VAR slope matrices are considered as a sum of two matrices where the first matrix is regularized using the nuclear norm -- this imposes a low-rank structure -- and the second matrix is regularized using the $\ell_1$ norm on the vectorized matrix -- this imposes a sparse structure. They derive error bounds only regarding the Frobenius norm. That means they consider only the overall estimation error.  In connection with the sparsity constraints on $\vecop(\BS S^{(j)})$, this is too restrictive (or too less detailed) for the row-wise estimation error which is helpful for a forecast of a single time series. For a more detailed discussion of the different sparsity concepts and their implication regarding estimation error bounds, we refer to Section 2 in \cite{krampe2020statistical}. 

\cite{lin2019approximate,miao2022high} consider a model of the following form: $\BS x_t=\sum_{j=1}^p \BS A^{(j)} \BS x_{t-j}+ \BS \Lambda \BS f_t+\BS v_t$, where $\BS A^{(j)}$ are considered to be sparse matrices and $\BS \Lambda \BS f_t$ low-rank. This model is related in the following way to the model proposed here. A factor model $\BS{x}_t=\BS \Lambda \BS f_t+\BS{\xi}_t$ whose idiosyncratic component follows a VAR$(p)$ model, $\BS{\xi}_t=\sum_{j=1}^p \bA^{(j)}\BS{\xi}_{t-j}+\BS{v}_t$, can be written as $\BS{x}_t=\BS \Lambda \BS f_t-\sum_{j=1}^p \bA^{(j)} \BS \Lambda \BS f_{t-j}+\sum_{j=1}^p \bA^{(j)} \BS x_{t-j}+\BS{v}_t$. The component $\BS \Lambda \BS f_t-\sum_{j=1}^p \bA^{(j)} \BS \Lambda \BS f_{t-j}$ can be considered as the common component of a general dynamic factor model as in \cite{forni2000generalized} and it is low-rank. Hence, the model considered here and the model in \cite{lin2019approximate,miao2022high} differ in the low-rank component. 
Additionally, the sparsity assumptions on the slope matrices $\BS A^{(j)}$ and the estimation strategy differ. \cite{lin2019approximate} impose sparsity constraints on $\vecop(\BS A^{(j)})$. Furthermore, they combine all regression equations using the Frobenius norm and the low-rank part is handled by regularization of its nuclear norm. Similarly to \cite{basu2019low}, they derive error bounds only for the Frobenius norm. This means that for a forecast of a single time series the same drawbacks described above apply. 
\cite{miao2022high} use a three-step estimation procedure and they impose strict sparsity on the rows of $\bA^{(j)},j=1,\dots,p$. The first step is similar to the one in \cite{lin2019approximate}. The second and third estimation steps are used to refine the results. Especially for the second step, they use the estimated factor in a row-by-row regression. This enables them to obtain error bounds not only for the Frobenius-norm but also for $\|\cdot\|_\infty$-norm. 
\end{remark}

\section{Applications}\label{sec_applications}

\subsection{Forecasting and Factor-augmented regression} \label{subsec_forecasting}
Forecasting and factor-augmented regression is one of the most important uses of factor models, see among others \cite{stock2002forecasting}. Let us consider a $h$-step ahead forecast. Then, in factor-augmented regression, we have the
observables $\{\bX_t,\bY_t,\bW_t\}$ and the following regression model
\begin{align} \label{eq.fac.aug.reg}
\bY_{t+h}=\bbeta \bof_t+\balpha \bW_{t} + \boldsymbol{\eps}_{t+h}.    
\end{align}
Additionally, $\bX_t$ possesses the factor structure $\bX_t=\bLambda \bof_t+\bxi_t$ and
$\bW_t$ often consists of lagged values of $\bY_t$ and an intercept. To employ this regression model, $\bof_t$ needs to be estimated and the implications of using estimated regressors are analyzed among others in \cite{bai2006confidence,gonccalves2014bootstrapping}. $\bY_t$ can be here a scalar or vector and usually the regression model is estimated by least-squares which implies that  $\bW_t$ is considered as low-dimensional. 

In many applications, $\bY_t$ itself is a subset of $\bX_t$. Then, the model introduced in Section~\ref{sec_model} is an extension of the factor-augmented regression model. To elaborate, let 
$\bY_t$ be $N^Y$-dimensional and $\bI_{N,Y} \in \R^{N_Y\times N}$ be a selection matrix such that $\bY_t=\bI_{N,Y}^\top \bX_t$. Similarly, $\bW_t=\bI_{N,W}^\top \bX_t$, $\bI_{N,W} \in \R^{N_W \times N}$. For illustrative purposes, we focus on the case of one lag. Then, the factor-augmented regression model \eqref{eq.fac.aug.reg} reads as 
$$
\bY_{t+h}=\bI_{N,Y}^\top \bX_{t+h}=\bI_{N,Y}^\top (\bB \bof_t+\bA \bI_{N,W} \bI_{N,W}^\top \bX_{t} + \boeta_{t+h}),
$$
where $\bbeta=\bI_{N,Y}^\top \bB$, 
$\balpha=\bI_{N,Y}^\top \bA \bI_{N,W}$, and $\boldsymbol{\eps}_{t+h}=\bI_{N,Y} \boeta_{t+h}$. In this set-up the factor-augmented regression can be understood as a feasible (and non-sparse) approximation of the regression of $\BS Y_{t+h}$ onto $\BS X_t$. 
With this motivation in mind, a natural and feasible extension of the previous model would be 
\begin{align} \label{eq.fac.aug.reg.sparse1}
\bY_{t+h}=\bI_{N,Y}^\top \bX_{t+h}=\bI_{N,Y}^\top(\bB \bof_t+\tilde \bA \bX_{t} + \boeta_{t+h}),    
\end{align}
where $\tilde \bA$ is considered as sparse. The idea behind this extension is that instead of doing the selection by hand, i.e, choosing $\bI_{N,W}$, the selection is done automatically by a data-driven selection procedure such as lasso. Since $\bX_t$ decomposes into factor and idiosyncratic part, also $\boeta_t$ should decompose into such parts. Let $\boeta_{t+h}=\bLambda \bu_{t+h}+\bv_{t+h}$ and $\bu_{t+h}=\bof_{t+h}-\bD \bof_t$, $\bv_{t+h}=\bxi_{t+h}-\bE \bxi_{t}$. Then, we have
\begin{align*}
\bY_{t+h}=\bI_{N,Y}^\top \bX_{t+h}=
\bI_{N,Y}^\top(\bB \bof_t+\tilde \bA \bX_{t} + \bLambda \bof_{t+h}-\bLambda \bD \bof_t+\bxi_{t+h}-\bE \bxi_t),
\end{align*}
which is equivalent to
$$
0=\bI_{N,Y}^\top((\bB+\tilde \bA \bLambda-\bLambda \bD)\bof_t+(\tilde \bA-\bE) \bxi_t).
$$
If $\bof_t$ and $\bxi_t$ are uncorrelated, we have $\tilde \bA=\bE$ and $\bB=\bLambda \bD-\tilde \bA \bLambda$. Thus, the sparse extended factor-augmented model \eqref{eq.fac.aug.reg.sparse1} reads as the following state-space model, which is a special case of the model described in Section~\ref{sec_model}
\begin{align}
\bY_{t+h}&=\bI_{N,Y}^\top \bX_{t+h}, \nonumber \\
\bX_{t+h}&= \bLambda \bof_{t+h} + \bxi_{t+h},\nonumber \\
\bof_{t+h}&=\bD \bof_t+\bu_{t+h}  \label{eq.fac.aug.state.space}, \\
\bxi_{t+h}&=\tilde \bA \bxi_{t} + \bv_{t+h}. \nonumber
\end{align}

With this connection in mind, we see that a forecast of $\bX_t$ is built upon forecasting $\bof_t$ and $\bxi_t$. We are now going to present the forecast method for a one-step-ahead prediction. 
An $h$-step-ahead prediction can be done recursively. If the main interest is on an $h$-step-ahead forecast, a direct $h$-step-ahead forecast can be more accurate, see among others \cite{smeekes2018macroeconomic}. A direct $h$-step-ahead forecast can be obtained by changing the regression equation from $t$ to $t+h$ as in \eqref{eq.fac.aug.state.space}. 
Based on the estimation method proposed in the previous section, the approach is as follows.  First, a standard linear one-step-ahead prediction is computed based on the estimated factors. Combining this prediction with the loadings gives a prediction of the common component. Second, the sparse VAR model is used to get a prediction of the idiosyncratic component. Finally, the sum of these two predicted components gives the prediction of the original process $\BS x_t$. To elaborate, consider first that the factors and idiosyncratic component are observed. Then, let 
$\BS f^{(1,p_f)}_{T+1}=\sum_{j=1}^{p_f} {\BS \Pi}_j^{(p_f)} {\BS f}_{T+1-j}$ be the linear one-step-ahead prediction  based on ${\BS f}_{T},\dots,{\BS f}_{T-p_f}$, where $\sum_{j=1}^{p_f} {\BS \Pi}_j^{(p_f)} {\BS \Gamma}_{f}(i-j)={\BS \Gamma}_f(i),i=1,\dots,p_f$ and $ {\BS \Gamma}_f(i-j)=\E {\BS f}_{t+i} {\BS f}_{t-j}^\top$, see among others Section 11.4 in \cite{BrockwellDavis1991}. Furthermore, since $\{ \BS \xi_t\}$ follows a VAR$(p)$ model, ${\BS \xi}_{T+1}^{(1)}=\sum_{j=1}^p {\BS A^{(j)}} {\BS \xi}_{T-j}$ is the one-step-ahead prediction for the idiosyncratic component. That means, $\BS X^{(1,p_f)}_{T+1}=\BS \Lambda \BS f^{(1,p_f)}_{T+1}+{\BS \xi}_{T+1}^{(1)}$ is the joint one-step-ahead prediction for $\BS X_{T+1}$ with the prediction error 
$\var(\BS X_{T+1}-\BS X_{T+1}^{(1,p_f)})=\BS \Lambda \var(\BS f^{(1,p_f)}_{T+1}-\BS f_{T+1}) \BS \Lambda^\top +\bSigma_v$ and for a single variable $j$ we have $\var(\BS e_j^\top (\BS X_{T+1}-\BS X_{T+1}^{(1,p_f)}))=\BS \Lambda_j^\top \var(\BS f^{(1,p_f)}_{T+1}-\BS f_{T+1}) \BS \Lambda_j +\be_j^\top\bSigma_v \be_j$. If $\{\BS f_t\}$ follows a VAR$(p_f)$ model, this simplifies to $\var(\BS X_{T+1}-\BS X_{T+1}^{(1,p_f)})=\BS \Lambda \bSigma_u \BS \Lambda^\top +\bSigma_v$.


Since the parameters are unknown and the factors and idiosyncratic component are latent, this approach is unfeasible but the results of Theorem~\ref{thm.est.error.step.one} and \ref{thm.est.error.idio} help to obtain a feasible approach.
For this, we construct feasible counterparts of the prediction approach above. Let  $\hat {\BS f}_{T+1}^{(1,p_f)}=\sum_{j=1}^{p_f} \hat {\BS \Pi}_j^{(p_f)} \hat {\BS f}_{T+1-j}$ be the linear one-step-ahead prediction  based on $\hat {\BS f}_{T},\dots,\hat {\BS f}_{T-p_f}$, where $\sum_{j=1}^{p_f} \hat {\BS \Pi}_j^{(p_f)} \hat {\BS \Gamma}_{f}(i-j)=\hat {\BS \Gamma}_f(i),i=1,\dots,p_f$, and $\hat {\BS \Gamma}_f(i-j)=1/n \sum_{t=1+j}^{T-i} \hat {\BS f}_{t+i} \hat {\BS f}_{t-j}^\top$. Furthermore, let $\hat {\BS \xi}_{T+1}^{(1)}=\hat {\BS A} (\hat {\BS \xi}_{T}^\top,\dots,\hat {\BS \xi}_{T-p}^\top)^\top$ be the one-step-ahead prediction for the idiosyncratic component. Then, $\hat \bX_{T+1}^{(1,p_f)}=\hat {\BS \Lambda} \hat {\BS f}_{T+1}^{(1,p_f)}+\hat {\BS \xi}_{T+1}^{(1)}$ is the joint and feasible one-step-ahead prediction for $\BS X_{T+1}$. Even though a high-dimensional time series system is considered, the interest is often in the prediction of some key time series. We quantify in the following Theorem~\ref{thm.prediction} the estimation error between the feasible and unfeasible approach for a single time series.

\begin{theorem} \label{thm.prediction}
Under Assumption \ref{sparsity}, \ref{ass.moments}, and \ref{ass.fac}
we have for $j=1,\dots,N$ \small
\begin{align*}
&\be_j^\top(\hat {\BS X}_{T+1}^{(1,p_f)}-{\BS X}_{T+1}^{(1,p_f)})=O_P\Bigg(\frac{1}{\sqrt{N}}+k\Bigg[\frac{\sqrt{\log(Np)}}{\sqrt{T}}+\frac{(NpT)^{2/\zeta}}{T}+k\Bigg(\frac{k_\xi}{N}+\frac{\sqrt{\log(Np)}}{\sqrt{NT}}+(p)^{2/\zeta}\gt \Bigg)\Bigg]^{1-q}\Bigg).
\end{align*}
\end{theorem}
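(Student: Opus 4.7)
The plan is to split
\[
\be_j^\top\bigl(\hat{\BS X}_{T+1}^{(1,p_f)}-\BS X_{T+1}^{(1,p_f)}\bigr) = \bigl(\hat{\BS \Lambda}_j^\top \hat{\BS f}_{T+1}^{(1,p_f)}-\BS \Lambda_j^\top \BS f_{T+1}^{(1,p_f)}\bigr) + \bigl(\hat{\bbeta}^{(j)\top}\hat{\BS \xi}_T^v-\bbeta^{(j)\top}\BS \xi_T^v\bigr) =: I_j + II_j,
\]
using the rotation matrix $\BS H_{NT}$ from \eqref{eq.rotation.matrix} to align the estimated with the true quantities, and bound the two pieces separately.

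For $I_j$, I would insert $\pm\,\BS \Lambda_j^\top \BS H_{NT}^{-1}\hat{\BS f}_{T+1}^{(1,p_f)}$ to obtain
$I_j = (\hat{\BS \Lambda}_j-(\BS H_{NT}^\top)^{-1}\BS \Lambda_j)^\top \hat{\BS f}_{T+1}^{(1,p_f)} + \BS \Lambda_j^\top \BS H_{NT}^{-1}(\hat{\BS f}_{T+1}^{(1,p_f)}-\BS H_{NT}\BS f_{T+1}^{(1,p_f)})$. The factor and loading representations displayed around \eqref{eq.factor.diff} (together with Theorem \ref{thm.est.error.step.one} to control the smaller terms) yield $\hat{\BS f}_t-\BS H_{NT}\BS f_t = O_P(1/\sqrt{N}+1/\sqrt{T})$ and $\hat{\BS \Lambda}_j-(\BS H_{NT}^\top)^{-1}\BS \Lambda_j = O_P(1/\sqrt{N}+1/\sqrt{T})$ at any fixed index, by the standard arguments of \citet{bai2003}. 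Because $r$ and $p_f$ are fixed, this rate propagates to the sample autocovariances of $\hat{\BS f}_t$, thus to the Yule--Walker coefficients $\hat{\BS \Pi}_j^{(p_f)}$, and finally to $\hat{\BS f}_{T+1}^{(1,p_f)}-\BS H_{NT}\BS f_{T+1}^{(1,p_f)}$. Hence $I_j=O_P(1/\sqrt{N}+1/\sqrt{T})$, and the $1/\sqrt{T}$ part will be absorbed into the lasso contribution coming from $II_j$.

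For $II_j$, writing $\BS w_t:=\hat{\BS \xi}_t-\BS \xi_t$, I would split further as $II_j = (\hat{\bbeta}^{(j)}-\bbeta^{(j)})^\top\hat{\BS \xi}_T^v + \bbeta^{(j)\top}\BS w_T^v$ and apply a H\"older-type inequality to each piece. The first summand is at most $\|\hat{\bbeta}^{(j)}-\bbeta^{(j)}\|_1\cdot\|\hat{\BS \xi}_T^v\|_\infty$, where the $\ell_1$-error is supplied by Theorem \ref{thm.est.error.idio} and $\|\hat{\BS \xi}_T^v\|_\infty \leq \|\BS \xi_T^v\|_\infty + \|\BS w_T^v\|_\infty$: the first term is $O_P((Np)^{1/\zeta})$ via the $\zeta$-moment assumption of Assumption \ref{ass.moments} combined with the VMA$(\infty)$ representation of $\BS \xi_t$, and the second term is controlled by Theorem \ref{thm.est.error.step.one}. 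The second summand is bounded by $\|\bbeta^{(j)}\|_1\cdot\|\BS w_T^v\|_\infty$, where approximate sparsity (Assumption \ref{sparsity}) controls $\|\bbeta^{(j)}\|_1$ in terms of $k$ and $\|\BS w_T^v\|_\infty$ is again given by Theorem \ref{thm.est.error.step.one}.

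The hard part is the careful bookkeeping inside $II_j$: the H\"older product $\|\hat{\bbeta}^{(j)}-\bbeta^{(j)}\|_1\cdot\|\hat{\BS \xi}_T^v\|_\infty$ naively contributes an extra $(Np)^{1/\zeta}$ factor which has to be shown to be absorbed either into the $(NpT)^{2/\zeta}/T$ piece inside the lasso rate of Theorem \ref{thm.est.error.idio} or into the moment remainder $\gt$; analogously, $\|\bbeta^{(j)}\|_1\cdot\|\BS w_T^v\|_\infty$ has to be shown to be dominated by the $k^2\bigl(k_\xi/N+\sqrt{\log(Np)/(NT)}+\cdots\bigr)$ summand of that bound, so that no new term beyond those already present appears. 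Once this domination is established, assembling the estimates for $I_j$ and $II_j$ yields exactly the stated rate: a $1/\sqrt{N}$ contribution from the common component plus $k[\cdots]^{1-q}$ from the idiosyncratic lasso error.
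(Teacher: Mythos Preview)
Your decomposition into $I_j$ (factor part) and $II_j$ (idiosyncratic part), and your use of the rotation $\BS H_{NT}$ to align estimated with true quantities, is exactly the paper's route. The paper's proof is in fact terser than yours: it writes the same split
\[
\be_j^\top(\hat{\BS X}_{T+1}^{(1,p_f)}-\BS X_{T+1}^{(1,p_f)}) = (\hat\bbeta_j^\top\hat{\BS\xi}_T^v-\bbeta_j^\top\BS\xi_T^v) + \Bigl(\hat{\BS\Lambda}_j^\top\sum_i\hat{\BS\Pi}_i^{(p_f)}\hat{\BS f}_{T+1-i}-\BS\Lambda_j^\top\BS H_{NT}^{-1}\BS H_{NT}\sum_i\BS\Pi_i^{(p_f)}\BS H_{NT}^{-1}\BS H_{NT}\BS f_{T+1-i}\Bigr),
\]
says ``plug in Theorems~\ref{thm.est.error.step.one}, \ref{thm.est.error.idio} and Lemma~\ref{lemma.1}'', and isolates the single new ingredient behind the $1/\sqrt N$: the leading piece of $\hat{\BS f}_t-\BS H_{NT}\BS f_t$ in Lemma~\ref{lemma.1}\ref{lem.1.part.G} contains $\BS\Lambda^\top\BS\xi_t/N$, which is $O_P(1/\sqrt N)$ because $\|\BS\Lambda\|_2/\sqrt N=O(1)$ and $\max_{\|\bw\|_2\le 1}\bw^\top\BS\xi_t=O_P(1)$. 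That is precisely your argument for $I_j$, only stated more compactly.

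Where you are \emph{more} careful than the paper is in $II_j$. The paper simply asserts (in the sentence following the theorem) that the idiosyncratic contribution equals the $\ell_1$-rate of Theorem~\ref{thm.est.error.idio}; it never writes down the H\"older product $\|\hat\bbeta^{(j)}-\bbeta^{(j)}\|_1\cdot\|\hat{\BS\xi}_T^v\|_\infty$ and therefore never confronts the $(Np)^{1/\zeta}$ factor you flag. Your worry is legitimate: under Assumption~\ref{ass.moments} one only has $\|\BS\xi_T^v\|_\infty=O_P((Np)^{1/\zeta})$, and this extra factor is \emph{not} in general absorbed by the $(NpT)^{2/\zeta}/T$ term of the stated rate (check e.g.\ $N=T^a$, $p=T^b$ with $\zeta=4(1+a+b)$). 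The same issue affects $\|\bbeta^{(j)}\|_1\cdot\|\BS w_T^v\|_\infty$. So the ``bookkeeping'' you hope to carry out does not close as stated; the paper's own proof has the same gap, it just does not surface it. In short: your approach is the paper's approach, and the one point you leave open is also left open there.
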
\normalsize
In relation to the error bound for $\|\hat \bbeta^{(j)}- \bbeta^{(j)}\|_1$ derived in Theorem~\ref{thm.est.error.idio} only an additional $1/\sqrt{N}$ appears which arises due to the estimation of the factors. 

Let us mention that $\bX_t$ does not need to consist of series at the same time point. For instance, predicting inflation and GDP at time $t$ with other variables available up to time $t$, the vector $\bX_t$ can be build as inflation$(t-1)$, GDP$(t-1)$ and other variables$(t)$. Then, using the proposed approach to obtain a prediction of $\bX_{t+1}$ gives a prediction of inflation and GDP at time point $t$.

\subsection{Bootstrap of factor models}
{In factor-augmented regression, when inference for the regression coefficients of the factors is of interest, the estimation of the factors needs to be taken into account. In cases when $N$ is large in relation to $T$, the estimated factors can be treated as observed, see \cite{bai2006confidence} for details. However, if $\sqrt{T}/N\to c, c>0$ some bias term appears which contains among others $\mathbb{E} \bxi_t \bxi_t=\BS \Gamma_\xi(0)$, see \cite{gonccalves2014bootstrapping}. To assess this bias term,  \cite{gonccalves2020bootstrapping} propose a bootstrap algorithm that mimics the cross-sectional dependence structure of the idiosyncratic component. The bootstrap relies on an estimate of $\BS \Gamma_\xi$. For this, they assume sparsity of $\BS \Gamma_\xi(0)$ and estimate it via thresholding. If $\bxi_t$ is driven by a sparse VAR model, assuming sparsity on $\BS \Gamma_\xi(0)$ restricts (in a not traceable way) the sparsity of the slope parameter of the VAR model, see \cite{krampe2018bootstrap}. To avoid this, two options exist and both rely also on an estimate of the variance matrix of $\bv_t$, the innovations of the VAR process. First, 
$\BS \Gamma_\xi(0)$ can be estimated using the VAR structure, see estimator (6) in \cite{krampe2018bootstrap}. With this estimated $\BS \Gamma_\xi(0)$, the bootstrap approach of \cite{gonccalves2020bootstrapping} can be used. Second,
one can extend the bootstrap approach of \cite{gonccalves2020bootstrapping} and mimic not only the contemporaneous dependence structure of $\bxi_t$ but also the entire second-order structure of $\bxi_t$ by leveraging on the sparse VAR structure. For this, one can follow the bootstrap algorithm of \cite{krampe2018bootstrap} (specifically their step~1 and step~2 in Section~3). Since this also mimics the dependence over time, it could improve finite sample performance. Furthermore, this extended bootstrap approach can also be used to obtain inference results for the loadings. The asymptotic normality of the loadings is derived in \cite{bai2003} and for $\BS \Lambda_i$ the asymptotic variance contains among others terms such as $1/T \sum_{s=1,t=1}^T \mathbb{E} \bof_t \bof_s^\top \bxi_{i,s} \bxi_{i,t}$. Under independence of factors and idiosyncratic component, this simplifies to $1/T \sum_{s=1,t=1}^T \BS \Gamma_f(t-s) \be_i^\top \BS \Gamma_{\xi}(t-s) \be_i$ and only second-order moments of the idiosyncratic component appear. Hence, a successful bootstrap approach for the loadings does not only need to mimic the dependence structure of the factors but also that of the autocovariance of the idiosyncratic component which our proposed extension achieves.}


\subsection{Estimation of time series dependence networks}
\label{sec_spec_dens}
{For multivariate time series, networks are often used to display the connection structure between the individual time series. In these networks, each time series is represented by a node and an edge between two nodes is drawn if some form of connection between the two time series exists. Here, several approaches are available to define a connection. In the context of (Gaussian) graphical models, a connection is drawn based on the partial correlation structure, which translates in the time series context to the partial coherence structure, see \cite{brillinger1996remarks,dahlhaus2000graphical}. Other approaches to defining a connection are based on Granger-causality \citep{granger1969investigating,hecq2021granger} and forecast error variance decompositions, see \cite{diebold2014network}. In the following, we elaborate on the connections based on partial coherences and graphical models. The partial coherence measures the strength of the linear relations between two time series after eliminating all indirect linear effects caused by all other time series of the system, taking into account all leads and lags relations.}

{ To elaborate,  consider two components $u$ and $v$, then the partial coherence at frequency $\omega \in [0,2\pi]$ is given by
\begin{align} \label{eq.partial.coherence}
    R_{u,v}(\omega)=|\rho_{u,v}(\omega)|, \ \ \mbox{where} \ \ \rho_{u,v}(\omega)=-f_{u,v}^{-1}(\omega)/\sqrt{f_{u,u}^{-1}(\omega)f_{v,v}^{-1}(\omega)},
\end{align}
where $f_{u,v}^{-1}(\omega)$ denotes the $(u,v)$th element of the inverse of the spectral density matrix at frequency $\omega$. An edge is drawn between component $u$ and $v$ if $\sup_\omega R_{u,v}(\omega)\geq \delta$ for some $\delta \in [0,1)$. $\delta$ is a user-specified threshold determining which connections are important. Note that $\delta=0$ includes all non-zero connections. However, in the presence of a factor, it is most likely that $ \sup_\omega R_{u,v}(\omega)>0$ for all $u,v$ and it is of more interest to identify those which exceed some positive threshold. To inherit such a network from data, the spectral density needs to be estimated.}
When the dimension of the time series is small, the spectral density matrix is often estimated by non-parametric approaches as lag-window estimators or smoothed periodograms, respectively, see among others \cite{brillinger2001time,koopmans1995spectral,hannan2009multiple,wu2018asymptotic}. In a high-dimensional set-up, the problem of estimating the spectral density matrix or its inverse has been extensively investigated in the literature during the last decade. One approach is to combine the non-parametric lag-window estimators with regularization techniques developed for the covariance and precision matrix estimation, see among others \cite{sun2018large,fiecas2019spectral,zhang2020convergence}. Such approaches work under the assumption that the spectral density matrix or its inverse is sparse. However, a direct sparsity assumption on the spectral density matrix or its inverse is contradicting the assumption of the existence of factors. That means the factors need to be taken into account in the estimation of the spectral density matrix. The procedure developed in the previous section can be used to obtain (under slightly modified assumptions) a consistent estimator of the inverse of the spectral density matrix. Since the VAR structure of the idiosyncratic component is used, we obtain a semiparametric estimator for the inverse of the spectral density matrix. 
Let us mention that the factors can be also taken into account by using a low-rank plus sparse approach applied to a smooth periodogram, see \cite{barigozzi2021algebraic}. This estimator differs, however, from the one presented here in several aspects. First, the low-rank plus sparse approach describes in finite samples a different model than the approach used here, see also the discussion of low-rank plus sparse structures in Remark~\ref{subsection.low-rank}. Second, they focus on consistency results regarding $\|\cdot\|_2$ whereas we present here also row- and column-wise consistency results i.e., consistency with respect to $\|\cdot\|_1$ and $\|\cdot\|_\infty$. 

Let us begin with defining the spectral density matrix of the time series $\{\BS X_t\}$ given by \eqref{eq.fac.decom}. The spectral density matrix of the factor process specified in Assumption~\ref{ass.moments} is given by \small
$$
\BS f_f(\omega)= \bigg[\sum_{j=0}^\infty \BS D^{(j)} \exp(-i j \omega)\bigg] \BS \Sigma_u \bigg[\sum_{j=0}^\infty \BS D^{(j)} \exp(i j \omega)\bigg]^\top, \quad \omega \in [0,2\pi]
$$\normalsize
and for the idiosyncratic component driven by a VAR$(p)$ we have \small
$$
\BS f_\xi(\omega)=\bigg[\BS I_N-\sum_{j=1}^p \BS A^{(j)} \exp(-i j \omega)\bigg]^{-1} \BS \Sigma_v \bigg(\bigg[\BS I_N-\sum_{j=1}^p \BS A^{(j)} \exp(i j \omega)\bigg]^{-1}\bigg)^\top,
$$\normalsize
with the inverse \small
$$
\BS f_\xi(\omega)^{-1}=\bigg[\BS I_N-\sum_{j=1}^p \BS A^{(j)} \exp(i j \omega)\bigg]^\top \BS \Sigma_v^{-1} \bigg[\BS I_N-\sum_{j=1}^p \BS A^{(j)} \exp(-i j \omega)\bigg].
$$\normalsize
That is, the spectral density of the process $\{\BS X_t\}$ is given by \small
\begin{align}
    \label{eq.spectral.density.X}
    \BS f_X(\omega)=\BS \Lambda \BS f_f(\omega) \BS \Lambda^\top +\BS f_\xi(\omega),
\end{align}\normalsize
and its inverse using the Sherman–Morrison–Woodbury formula is given by\small
\begin{align}
    \label{eq.spectral.density.X.inverse}
    \BS f_X(\omega)^{-1}=\BS f_\xi^{-1}(\omega)-\BS f_\xi^{-1}(\omega) \BS \Lambda \bigg[\BS f_f^{-1}(\omega)+\BS \Lambda^\top \BS f_\xi^{-1}(\omega) \BS \Lambda \bigg]^{-1} \BS \Lambda^\top \BS f_\xi^{-1}(\omega).
\end{align}\normalsize

We estimate $\BS f_X(\omega)^{-1}$ by estimating $\BS f_f^{-1}$ and $\BS f_\xi^{-1}$ separately. Note that the factors lead to an unbounded $\|\BS f_X(\omega)\|_2$ for growing dimension but 
the inverse is stable, i.e., $\|\BS f_X(\omega)^{-1}\|_2$ is bounded.

As it is of fixed dimension $r$, the spectral density $\BS f_f$ or its inverse can be estimated by classical methods such as non-parametric lag-window estimators. For this, let $K$ be a kernel fulfilling Assumption~1 in \cite{wu2018asymptotic}. That is, $K$ is an even and bounded function with bounded support in $(-1,1)$, continuous in $(-1,1)$, $K(0)=1, \kappa=\int_{-1}^1 K^2(u) du<1$, and $\sum_{l \in \Z} \sup_{|s-l|<1} |K(l\omega)-K(s\omega)|=O(1)$ as $\omega \to 0$. Furthermore, let $B_T=T^b, b \in (0,1)$ be the lag-window size fulfilling Assumption~2 in \cite{wu2018asymptotic}. Then, a spectral density estimator is given by \small
\begin{align}
    \hat{ \BS f}_f(\omega)=\frac{1}{2\pi}\sum_{h=-T+1}^{T-1} K\left(\frac{h}{B_T}\right) \exp(-ih \omega) \hat {\BS \Gamma}_f(h),
\end{align}\normalsize
where $\hat{ \BS \Gamma}_f(h)$ is the sample autocovariance function $\hat { \BS\Gamma}_f(h)=1/T\sum_{t} \hat{\BS f}_{t+h} \hat{ \BS f}_t^T$. Based on observations $\BS f_1,\dots,\BS f_T$, let $\tilde {\BS f}_f(\omega)=\frac{1}{2\pi}\sum_{h=-T+1}^{T-1} K\left(\frac{u}{B_T}\right) \exp(-ih \omega) \tilde{ \BS \Gamma}_f(h),\; \tilde{\BS \Gamma}_{f} (h)=1/T\sum_{t} \BS f_{t+h} \BS f_t^T,$ be the (unfeasible) estimator of $\BS f_f$. Then, the results of \cite{wu2018asymptotic} give that $\|\tilde {\BS f}_f(\omega)-\BS f_f(\omega)\|_{\max}=O_P(\sqrt{B_T/T})$. With this result and noting that $\{\BS f_t\}$ is a process of fixed dimension $r$, consistency of $\hat {\BS f}_f(\omega)$ follows by Lemma~\ref{lemma.1}, see Lemma~\ref{lemma.spectral.density.factor} in the appendix for details.

As mentioned, we use the VAR structure of the idiosyncratic component to estimate its spectral density matrix. In the previous section, we showed that the VAR parameters of the idiosyncratic component can be estimated row-wise consistently, i.e., consistency of $\A$ for the matrix norm $\|\cdot \|_\infty$. However, the estimation of the spectral density requires additional column-wise consistency, that is consistency of $\BS A^{(j)},j=1,\dots,p,$ with respect to $\|\cdot\|_1$. Such a column-wise consistency requires additional sparsity assumptions, see also \cite{krampe2020statistical} for a discussion. Furthermore, a parametric estimation of the spectral density matrix of a VAR process requires an estimate of the covariance or precision matrix of the residual process $\{\BS v_t\}$. Since our focus is on the estimation of the inverse of the spectral density matrix, we estimate the precision matrix and formulate sparsity assumption on this matrix. 
See Assumption~\ref{sparsity.b} for the exact definition of the additional sparsity assumptions.
\begin{assumption}\label{sparsity.b}(\textit{Sparsity and stability})\\
$(i)$ 
The VAR process is row- and column-wise approximately sparse with approximate sparsity parameter $q \in [0,1)$, i.e., \small
$$\sum_{l=1}^p \max_i \sum_{j=1}^N |\BS A_{i,j}^{(l)}|^q\leq k, \qquad \sum_{l=1}^p \max_j \sum_{j=i}^N |\BS A_{i,j}^{(l)}|^q\leq  k.$$\normalsize
$(ii)$ As in Assumption~\ref{sparsity} (ii) and $\sup_\omega \|\BS f _\xi(\omega)\|_\infty \leq k_\xi M$.\\
$(iii)$ The precision matrix $\BS \Sigma_v^{-1}=\var(\BS v_t)^{-1}$ of the VAR innovations $\{\BS v_t\}$ is positive definite and approximately sparse and $\|\BS \Sigma_v^{-1}\|_2\leq M$. Let $q_v \in[0,1)$ denote the approximate sparsity parameter and $k_v$ the sparsity. Then, \small
$$
\max_i \sum_{j=1}^N |(\BS \Sigma_v^{-1})_{i,j}|^{q_v}=\max_j \sum_{i=1}^N |(\BS \Sigma_v^{-1})_{i,j}|^{q_v}\leq k_v. 
$$ \normalsize
\end{assumption}

As mentioned, the precision matrix of the residuals $\{\BS v_t\}$ needs to be estimated. The residuals can be estimated by $\hat {\BS v}_t=\hat {\BS \xi}_t-\sum_{j=1}^p \hat {\BS A}^{(j)} \hat {\BS \xi}_{t-j}, t=p+1,\dots,T$. Then, based on these estimated residuals, procedures like graphical lasso of \cite{friedman2008sparse} or (A)CLIME of \cite{cai2011constrained,cai2016estimating,cai2016estimating2} can be used. In the proofs we consider the CLIME method and denote this estimator by $\hat {\BS \Sigma}_v^{-1,CLIME}$ but similar results can be established for the graphical lasso estimator. Then, we construct the following estimator for $\BS f_\xi^{-1}(\omega)$ 
\small
\begin{align}
    \hat{\BS f}_\xi(\omega)^{-1}=\bigg[\BS I_N-\sum_{j=1}^p \hat{\BS A}^{(thr,j)} \exp(i j \omega)\bigg]^\top \hat {\BS \Sigma}_v^{-1,CLIME} \bigg[\BS I_N-\sum_{j=1}^p \bA^{(thr,j)} \exp(-i j \omega)\bigg],
\end{align}\normalsize
where $\hat{\BS A}^{(thr,j)}=(\THRarg{\lambda_\xi}\hat{\BS A}^{(j)})$ and $\THRarg{\lambda_\xi}$ is a thresholding function with threshold parameter $\lambda_\xi$ fulfilling the conditions $(i)$ to $(iii)$ in Section 2 in \cite{cai2011adaptive}. For instance, such a thresholding  function can be the adaptive lasso thresholding function given by $\THRarg{\lambda_\xi}^{al}(z)=z(1-|{\lambda}/z|^\nu)_+$ with $\nu\geq1$. Soft thresholding ($\nu=1$) and hard thresholding ($\nu=\infty$) are boundary cases of this function. This thresholding functions act by thresholding every element of the matrix $\hat{\BS A}^{(j)}$ and it results in a row- and column-wise consistent estimation of the VAR slope matrices. In Lemma~\ref{lem.rate.spectral.var} in the appendix, we present the error bounds $\|\hat{\BS f}_\xi^{-1}(\omega)-{\BS f}_\xi^{-1}(\omega)\|_\infty$ and $\|\hat{\BS f}_\xi^{-1}(\omega)-{\BS f}_\xi^{-1}(\omega)\|_2$. Finally, replacing in \eqref{eq.spectral.density.X.inverse} all quantities with the estimators discussed above leads to our final estimator of the inverse of the spectral density matrix of $\{\BS X_t\}$. Its error bounds are given in the following Theorem~\ref{thm.spectral.density}. We only present here explicitly the rate for a simplified case. In the general case, an explicit rate can be obtained by inserting the results of Lemma~\ref{lem.rate.spectral.var} and Theorem~\ref{thm.est.error.step.one}. Since it leads to a lengthy and not insightful expression, we omit it here. The rate is dominated by the estimation error of the sparse VAR and it is similar to the one in Theorem~\ref{thm.est.error.idio}. However, the rate is more affected by the sparsity parameter in the sense that its maximum growth rate is less for the spectral density than it is for prediction. Maximum growth rate refers here to the maximal rate of sparsity for which consistency can be achieved. 

\begin{theorem}\label{thm.spectral.density}
Under Assumption~\ref{ass.moments},\ref{ass.fac},\ref{sparsity.b} and Assumption~1 and 2 in \cite{wu2018asymptotic} (conditions on the used kernel and lag-window of the non-parametric estimator) we  have the following \small
\begin{align*}
    \|\BS f_X(\omega)^{-1}-\hat{\BS f}_X(\omega)^{-1}\|_l=O_P(k_\xi\|\hat{\BS f}_\xi^{-1}(\omega)-{\BS f}_\xi^{-1}(\omega)\|_\infty+k_\xi^2 \|\hat {\BS  \Lambda}-\BS \Lambda \bH_{NT}^{-1}\|_{\max}), l \in [1,\infty]
\end{align*}\normalsize
and  \small
\begin{align*}
    \|\BS f_X(\omega)^{-1}-\hat {\BS f}_X(\omega)^{-1}\|_2=O_P(\|\hat{\BS f}_\xi^{-1}(\omega)-{\BS f}_\xi^{-1}(\omega)\|_2+\|\hat {\BS \Lambda}-\BS \Lambda \bH_{NT}^{-1}\|_{\max}).
\end{align*}
\normalsize
If $N=T^a, p=T^b$ for some $a,b>0$, $\zeta\geq 4(1+a+b)$ and $k=o(\sqrt{T/\log(Np)})$, these error bounds simplify to 
\small
\begin{align*}
    \|\BS f_X(\omega)^{-1}-\hat {\BS f}_X(\omega)^{-1}\|_l=O_P\Bigg(& k^2 \|{\BS \Sigma}_v^{-1}\|_1 \Big(k_v  \Big[\sqrt{(\log(N)/T})+k\Big[\frac{k_\xi}{N}+\frac{\log(N)}{T}+\frac{\sqrt{\log(N)}}{\sqrt{NT}}\Big]\Big]^{1-q_v}+\\
    & \sqrt{k} \Big[\frac{\sqrt{\log(N)}}{\sqrt{T}}+k k_\xi/N+k \frac{\sqrt{\log(N)}}{\sqrt{NT}}\Big]^{1-q/2} \Big)\Bigg), l \in [1,\infty],
\end{align*}
\begin{align*}
    \|\BS f_X(\omega)^{-1}-\hat {\BS f}_X(\omega)^{-1}\|_2&=O_P\Bigg(k_v \|{\BS \Sigma}_v^{-1}\|_1 \Bigg[\sqrt{(\log(N)/T})+k\Big[\frac{k_\xi}{N}+\frac{\log(N)}{T}+\frac{\sqrt{\log(N)}}{\sqrt{NT}}\Big]\Bigg]^{1-q_v}\\&+k^{3/2} \Big[\frac{\sqrt{\log(N)}}{\sqrt{T}}+k k_\xi/N+k \frac{\sqrt{\log(N)}}{\sqrt{NT}}\Big]^{1-q/2} \Bigg).
\end{align*}
\end{theorem}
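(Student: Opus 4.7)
The plan is to bound the perturbation $\BS f_X(\omega)^{-1}-\hat{\BS f}_X(\omega)^{-1}$ via the Sherman--Morrison--Woodbury representation in \eqref{eq.spectral.density.X.inverse}, treating the errors in $\BS f_\xi^{-1}$, in the loadings $\BS \Lambda$, and in the $r\times r$ factor spectrum $\BS f_f^{-1}$ one at a time. Since $\BS f_f$ is of fixed dimension $r$, Lemma~\ref{lemma.spectral.density.factor} gives a classical nonparametric rate for its estimator, which turns out to be dominated by the other two error sources and therefore does not appear explicitly in the final bound.

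Writing $\BS M:=\BS f_f^{-1}(\omega)+\BS \Lambda^\top \BS f_\xi^{-1}(\omega)\BS \Lambda$ and $\hat{\BS M}$ for its estimated analogue, I would decompose
\begin{align*}
\BS f_X^{-1}-\hat{\BS f}_X^{-1}=\bigl(\BS f_\xi^{-1}-\hat{\BS f}_\xi^{-1}\bigr)-\bigl(\BS f_\xi^{-1}\BS \Lambda \BS M^{-1}\BS \Lambda^\top\BS f_\xi^{-1}-\hat{\BS f}_\xi^{-1}\hat{\BS \Lambda}\hat{\BS M}^{-1}\hat{\BS \Lambda}^\top\hat{\BS f}_\xi^{-1}\bigr),
\end{align*}
and then split the bracketed difference into a telescoping sum, each term of which isolates exactly one ``estimated minus true'' factor multiplied by bounded or controllable quantities. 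The crucial structural observation is that although $\|\BS \Lambda\|_2=O(\sqrt{N})$ by Assumption~\ref{ass.fac}, the $r\times r$ matrix $\BS \Lambda^\top \BS f_\xi^{-1}\BS \Lambda$ grows linearly in $N$, so that $\BS M^{-1}=O(1/N)$ and this contraction exactly cancels the two $\sqrt{N}$ factors coming from the loadings in the sandwich; $\hat{\BS M}^{-1}$ is shown to be close to $\BS M^{-1}$ on an event of probability tending to one via a standard matrix-inverse perturbation argument.

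For the spectral-norm claim I would then invoke submultiplicativity of $\|\cdot\|_2$ together with $\|\BS f_\xi^{-1}(\omega)\|_2=O(1)$, so that each telescoping term is of order either $\|\hat{\BS f}_\xi^{-1}-\BS f_\xi^{-1}\|_2$ or $\|\hat{\BS \Lambda}-\BS \Lambda\bH_{NT}^{-1}\|_{\max}$ (the latter because $\|\BS \Lambda\|_2\cdot\|\hat{\BS \Lambda}-\BS \Lambda\bH_{NT}^{-1}\|_2$ is controlled by $\sqrt{N}\cdot\sqrt{N}\|\cdot\|_{\max}$, with the two $\sqrt{N}$ absorbed by $\BS M^{-1}$). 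For the $\|\cdot\|_l$ bound with $l\in\{1,\infty\}$, I would exploit Assumption~\ref{sparsity.b}, under which $\|\BS f_\xi^{-1}(\omega)\|_\infty$ and $\|\BS f_\xi^{-1}(\omega)\|_1$ are of order $k_\xi$, while $\|\BS \Lambda\|_\infty=O(1)$ and $\|\BS \Lambda\|_1=O(N)$; every appearance of $\|\BS \Lambda\|_1$ is paired with $\|\BS M^{-1}\|_\infty=O(1/N)$, so the sandwich $\BS f_\xi^{-1}\BS \Lambda\BS M^{-1}\BS \Lambda^\top \BS f_\xi^{-1}$ contributes an $O(k_\xi^2)$ multiplicative constant on terms containing a loading perturbation and an $O(k_\xi)$ constant on the term containing the $\BS f_\xi^{-1}$ perturbation, matching the pattern of coefficients in the theorem statement.

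The main obstacle, and the reason the proof requires care, is the bookkeeping of mixed norms: for each of the (roughly six) telescoping terms one has to select the precise combination of $\|\cdot\|_1$, $\|\cdot\|_\infty$, $\|\cdot\|_2$, and $\|\cdot\|_{\max}$ submultiplicative inequalities so that the $\sqrt{N}$ growth of $\|\BS \Lambda\|_2$ (or $N$ growth of $\|\BS \Lambda\|_1$) is annihilated by $\BS M^{-1}$ and the loading error enters only through $\|\cdot\|_{\max}$ rather than through a norm that inflates with $N$. Once the general bound is in place, the explicit simplified rate follows by inserting Lemma~\ref{lem.rate.spectral.var} for $\|\hat{\BS f}_\xi^{-1}-\BS f_\xi^{-1}\|_l$ and the $\|\cdot\|_{\max}$ rate for $\|\hat{\BS \Lambda}-\BS \Lambda\bH_{NT}^{-1}\|_{\max}$ implied by Theorem~\ref{thm.est.error.step.one}, and then specializing to $N=T^a$, $p=T^b$, $\zeta\geq 4(1+a+b)$, and $k=o(\sqrt{T/\log(Np)})$, under which $\gt$ and the polynomial moment terms become negligible.
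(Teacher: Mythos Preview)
Your proposal is correct and follows essentially the same route as the paper: a telescoping decomposition of the Woodbury sandwich, the observation that the $r\times r$ middle block is $O(1/N)$ and therefore annihilates the growth of $\BS\Lambda$, and then mixed-norm bookkeeping to extract the $k_\xi$ and $k_\xi^2$ prefactors in the $\|\cdot\|_l$ case. The one technical point the paper makes explicit that you leave implicit is the lower bound $\sigma_{\min}(\BS\Lambda^\top\BS f_\xi^{-1}(\omega)\BS\Lambda/N)\gtrsim\sigma_{\min}(\BS f_\xi^{-1}(\omega))$, which it obtains by orthonormalizing $\BS\Lambda/\sqrt N$ and invoking Poincar\'e's separation theorem; this is what actually justifies your assertion that $\BS M^{-1}=O(1/N)$.
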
\normalsize

\begin{example}
Let us showcase an example of 
partial coherence network construction using the FRED-MD dataset \citep{mccracken2016fred} which contains a large number of U.S. macroeconomic series sampled at monthly frequency. After the necessary cleaning of the data set due to missings, we are left with $123$ macroeconomic series for a time span ranging from January $1959$ until December $2019$.\footnote{We intentionally truncate the last few years to exclude the Covid-19 crisis.} We base the analysis on the partial coherence in \eqref{eq.partial.coherence} computed from the estimated inverse spectral density matrix with our proposed factor model with sparse VAR idiosyncratic components, as described in Section \ref{sec_spec_dens}. We determine the number of factors using the criteria of \cite{bai2002determining} with the penalty function $g(N,T)=(N+T)/(NT) \log(NT/(N+T))$. The lag-length of the sparse VAR for the idiosyncratic component is selected by the information criteria \eqref{crit_FactLags_global} discussed in depth in the next section and directly applied to the estimated idiosyncratic component with $r_{\max}=0$. We consider the two halves of the sample 1959-2019, namely January 1960 until December 1989 and January 1990 until December 2019. We consider a lower-bound level of partial coherence of $R>0.05$. 
In the figures, the labels of the macroeconomic variables are accompanied by a number within square brackets which refers to the group they belong to according to FRED-MD.\footnote{ Group 1 is ``Output and Income", group 2 is ``Labor Market", group 3 is ``Housing", group 4 is ``Consumption, Orders and Inventories". group 5 is ``Money and Credit", group 6 is ``Interest and Exchange rates", group 7 is ``Prices" and finally group 8 is ``Stock Market".} Active vertices i.e., vertices that are connected with at least one of the others, are reported in red while the non-active ones are in light blue.
\begin{figure}[tb]
  \centering
\subfloat[1960-1989\label{fig:test2}]
  {\includegraphics[width=.45\linewidth, trim={1.5 1.5 1.5 1.5},clip]{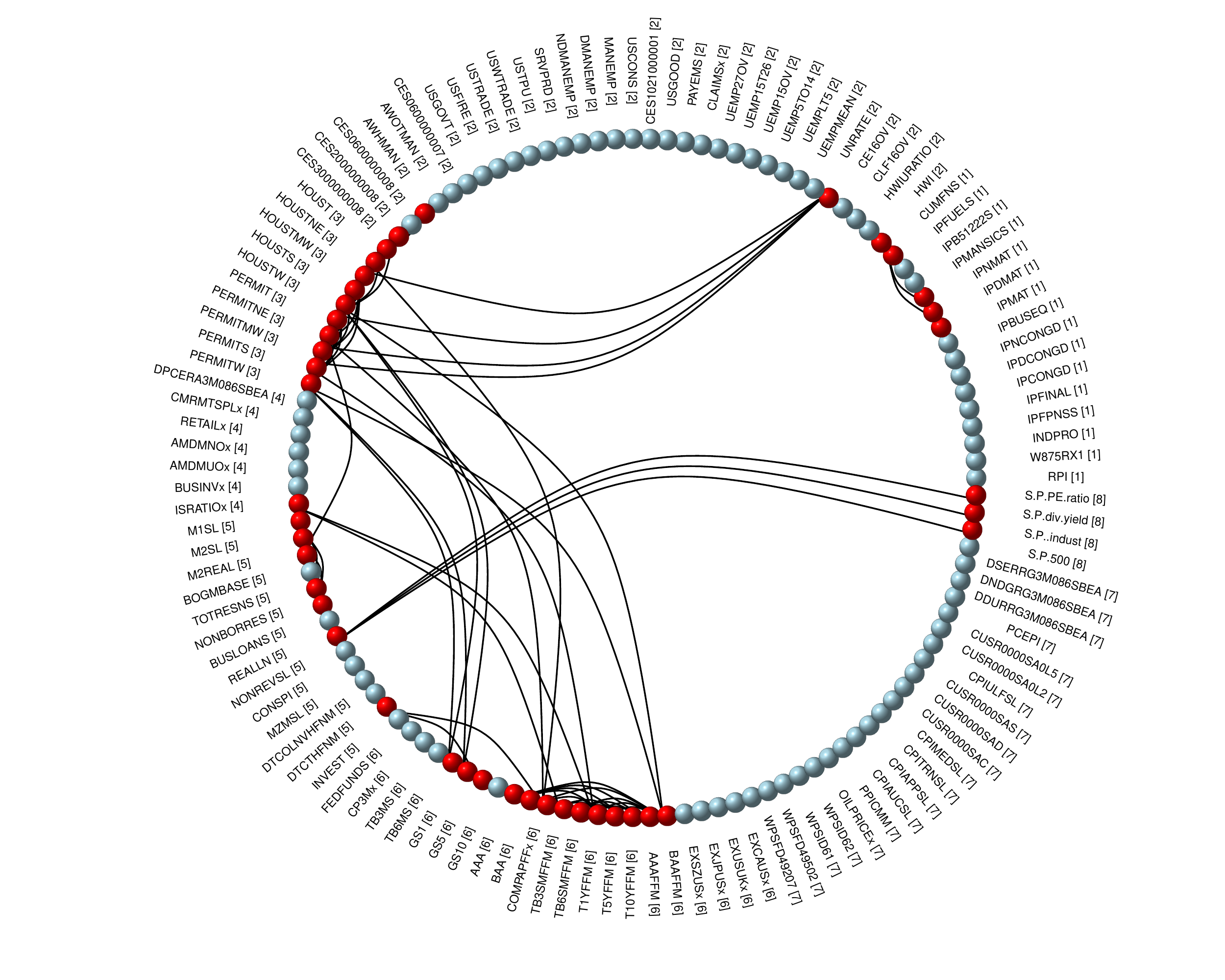}}\hfill 
\subfloat[1990-2019\label{fig:test3}]
  {\includegraphics[width=.45\linewidth, trim={1.5 1.5 1.5 1.5},clip]{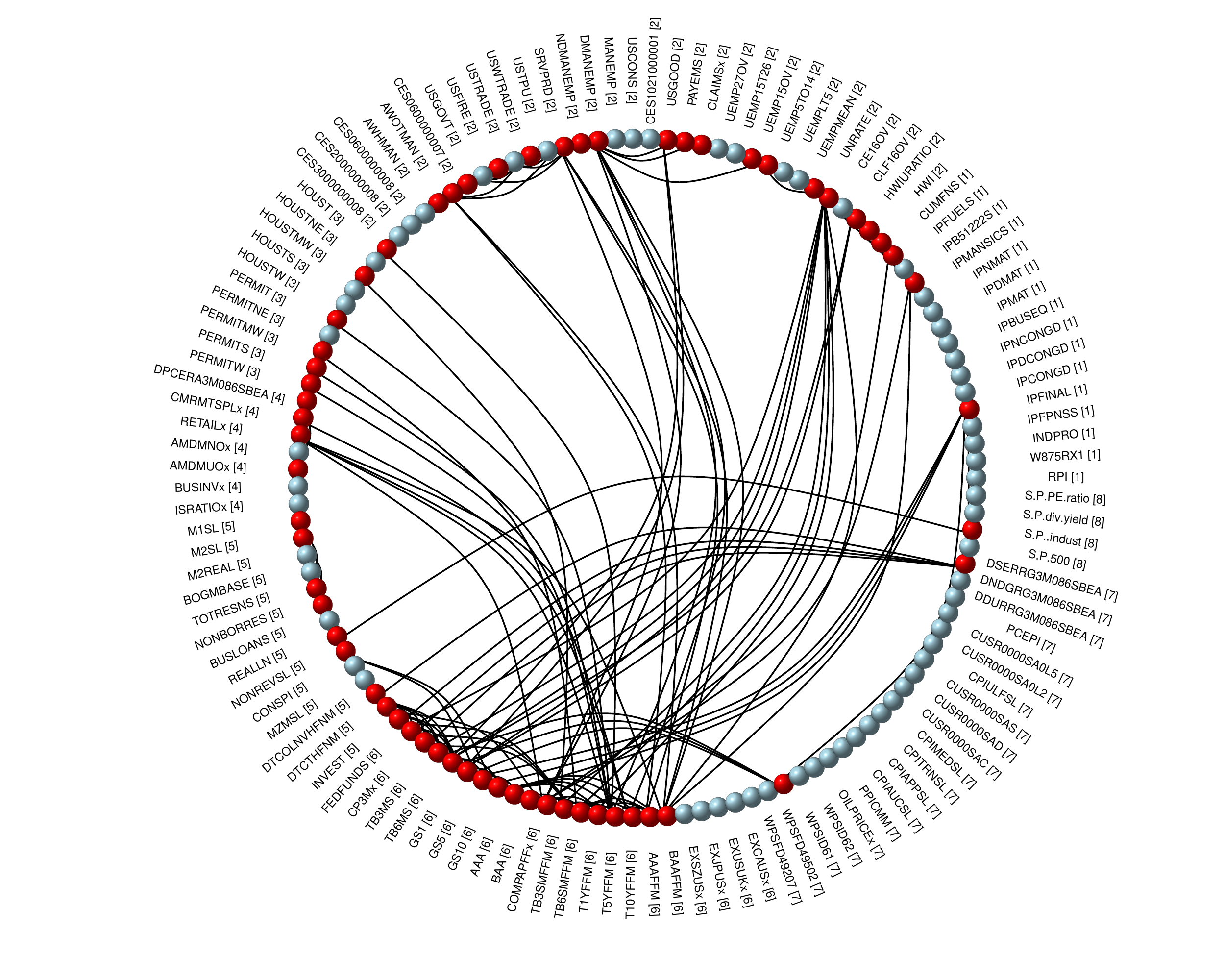}}
\caption{Partial coherence networks, $R>0.05$. [zoom in possible in pdf-version]}\label{pc005}
\end{figure}
In Figure \ref{pc005}, the highest number of connections is observed in the second half of the sample, in panel (\ref{fig:test3}). $60$ active vertices are found compared to the $42$ active in the first half of the sample in panel (\ref{fig:test2}). 
``Interest and Exchange Rates" group 6 is the most active group of vertices across the sub-samples: $17$ of its variables are connected in the second half of the sample, while $13$  are active in the first half of the sample. Group 2, 3 and 5 i.e., respectively: ``Labor Market", ``Housing" and ``Money and Credit" are also particularly active. In fact, in the second half of the sample, $17$ vertices belonging Labor Market are found, compared to only $3$ in the first half. $10$ active vertices within Housing are found in the first half of the sample compared to $6$ in the second half. Vertices belonging to Money and Credit and Prices are $7$ for the first half of the sample and $9$ for the second half. 
\end{example}

\section{{Joint Selection: Number of Factors \& Lag-length}}\label{sec_numbfactors}

{The present context clearly requires the selection of the number of factors within the PCA step as well as the order of the VAR for the idiosyncratic component. In the literature, there is an abundance of methods available for both. Among others, the seminal work of \citet{bai2002determining} introduced information criteria for a data-driven specification of the number of factors and it is perhaps the most employed method in practice. Further refinements of this method can be found in \citet{hallin2007determining,alessi2010improved}. Information criteria can also be used to specify the order of a VAR. For instance, \citet{hecq2021granger} propose to marginalize the (high-dimensional) VAR into a sequence of AR(p) regressions and select the lag-length via an approximated Bayesian information criterion (BIC). The consistency of the BIC has been proved in \citet{wang2009shrinkage}. Under a few technical conditions on the divergence speed of the model dimension and the size of non-zero coefficients, they show how a slightly modified BIC can identify the true model consistently even when the dimension diverges. }
\par We propose here a unified procedure able at the same time to consistently estimate the lag-length as well as the number of factors. For a given lag-length and number of factors, the penalty parameter can {also} be chosen with {an information criterion as AIC or BIC but this necessarily needs to be distinct from the joint information criteria for the number of factors and lag-length hence we briefly discuss it first. Let $\bxi_{t,S}^v$ be the subvector containing those columns of $\bxi_t^v$ belonging to the set $S$. Let further $\hat{S}$ be the active set identified by the lasso for a given $\lambda$. Then the value $\lambda^{IC}$ chosen by information criteria is found as \small 
\begin{equation*}
\lambda^{IC} = \underset{\lambda}{\arg\min} \left(\ln \left(\frac{1}{T-p+1}\sum_{t=p+1}^T\left(\xi_{j,t}-\sum_{j=1}^p \bbeta^\top_{S(\lambda)}\bxi_{t-j,S(\lambda)}^{v}\right)^2\right)+\left(\frac{1}{T-p+1}\right)C_T df\right),
\end{equation*} \normalsize
where $df$ represents the degrees of freedom after the penalization, i.e., the cardinality of the estimated active set. $C_T$ is the penalty specific to each criterion, where the most popular choices are: $C_T=2$, the Akaike information criterion (AIC) by \citet{akaike1974new};  $C_T=\log(T)$, the Bayesian information criterion (BIC) by \citet{schwarz1978estimating}.\footnote{Note: for non-Gaussian distributions, the residual sum is often used as a proxy for the likelihood.} The slight modification of the BIC proposed in \citet{wang2009shrinkage} also holds for penalized estimators as the lasso, thus making it consistent asymptotically in both $N$ and $T$.
}
\par With regard to the number of factors and lag-length, as in some applications, the focus is more on forecasting a small subset of time series of the system, we present here two approaches: a \textit{global} approach which gives a single lag-length and number of factors for the entire system and a \textit{local} approach in which the lag-length or number of factors may differ across the time series. We present the two approaches first and then discuss their differences. 

We consider that the factors are driven by a VAR model that is $\BS f_t=\sum_{j=1}^{p_f} \BS \Pi_j \BS f_{t-j}+\BS v_{t-j}$. That means we have two lag-lengths to choose: $p$ and $p_f$.  The one-step ahead forecast error of model \eqref{eq.fac.decom} for the $i$th component is given by \small
$$ \var\Big(x_{i,t}-\sum_{j=1}^{p_f} \BS \Lambda_i^\top \BS \Pi_j \BS f_{t-j}-\sum_{j=1}^p \BS e_i^\top \BS A^{(j)} \BS \xi_{t-j}\Big).$$
\normalsize
If we treat the factors and idiosyncratic components as known, we have to estimate for all components the parameters $\BS \Lambda \in \R^{N\times r}, \BS \Pi_1,\dots,\BS \Pi_{p_f} \in \R^{r\times r}, \BS A^{(1)},\dots,\BS A^{(p)} \in \R^{N\times N}$. Note that $\BS A_1,\dots,\BS A_p$ are sparse. That means in total we have $(N+r p_f)r+ \sum_{j=1}^p \|\BS A_j\|_0$ parameters for all components. For a single component, we treat $\BS \Lambda_i \BS \Pi_j$ as $r$-dimensional vectors which gives in total for the $j$th component $r p_f+ \sum_{j=1}^p \|\be_i^\top\BS A_j\|_0$ parameters. {The sparsity of the idiosyncratic component has the important implication that $\sum_{j=1}^p \|\be_i^\top\BS A_j\|_0$ grow much slower than $Np$. To be precise, the error bounds in Theorem~\ref{thm.est.error.idio} imply that only rates slower than $\sqrt{T}$ are reasonable. Hence, the number of parameters considered grow slower than the sample size and consequently, this fits into the framework of \cite{wang2009shrinkage} and their modified BIC. Note however, that the results of \cite{wang2009shrinkage} are derived under an i.i.d.~set-up and also the pre-selection of the penalty parameter $\lambda_n$ is not taken into account here.} In this modified BIC set-up $C_T$ denotes a slowly diverging series which is discussed shortly. This motivates the following \emph{global} information criteria 
\small
\begin{align}
    \label{crit_FactLags_global}
    IC_{T,N}^{(global)}:=&\underset{r,p,p_f}{\min}\;\log {\frac{1}{NT}\sum_{t=1+\max(p,p_f)}^T \sum_{i=1}^N \left(x_{i,t}
    -\sum_{j=1}^{p_f} \hat{\BS \Lambda}_i^\top \hat{\BS \Pi}_j \hat{\BS f}_{t-j}
    -\sum_{j=1}^p \BS e_i^\top \hat{\bA}^{(j)}\hat{\bxi}_t^{(r)}\right)^2}\\
    &+\bigg(r(p_f+N)+\sum_{j=1}^p\| \hat{\BS A}^{(j)}\|_0\bigg)\frac{\log(T)}{NT} C_T. \nonumber
\end{align}
\normalsize
For the $i$th component we obtain the following \emph{local} information criteria
\small
\begin{align}\label{crit_FactLags}
    IC_{T,N}^{(i)}:=&\underset{r,p,p_f}{\min}\;\log {\frac{1}{T}\sum_{t=1+\max(p,p_f)}^T \left(x_{i,t}
    -\sum_{j=1}^{p_f} \hat{\BS \Lambda}_i^\top\hat{\BS \Pi}_j \hat{\BS f}_{t-j}
    -\sum_{j=1}^p \be_i^\top \hat{\bA}^{(j)}\hat{\bxi}_t^{(r)}\right)^2}\\&+\bigg(rp_f+\sum_{j=1}^p \|\be_i^\top \hat{\BS A}^{(j)}\|_0\bigg)\frac{\log(T)}{T} C_T. \nonumber
\end{align}
\normalsize
In practice, the minimum is evaluated over a finite grid. That means one sets a maximal number of factors $r_{\max}$ and maximal lag-lengths $p_{\max},p_{f,\max}$. If one sets $r_{\max}=0$ or $p_{\max}=0$, this criteria can also be used to fit plain sparse VAR models or plain factor models, respectively. The series $C_T$ can be diverging very slowly and \cite{wang2009shrinkage} suggest for instance, $\log(\log(T))$. We would like to consider the diverging dimension as well and follow a similar route as \citet{bai2002determining}. So we set $C_T=c\frac{\log(NT/(N+T))}{\log(T)}$ with $c=1/2$. Note that for the global approach the factors are penalized by $(p_f+N)\log(NT/(N+T))/(NT)$. This also implies that this series fits into the penalization function framework of Theorem 2 in \cite{bai2002determining} required to obtain a  consistent estimation of the number of factors, i.e., this series converges to $0$ for $N,T\to \infty$ and diverges if scaled by $\min(N,T)$. 


Some remarks on these two information criteria. First, the local approach requires for the $i$th component only an estimation of $\be_i^\top \hat{\bA}^{(j)}$. If the interest is only in some time series of the system, this reduces the computational burden. Second, if the number of factors differs among the time series, the entire system cannot be written as a factor model with a maximal number of factors and a maximal number of lags. Third, the local approach takes into account that large data sets come as a -- in some sense arbitrary -- collection of series and it is most likely that some series are not driven by factors or a small lag-length is sufficient. However, the additional cross-section average in the global approach also leads to more stable results. In simulations, the local approach outperforms the global approach, see Section~\ref{Sec_Simulations} for further discussion.

\section{Numerical Results}\label{Sec_Simulations}

 All results presented in this section are based on implementations in  \emph{R} \citep{R}. We compute the data generating processes (DGPs) at random and consider the following model class:
$\BS x_t=\BS \Lambda \BS f_t+\BS \xi_t, \BS f_t=\sum_{j=1}^{p_f} \BS \Pi^{(j)} \BS f_{t-j}+\BS u_t, \BS \xi_t=\sum_{j=1}^p \BS A^{(j)} \BS \xi_{t-j}+\BS v_t$. The innovations $\{\BS u_t\}, \{\BS v_t\}$ are generated as Gaussian processes and   $\BS \Sigma_u=\var(\BS u_t)$ is generated as a positive definite matrix with eigenvalues in the range $1$ to $10$, using the implementation of the package \emph{clusterGeneration} \citep{Clustergen}. If not denoted otherwise, sparsity of a matrix is obtained by   setting entries -- beginning with the absolute smallest values -- to zero such that the specified amount of sparsity is obtained. The entries of $\bA^{(j)}$ are generated randomly using a $t$ distribution with 3 degrees of freedom. After sparsifying, the matrices are rescaled to fit the eigenvalue conditions of $0.8$. In real data, it is often observed that for a component of a multivariate time series the history of the component itself is quite an important predictor. That means that the diagonals of $\bA^{(j)},j=1,\dots,p$ are (at least for one $j$) often non-sparse. To take this into account we put more weight onto the diagonal of $\bA^{(1)}$ by adding $0.4 \BS I_N$ before sparsifying the randomly generated matrix. This results in a much more dominant diagonal and the diagonal of $\bA^{(1)}$ is more dominant the smaller $p$ is. 

Furthermore, we consider the following specifications: 
\begin{itemize}
    \item The number of factors is given by $r\in\{0,2,4,6\}$.
    \item The sample size is given by $T \in \{100,200\}$.
    \item The dimension is given by $N \in \{50,100,250\}$.
    \item The lag-length of the VAR driving the factors is given by $p_f \in \{0,1,2\}$. The slope matrices are generated at random and the maximal absolute eigenvalue of the stacked VAR matrix is $0.8$. 
    \item The lag-length of the VAR driving the idiosyncratic component is given by $p \in \{0,1,3\}$. The slope matrices are generated at random with a row-wise and column-wise sparsity of $k \in \{5,10,\min(N,100)\}$ and the maximal absolute eigenvalue of the stacked VAR matrix is $0.8$. 
    \item $\BS \Sigma_v=\var(\BS v_t)$ is generated as a positive definite matrix with eigenvalues in the range $1$ to $10$ and sparsity of $k_\Sigma \in \{N/10,N\}$.
    \item The loadings $\Lambda \in \R^{N\times r}$ are generated by random sampling from a Uniform$[-1,1]$ distribution with a column-wise sparsity of $k_\Lambda \in \{N,N/2,N/2^*\}$. $N/2^*$ refers to a setting in which the lower left and upper right part are zero. For this setting, also the the lower left and upper right part of $\BS \Pi^{(j)},j=1,\dots,p_f$ and $\BS \Sigma_u$ are set to zero. 
\end{itemize} 
Note that if the sparsity parameter is of similar size as the dimension, we have no sparsity. Also $r=0$ gives a pure sparse and $p=0$ a pure factor case. Dropping unnecessary combinations, e.g., varying the sparsity for $p=0$, we end up in $2352$ different set-ups for the DGP. We run each set-up $100$ times which results in $235200$ different DGPs. To evaluate the performance, we consider the average one-step ahead prediction error of the first ten time series. To compute the one-step ahead prediction error a test set of $10000$ time points is used. That is, the one-step ahead prediction error of component $i$ is given by $MSFE_{\boldsymbol{{x_i}}}= \Big[ \frac{1}{10000} \sum_{t=1}^{10000} (\hat{{{x}}}_{i,T+t}^{(1)}-{x}_{i,T+t}^{(1)})^2\Big].$ These are then averaged over the $10$ components as well as over the $100$ DGPs of each set-up.

We consider the following models to predict:
\begin{enumerate}[align=parleft]
    \item[\MAR:] Univariate ARs which lag-length is chosen by BIC.
    \item[\MSVAR:] A sparse VAR which is estimated by a row-wise adaptive lasso and the penalty parameter is chosen by BIC. The lag-length is chosen by the local information criteria of Section~\ref{sec_numbfactors} with maximal number of factors equal to zero.
    \item[\MFACABCAR:] A factor model with a VAR for the factors and univariate AR for the idiosyncratic component. The number of factors is chosen by information criteria of \cite{bai2002determining} with the first penalty function, i.e., $g(N,T)=(N+T)/(NT)\log(NT/(N+T))$. The lag-length for the VAR is chosen by BIC and the lag-lengths of the univariate ARs are chosen by AIC.
    \item[\MFACVAR:] The approach presented in this paper, i.e., a factor model with a VAR for the factors and a sparse VAR for the idiosyncratic component. The number of factors and lag-length are chosen by the local information criteria of Section~\ref{sec_numbfactors}. The sparse VAR is estimated by a row-wise adaptive lasso and the penalty parameter is chosen by BIC.
    
\end{enumerate}
{\MFACABCAR} 
use the information criteria of \cite{bai2002determining} to determine the number of factors. Let us mention that in preliminary simulations we also considered the method of \cite{alessi2010improved} as well as the global information criteria of Section~\ref{sec_numbfactors} to determine the number of factors. The obtained results are almost identical to the ones with the information criteria of \cite{bai2002determining}. So for this specification we focus the presentation on the criteria of \cite{bai2002determining} only. Furthermore, for {\MSVAR}, {\MFACVAR} 
we also considered the global information criteria of Section~\ref{sec_numbfactors} but do not present its result here. The findings here can be summarized as follows. The local information criteria outperforms the global criteria and the differences are larger for small sample sizes and dimensions. 
In the following, we present the MSFE-results in relation to the MSFE of {\FLfilt}. That means values larger than $1$ indicate a performance worse than {\FLfilt} and values smaller than $1$ vice versa. The overall performance is summarized in Table~\ref{table.results.all}. 

\begin{table}[H]
\centering
\begin{tabular}{ c| ccccc }
$T$ &\ARBIC & \LAS& \FBRAR \\ \hline
100 & 1.00 & 1.07 & 1.02 \\ 
  200 & 1.13 & 1.09 & 1.13 \\ 
 \end{tabular}
 \caption{Overall performance measured in MSFE and in relation to \FLfilt.} \label{table.results.all}
\end{table}

The relative performance overall $2352$ different DGP set-ups is displayed in Figure \ref{fig:OOS_p_p2_sparsity}-\ref{fig:OOS_p_p2_sparsity_4}. Each dot represents the relative MSFE for one DGP set-up averaged over the runs. The set-ups are sorted by sample size$(T)$, lag-length of the idiosyncratic part$(p)$, sparsity$(k)$, lag-length of the factors$(p_f)$, number of factors$(r)$, and dimension $(N)$. The obtained groups for $T,p,$ and $k$ are highlighted by vertical bars and the specific parameter values are given at the bottom of the figure. This sorting is chosen because these specification parameters matter the most in the sense that the results can differ substantially among different specification of the parameter values.

\begin{figure}[tb]
\centering
\resizebox{\textwidth}{!}{\input{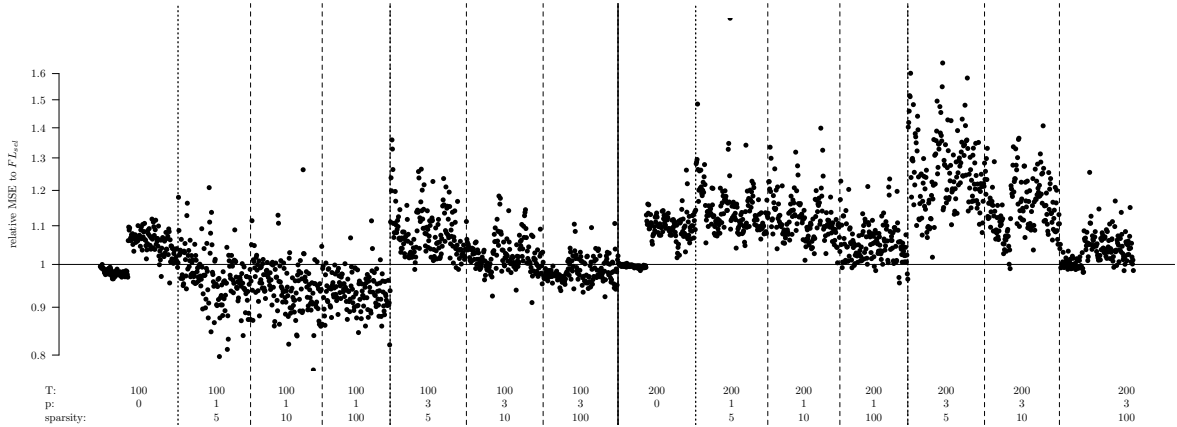}}
\caption{The relative performance of {\ARBIC} over all $2352$ different DGP set-ups. Each dot represents the relative MSFE for one DGP set-up. The set-ups are sorted by sample size,  lag-length of the idiosyncratic part $p$, and sparsity. The obtained groups are highlighted by vertical bars and the specific parameter values are given at the bottom of the figure. Note that a sparsity of $100$ implies in principle here no sparsity at all.}
 \label{fig:OOS_p_p2_sparsity_3}
 \end{figure}

\begin{figure}[tb]
 \centering
     \resizebox{\textwidth}{!}{\input{Simulation_OOS_Results2_3}}
     \caption{The relative performance of {\LAS} over all $2352$ different DGP set-ups. Each dot represents the relative MSFE for one DGP set-up. The set-ups are sorted by sample size,  lag-length of the idiosyncratic part $p$, and sparsity. The obtained groups are highlighted by vertical bars and the specific parameter values are given at the bottom of the figure. Note that a sparsity of $100$ implies in principle here no sparsity at all.}
     \label{fig:OOS_p_p2_sparsity}
\end{figure}

\begin{figure}[tb]
\centering
\resizebox{\textwidth}{!}{\input{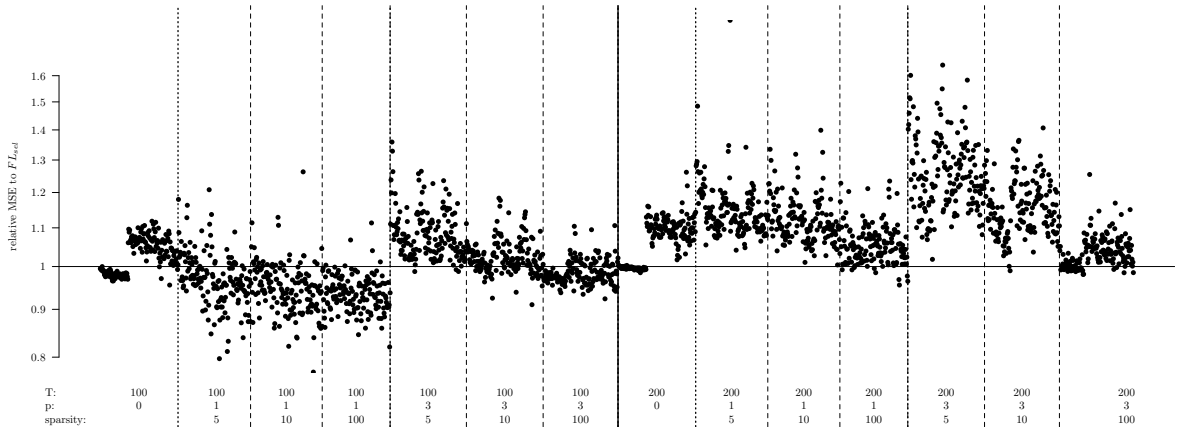}}
\caption{The relative performance of {\FBRAR} over all $2352$ different DGP set-ups. Each dot represents the relative MSE for one DGP set-up. The set-ups are sorted by sample size,  lag-length of the idiosyncratic part $p$, and sparsity. The obtained groups are highlighted by vertical bars and the specific parameter values are given at the bottom of the figure. Note that a sparsity of $100$ implies in principle here no sparsity at all.}
\label{fig:OOS_p_p2_sparsity_4}
\end{figure}

Let us discuss the three Figures~\ref{fig:OOS_p_p2_sparsity_3} to \ref{fig:OOS_p_p2_sparsity_4} starting with the performance relation of {\ARBIC} and {\FLfilt}. If the sample size is small ($T=100$), {\FLfilt} outperform {\ARBIC} only in the case of $p=3$. In all other cases it performs equally good or even worse. 
This behavior changes for the larger sample size settings. Here, {\FLfilt} performs equally well in the cases of no sparsity or no dependence and clearly outperforms {\ARBIC} in the other cases with a smaller MSFE of 40\% or more. \\
In cases in which factors are present, {\FLfilt} is outperforming {\LAS}. The outperformance do not differ much for different sample sizes or lag length and the MSFE is around 10\% smaller for {\FLfilt}.\\
If the sample size is small and the idiosyncratic component is mainly driven by a diagonal VAR (note the construction of slope matrices) {\FBRAR} outperforms {\FLfilt}. It is the other way around for all other cases, i.e., for a less diagonal dominant VAR model and also for larger sample sizes. Then, 
except for the case where the idiosyncratic component has no dependency or no sparsity, {\FLfilt} strongly outperforms {\FBNfilt} with a 10\% to 40\% smaller MSFE. 

To conclude, for the smaller sample size ($T=100$) the additional modelling of the idiosyncratic parts of {\FLfilt} does not always pay off but for the larger sample size ($T=200$) {\FLfilt} does perform best among all competitors and if there is dependence in the idiosyncratic component the gain can be quite substantial. 
 
 \section{Conclusion}\label{sec_conclusions}
  We blend the dense dimensionality reduction of factor models with the one of sparsity-inducing high-dimensional VARs. We propose a \emph{factor model} whose factors and relative loadings are estimated via standard principal components while its idiosyncratic components are assumed to follow a high-dimensional sparse VAR model and are thus estimated via $\ell_1-$norm regularization techniques such as the adaptive lasso. We derive error bounds of this estimation procedure and show in which situations the lasso suffers from the estimation of the idiosyncratic components. 
  {We discuss the implications of our model to forecasting, factor augmented regression, bootstrapping factor models and semi-parametric estimation of the inverse of the spectral density matrix.} 
  To choose the number of factors and the lag-length of the VAR, we propose a unified procedure able to simultaneously estimate both.
  In simulations, we compare the performance of our proposed method with several workhorse forecasting models in the literature and find that the advantage of the procedure proposed can be substantial for moderate to large sample sizes.

\bigbreak

\bigbreak
{\bf Acknowledgments.}  We thank the participants of the workshop   ``Dimensionality Reduction and Inference in High-Dimensional Time Series" at Maastricht University for very helpful comments. The research of the first author was supported by the Research Center (SFB) 884 ``Political Economy of Reforms''(Project B6), funded by the German Research Foundation (DFG). Furthermore, the first author acknowledges support by the state of Baden-W{\"u}rttemberg through bwHPC.

\bibliography{literature.bib}
\newpage
\begin{appendices}
\numberwithin{table}{section}
\numberwithin{equation}{section}
\numberwithin{lemma}{section}
\section{Proofs and additional Lemmas}
In order to quantify the dependence of the stochastic processes, we use the concept of functional dependence, see \cite{wu2005nonlinear}, and concentration inequalities derived under this concept of dependence, see among others \cite{liu2013probability,wu2016performance}. In the following remark~\ref{remark.functional.dependence} we summarize the main notation of this dependence concept.

\begin{remark}[Functional Dependence Measure]\label{remark.functional.dependence}
Let $Y_{t;i}=G_i(\eps_t,\eps_{t-1},\dots,), i=1,\dots,N, t \in \Z,$ be some process generated causally by the i.i.d. processes $\{\eps_t\}$ for some measurable function $G=(G_1,\dots,G_N)$. Furthermore, denote by $Y_{t;i}^{\prime(k)}=G_i(\eps_t,\eps_{t-1},\dots,\eps_{t-k+1},\eps_{t-k}^\prime,\eps_{t-k-1},\eps_{t-k-2},\dots)$ the process where $\eps_{t-k}$ is replaced by an i.i.d.~copy $\eps_{t-k}^\prime$. We follow \cite{wu2005nonlinear,wu2016performance} and  define the physical/functional dependence coefficients in the following way. Let $\left\|\xi_{i,t}\right\|_{E,q}:=\left(\mathbb{E}\left|\xi_{i,t}\right|^{q}\right)^{1 / q}<\infty, q \geq 1$. Furthermore, let the functional dependence measure be defined as $\delta_{k,q,i}=\|Y_{0;i}-Y_{0;i}^{\prime(k)}\|_{E,q}, k \geq 0$. In order to account for the dependence in the process $Y_{;i}$ let $\Delta_{m,q;i}=\sum_{k=m}^\infty \delta_{k,q;i}$ such that the dependence adjusted norm is defined as $\|Y_{;i}\|_{q,\alpha}=\sup_{m\geq 0} (m+1)^\alpha \Delta_{m,q;i}$. As we work in a high-dimensional setting, in order to take this into account we need a uniform dependence adjusted norm $\Psi_{q,\alpha}=\max_{1\leq i\leq N} 
\|Y_{\cdot;i}\|_{q,\alpha},$ and an overall dependence adjusted norm $\Upsilon_{q,\alpha}=(\sum_{i=1}^N (\sup_{m\geq 0} (m+1)^\alpha \Delta_{m,q;i})^q)^{1/q}$. Furthermore, define for the $N$ dimensional stationary process $Y_{t;i}$ the $\mathcal{L}^{\infty}$ functional dependence measure with its corresponding dependence adjusted norm: 
$\omega_{k,q}= \| \|Y_{t;i}-Y_{t;i}^{\prime(k)}\|_{\infty}\|_{E,q}$,
$\| \|Y_\cdot\|_\infty \|_{q,\alpha}=\sup_{m\geq 0} (m+1)^\alpha \Omega_{k,q}$ for $\Omega_{m,q}=\sum_{k=m}^\infty \omega_{k,q}$.
Finally, let $\nu_q=\sum_{j=1}^\infty (j^{q/2-1} \omega_{k,q})^{1/(q+1)}$.

Assumption~\ref{sparsity} implies that $\|e_i^\top \BS B^{(j)}\|_2\leq M\rho^j$. Hence, it follows by Example 3 in \cite{wu2016performance} and the moment condition in Assumption~\ref{ass.moments} that $\max_j \|\{\xi_{j,t}\}\|_{\zeta,\alpha}<\infty$ for all $\alpha>0$. Since $\{\BS f_t\}$ is a linear processes of fixed dimension $r$, we also have $\max_j \|\{ \chi_{j,t}\}\|_{\zeta,\alpha}<\infty$ for all $\alpha>0$. Hence, we have by the Minkowski-inequality $\max_j \|\{x_{j,t}\}\|_{\zeta,\alpha}<\infty$, see also the Proof of Proposition~5 in \cite{forni2017dynamic}. Additionally, we have by the Cauchy-Schwarz-inequality for some $q>2$, $\max_{j,i} \|\{\xi_{j,t} f_{i,t}\}\|_{q,\alpha}\leq C(\max_{j} \|\{\xi_{j,t}\}\|_{2q,\alpha}+
\max_{j} \|\{f_{i,t}\}\|_{2q,\alpha}+\max_{j} \|\{\xi_{j,t}\}\|_{2q,\alpha}\max_{j} \|\{f_{i,t}\}\|_{2q,\alpha})$, where $C$ is some constant depending on $q$ only.
\end{remark}

In the following lemma, we derive the  order of several expression. Key ingredient of the proof of this lemma is the Nagaev's inequality for dependent processes, see Section 2.1 in \cite{wu2016performance}. To abbreviate the expression, we display all results here in $O_p$-notation. The proof of all lemmas presented here can be found in the supplementary material. Also, recall $$\gt=(NT)^{2/\zeta}\left(\frac{1}{\sqrt{N}T}+\frac{1}{T^{3/2}}+(NT)^{2/\zeta}\frac{1}{T^2}\right).$$
\begin{lemma} \label{lemma.1}
Let $C_1,C_2,C_3$ be constants depending only on $q$ and $\alpha$. Under Assumption~\ref{sparsity},\ref{ass.moments},\ref{ass.fac} we  have the following:
\begin{enumerate}[A)]
\item \label{lem.1.part.0} $\|\BS D_{NT,r}^2\|_2=O_P(1)$ and $\|\BS D_{NT,r}^{-2}\|_2=O_P(1).$
\item \label{lem.1.part.A} For $i=1,\dots,N$ and $j=1,\dots,r,$ we have
$$
\max_{i,j} |1/T\sum_{s=1}^T \xi_{i,s} f_{j,s}|=O_P(\sqrt{(\log(N)/T})+N^{2/\zeta}T^{2/\zeta-1}).
$$

\item \label{lem.1.part.B} For each $j=1,\dots,r,$ we have
 $1/T\sum_{s=1}^T | f_{s,j}|^2=\Sigma_{F,j,j}+O_P(1/\sqrt{T})$ and  $\max_j |1/T\sum_{s=1}^T f_{s,j}\hat { f}_{s,l}|\leq (M+O_P(1/\sqrt{T}))^{1/2}$.

\item \label{lem.1.part.C} For each $j_1,j_2=1,\dots,N$, we have 
$$\max_{j_1,j_2} |1/T\sum_{s=1}^T \xi_{s,j_1} \xi_{s,j_2}|\leq M+O_P(\sqrt{(\log(N)/T})+N^{2/\zeta}T^{2/\zeta-1}).$$
\item \label{lem.1.part.D} For each $k=1,\dots,r$ we have $e_k^\top \BS\Lambda^\top \Gamma_{\xi}(0) \BS \Lambda e_k/N\leq M^4/(1-\rho^2)<\infty$ and 
$$1/T\sum_{s=1}^T (1/\sqrt{N} \sum_{i=1}^N \ell_{i,k} \xi_{i,s})^2=O_P(1).$$
\item \label{lem.1.part.E} 
We have 
$$\max_{j,k} |1/T\sum_{s=1}^T \be_j^\top \BS \xi_t \BS \xi_t^\top \BS \Lambda^\top \be_k|=O_P(k_\xi+\sqrt{\log(N)/T}+N^{2/\zeta}T^{2/\zeta-1}).$$
\item \label{lem.1.part.F} We have 
\begin{align*}
    &\max_{j,l} \left|1/T\sum_{s=1}^T \be_l^\top(\hat{\BS f}_s - \bH_{NT} {\BS f}_s) \xi_{i,s}\right|=O_P\left(\frac{{\log(N)}}{{T}}+\frac{k_\xi}{N}+\frac{\sqrt{\log(N)}}{\sqrt{NT}}+\gt\right).
\end{align*}\normalsize
\item \label{lem.1.part.G}
We have for $t\in \Z$
\begin{align*}
    \hat{\boldsymbol{f_t}}-\BS H_{NT}\boldsymbol{f_t}=&\frac{1}{NT}\left[\sum_{i=1}^N \sum_{s=1}^T \xi_{i,t} \BS\Lambda_i \BS f_{s}^\top \BS H_{NT}{\BS f_s}+ \sum_{i=1}^N \sum_{s=1}^T \xi_{i,t}\xi_{i,s} \BS H_{NT}{\BS f_s}\right] \BS D_{NT,r}^{-2}\\
    &+O_P\left(\frac{{\log(N)}}{{T}}+\frac{k_\xi}{N}+\frac{\sqrt{\log(N)}}{\sqrt{NT}}+\gt\right). 
\end{align*}\normalsize
\item \label{lem.1.part.H} We have\small
\begin{align*}
    &\max_{j,l}\left|\frac{1}{T}\sum_{s=1}^T f_{j,s} [\hat{\BS f}_s - \BS H_{NT} \BS f_s]^\top \be_l\right|=O_P\left(\frac{{\log(N)}}{{T}}+\frac{k_\xi}{N}+\frac{\sqrt{\log(N)}}{\sqrt{NT}}+\gt\right).
\end{align*}\normalsize
\item \label{lem.1.part.I} 

For each $j,l=1,\dots,r$
\small
\begin{align*}
    &\frac{1}{T}\sum_{s=1}^T \be_j^\top[\hat {\BS f}_s - H_{NT} {\BS f}_s] [\hat{\BS f}_s - \BS H_{NT} \BS f_s]^\top \be_l=O_P\left(\frac{{\log(N)}}{{T}}+\frac{k_\xi}{N}+\frac{\sqrt{\log(N)}}{\sqrt{NT}}+\gt\right).
\end{align*}\normalsize
\item \label{lem.1.part.J} We have\small
\begin{align*}
    (\BS H_{NT}^\top)^{-1} \BS \Lambda_i - \hat{\BS \Lambda}_i=\frac{1}{T} \sum_{s=1}^T \bH_{NT} \BS f_s \xi_{i,s}+Error_i,
    \end{align*}\normalsize
    where $\max_i |Error_i|=O_P\left(\frac{{\log(N)}}{{T}}+\frac{k_\xi}{N}+\frac{\sqrt{\log(N)}}{\sqrt{NT}}+\gt\right)$. Furthermore,
    $$\|(\BS H_{NT}^\top)^{-1} \BS \Lambda - \hat{\BS \Lambda}\|_{\max}=O_P(\sqrt{\log(N)/T}+(NT)^{2/\zeta}/T+\frac{k_\xi}{N}).$$

\end{enumerate}
\end{lemma}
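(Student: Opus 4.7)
The plan is to prove the ten statements of Lemma~\ref{lemma.1} in three successive layers: (i) direct concentration bounds for primitive sample moments, (ii) control of the singular values of $\BS D_{NT,r}^2$, and (iii) propagation of these bounds through the fundamental identity \eqref{eq.factor.diff} for $\hat{\BS f}_t - \BS H_{NT} \BS f_t$ and its loading analogue. The overarching tool is Nagaev's inequality for stationary processes under the functional dependence measure of \citet{wu2005nonlinear,wu2016performance}, combined with a union bound over the $N$ or $N^2$ indices. By Remark~\ref{remark.functional.dependence}, the processes $\{\xi_{j,t}\}$, $\{f_{i,t}\}$ and their products have uniformly bounded dependence-adjusted norms of order $\zeta$ (resp.\ $\zeta/2$), so their partial sums concentrate at the mixed rate $\sqrt{\log(N)/T}$ (sub-Gaussian tail) plus $(NT)^{2/\zeta}/T$ (polynomial tail from the $\zeta$-th moment condition). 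Since $\cov(\bof_t,\bv_t)=0$ by Assumption~\ref{ass.moments}, $\{\xi_{i,s}f_{j,s}\}$ is centered, which delivers part~\ref{lem.1.part.A} after a union bound; part~\ref{lem.1.part.C} follows identically, part~\ref{lem.1.part.B} is routine in fixed dimension $r$, and part~\ref{lem.1.part.D} is immediate from the bound $\|\BS\Gamma_\xi(0)\|_\infty \leq k_\xi M$ in Assumption~\ref{sparsity}. Part~\ref{lem.1.part.E} is the first place $k_\xi$ enters: writing $\be_j^\top\BS\xi_t\BS\xi_t^\top\BS\Lambda\be_k = \xi_{j,t}\sum_i \Lambda_{i,k}\xi_{i,t}$, the expectation is bounded by $\|\BS\Gamma_\xi(0)\|_\infty \|\BS\Lambda\|_{\max} \leq Mk_\xi$, and concentration around this expectation is handled by Nagaev applied to the product process.

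For part~\ref{lem.1.part.0} the strategy is to exploit the normalization $\hat{\BS F}^\top \hat{\BS F}/T = \BS I_r$ together with the eigenequation $(NT)^{-1}\bX\bX^\top \hat{\BS F} = \hat{\BS F}\BS D_{NT,r}^2$, so that $\BS D_{NT,r}^2 = T^{-1}\hat{\BS F}^\top[(NT)^{-1}\bX\bX^\top]\hat{\BS F}$. Plugging in $\bX = \bF\bLambda^\top + \bXi$ and using Assumption~\ref{ass.fac} (positive-definite $\bSigma_F$ and $\bSigma_\Lambda$) together with the bounds from parts~\ref{lem.1.part.B}--\ref{lem.1.part.E} to control all $\bXi$ contributions, one obtains that $\BS D_{NT,r}^2$ is asymptotically similar to a positive-definite matrix with eigenvalues bounded between $\alpha/M$ and $M$. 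Hence both $\|\BS D_{NT,r}^2\|_2$ and $\|\BS D_{NT,r}^{-2}\|_2$ are $O_P(1)$.

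For the heart of the lemma (parts~\ref{lem.1.part.F}--\ref{lem.1.part.J}) the idea is to substitute \eqref{eq.factor.diff} into the quantities of interest and then bound each of the resulting multiple sums using stage~(i). Concretely, for part~\ref{lem.1.part.F} we write
\begin{align*}
\tfrac{1}{T}\sum_{s=1}^T \be_l^\top(\hat{\bof}_s-\BS H_{NT}\bof_s)\xi_{i,s}
&= \tfrac{1}{NT^2}\sum_{s=1}^T\!\sum_{i'=1}^N\!\sum_{s'=1}^T\xi_{i,s}\big[\bof_s^\top\BS\Lambda_{i'}\xi_{i',s'}+\xi_{i',s}\bof_{s'}^\top\hat{\bof}_{s'}+\xi_{i',s}\xi_{i',s'}\big]\be_l^\top\BS D_{NT,r}^{-2}\hat{\bof}_{s'},
\end{align*}
exchange sum order, and absorb the $\BS D_{NT,r}^{-2}$ factor using part~\ref{lem.1.part.0}. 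The three inner sums are then bounded using parts~\ref{lem.1.part.A}, \ref{lem.1.part.C}, and \ref{lem.1.part.E}: the first piece yields $\sqrt{\log(N)/(NT)}$ (an extra $N^{-1/2}$ from double averaging), the second contributes the $\log(N)/T$ term, and the third contributes $k_\xi/N$ together with $g(N,T,\zeta)$ coming from the polynomial Nagaev tails on products of four terms. Part~\ref{lem.1.part.G} then just restates the identity with the remainder absorbed into the $O_P$ term, parts~\ref{lem.1.part.H} and~\ref{lem.1.part.I} proceed by the same substitute-and-bound pattern (with a Cauchy--Schwarz step for~\ref{lem.1.part.I}), and part~\ref{lem.1.part.J} uses the symmetric identity for $\hat{\BS\Lambda}_i - (\BS H_{NT}^\top)^{-1}\BS\Lambda_i$ given in the text, together with parts~\ref{lem.1.part.G}--\ref{lem.1.part.I} applied to the $\hat{\BS f}_s - \BS H_{NT}\bof_s$ factors that appear.

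The main obstacle will be bookkeeping in parts~\ref{lem.1.part.F}, \ref{lem.1.part.H} and~\ref{lem.1.part.I}: obtaining the refined rate $\sqrt{\log(N)/(NT)}$ rather than the naive $\sqrt{\log(N)/T}$, and the $k_\xi/N$ term rather than a crude $1/\sqrt{N}$, requires carefully exploiting the double averaging over both $i$ and $s$ and the fact that $\cov(\bof_t,\bv_t)=0$, instead of bounding one index at a time via Cauchy--Schwarz. In particular, the $k_\xi/N$ contribution arises from isolating the expected value $N^{-1}\sum_i \E[\xi_{i,s}\xi_{i,s'}] = N^{-1}\tr\,\BS\Gamma_\xi(s-s')$, which grows only like $k_\xi/N$ by Assumption~\ref{sparsity}, separately from its fluctuation, for which stage~(i) delivers the remaining terms. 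Keeping this decomposition consistent across all the multi-index sums generated by substituting \eqref{eq.factor.diff} and its loading analogue is the bulk of the technical work.
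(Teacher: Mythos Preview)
Your overall strategy matches the paper's proof closely: Nagaev plus union bound for the primitive sample moments, Weyl-type perturbation for $\BS D_{NT,r}^2$, and then propagation through identity \eqref{eq.factor.diff}. Two points, however, need correction.

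First, for part~\ref{lem.1.part.D} you invoke $\|\BS\Gamma_\xi(0)\|_\infty \leq k_\xi M$, but this yields a bound that grows with $k_\xi$, whereas the statement asserts the $k_\xi$-free constant $M^4/(1-\rho^2)$. The paper instead uses the spectral-norm bound $\|\BS\Gamma_\xi(0)\|_2 \leq \sum_{j\geq 0}\|\BS B^{(j)}\|_2^2\|\BS\Sigma_v\|_2 \leq M^3/(1-\rho^2)$ from the VMA representation in Assumption~\ref{sparsity}, together with $\|\BS\Lambda\be_k/\sqrt{N}\|_2^2 \leq M$ from Assumption~\ref{ass.fac}. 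This matters downstream: part~\ref{lem.1.part.D} is invoked precisely to obtain the $1/\sqrt{N}$ gains in parts~\ref{lem.1.part.F}--\ref{lem.1.part.I} without picking up extra $k_\xi$ factors.

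Second, and more substantively, your treatment of part~\ref{lem.1.part.F} glosses over a recursion. After substituting \eqref{eq.factor.diff}, the third piece $\tfrac{1}{NT^2}\sum_{s,s',i'}\xi_{i,s}\xi_{i',s}\xi_{i',s'}\hat f_{l,s'}$ still contains $\hat{\bof}_{s'}$, which is not a primitive to which Nagaev applies. The paper splits $\hat f_{l,s'}=\be_l^\top\BS H_{NT}\bof_{s'}+\be_l^\top(\hat{\bof}_{s'}-\BS H_{NT}\bof_{s'})$; the first part gives a four-fold product $\xi\xi\xi f$ handled by centering on $\BS\Gamma_\xi(0)$ (whence $k_\xi/N$) and Nagaev, but the second part requires substituting \eqref{eq.factor.diff} \emph{once more}. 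It is this second-level substitution (the paper's ``$II_j$'' and ``$III$'' terms) that produces the components of $\gt$, and the recursion terminates only because after two substitutions one can afford the crude bound $\|\tfrac{1}{T}\sum_t\bxi_t\bxi_t^\top\|_\infty^2/N^2$. Without making this bootstrap explicit, ``Nagaev on products of four terms'' does not by itself deliver the stated rate.
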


\bigbreak
\begin{proof}[Proof of Theorem~\ref{thm.est.error.step.one}]
First note that we have the following representation
\begin{align*}w_{i,t}&=\BS \Lambda_i^\top \BS f_t- \hat{\BS \Lambda}_i^\top \hat{\BS f}_t=\BS \Lambda_i^\top \BS H_{NT}^{-1} [\BS H_{NT} \BS f_t - \hat{\BS f}_t]  + [(\BS H_{NT}^\top)^{-1} \BS \Lambda_i - \hat{\BS \Lambda}_i]^\top \BS H_{NT} \BS f_t  \\&+ [(\BS H_{NT}^\top)^{-1} \BS \Lambda_i - \hat{\BS \Lambda}_i]^\top[\BS H_{NT} \BS f_t - \hat{\BS f}_t],\end{align*}
Then, the first assertion in \eqref{corrupt_repres} follows by inserting the orders derived in Lemma~\ref{lemma.1}. 

For the second and third assertion of Theorem ~\ref{thm.est.error.step.one}, namely for the orders of $\|{1}/{T} \sum_{t=1}^T \BS w_t \BS \xi_t^\top \|_{\max}$ and $ \|{1}/{T} \sum_{t=1}^T \BS w_t \BS w_t^\top\|_{\max}$ respectively, we focus on the case $k=0$. The case $k=1$ follows by the same arguments. For the second assertion, we have by Lemma~\ref{lemma.1}
\small
\begin{equation*}
\begin{aligned}
        &\Big\|\frac{1}{T} \sum_{t=1}^T \BS w_t \BS \xi_t^\top \Big\|_{\max} \leq \|\BS \Lambda\|_{\max} \|\BS H_{NT}\|_{\max}^2 \|\BS D_{NT,r}^{-2}\|_{\max}\Bigg[
        \Big\| \frac{1}{NT} \sum_{t=1}^T \BS \Lambda^\top \BS \xi_t \BS \xi_t^\top \Big\|_{\max} \Big\|\frac{1}{T} \sum_{t=1}^T \BS f_t^\top \BS f_t\Big\|_{\max}+\\
        &+\max_i |Error_i|+\Big\|\frac{1}{T} \sum_{s=1}^T \BS f_s^\top \BS \xi_s\Big\|_{\max} \left(\Big\|\frac{1}{T} \sum_{t=1}^T \BS \xi_t \BS \xi_t^\top-\BS \Gamma_\xi(0)\Big\|_{\max}+\|\BS \Gamma_\xi(0)\|_\infty/N\right)\Bigg]+\Big\|\frac{1}{T} \sum_{t=1}^T \BS f_t^\top \BS \xi_t\Big\|_{\max}^2\\
        =&O_P\left(\frac{k_\xi}{N}+\frac{\sqrt{\log(N)}}{N\sqrt{T}}+(NT)^{2/\zeta-1}+\frac{\log(N)}{T}+(NT)^{4/\zeta}/T^2+\frac{k_\xi}{N}\left(\sqrt{(\log(N)/T})+N^{2/\zeta}T^{2/\zeta-1}\right)\right)\\
        &+O_P\left(\frac{{\log(N)}}{{T}}+\frac{k_\xi}{N}+\frac{\sqrt{\log(N)}}{\sqrt{NT}}+\gt\right).
    \end{aligned}
\end{equation*}\normalsize
For the third assertion, we have by Lemma~\ref{lemma.1}\small
\begin{align*}
        \Big\|\frac{1}{T} \sum_{t=1}^T& \BS w_t \BS w_t^\top\Big\|_{\max}\leq \|\BS \Lambda\|_{\max}^2 \|\BS H_{NT}\|_{\max}^4 \|\BS D_{NT,r}^{-2}\|_{\max}^2 \Bigg[ \Big\|\frac{1}{T} \sum_{s=1}^T \BS f_s^\top \BS f_s \Big\|^2_{\max} \Big\|\frac{1}{NT} \sum_{t=1}^T \BS \Lambda^\top \BS \xi_t\BS \xi_t^\top \BS \Lambda \Big\|_{\max}/N\\
        &+\Big\|\frac{1}{T} \sum_{s=1}^T \BS f_s^\top \BS f_s \Big\|_{\max} \Big\|\frac{1}{T} \sum_{s=1}^T \BS f_s^\top \BS \xi_s\Big\|_{\max}\left(2\Big\|\frac{1}{NT} \sum_{t=1}^T\BS \Lambda^\top \BS \xi_t \BS \xi_t\Big\|_{\max}+\Big\|\frac{1}{NT}\sum_{t=1}^T \BS \Lambda^\top \BS \xi_t \BS f_t\Big\|_{\max}\right) \\
        &+\Big\|\frac{1}{T} \sum_{s=1}^T \BS f_s^\top \BS \xi_s\Big\|_{\max}^2\left(\Big\|\frac{1}{T}\sum_{t=1}^T \BS \xi_t \BS \xi_t^\top -\BS \Gamma_\xi(0)\Big\|_{\max}+\|\BS \Gamma_\xi(0)\|_\infty/N\right)+\Big\|\frac{1}{T} \sum_{s=1}^T \BS f_s^\top \BS \xi_s\Big\|_{\max}^3\Bigg]\\
        &+\Big\|\frac{1}{T} \sum_{s=1}^T \BS f_s^\top \BS \xi_s\Big\|_{\max}^2\Big\|\frac{1}{T} \sum_{s=1}^T \BS f_s^\top \BS f_s \Big\|_{\max}+\max_i|Error_i|\\
        =&O_P\left(\frac{1}{N}+\left(\frac{\sqrt{\log(N)}}{\sqrt{T}}+(NT)^{2/\zeta}/T\right)\left(\frac{k_\xi}{N}+\frac{\sqrt{\log(N)}}{\sqrt{T}}+(NT)^{2/\zeta}/T\right)\right)\\
        &+O_P\left(\frac{{\log(N)}}{{T}}+\frac{k_\xi}{N}+\frac{\sqrt{\log(N)}}{\sqrt{NT}}+\gt\right)\\
        =&O_P\left(\frac{k_\xi}{N}+\frac{\log(N)}{T}+\frac{\sqrt{\log(N)}}{\sqrt{NT}}+\gt\right).
        \end{align*}\normalsize
\end{proof}


\begin{lemma} \label{lem_spars}
Under Assumption~\ref{sparsity} and if \small$$\left\|\frac{1}{T-p} \sum_{t=p+1}^T  (\hat \xi_{j,t}- \bbeta^\top \hat \bxi_{t-1}^v) \hat \bxi_{t-1}^v\right\|_{\max} \leq \hat \lambda/4, $$\normalsize and \small    $$\Theta^\top \frac{1}{T-p} \sum_{t=p+1}^T \hat \bxi_{t-1}^v (\hat \bxi_{t-1,j})^\top \Theta\geq \alpha \|\Theta\|_2^2-\hat\tau \|\Theta\|_1^2 \, \forall\, \Theta \in \R^{np},$$ \normalsize we have \small
$$
\| \hat{\bbeta}^{(j)}-{\bbeta}^{(j)}\|_2 \leq 16 \max\left(\sqrt{k} (\hat \lambda/\alpha)^{1-q/2},\sqrt{\hat \tau} s (\hat \lambda/\alpha\right)^{1-q},
$$\normalsize
and \small
$$
\|\hat{\bbeta}^{(j)}-{\bbeta}^{(j)}\|_1 \leq \max(68 k (\hat \lambda/\alpha)^{1-q},64 \sqrt{\tau}k^{3/2} (\hat \lambda/\alpha)^{1-3/2q}+4k(\hat \lambda/\alpha)^{1-q}).
$$\normalsize
\end{lemma}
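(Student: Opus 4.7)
The result is a standard oracle inequality for the lasso under approximate sparsity combined with a weakened restricted eigenvalue condition (featuring the slack term $\hat\tau\|\Theta\|_1^2$). My plan would be to follow the template of Corollary~2.4 in \cite{van2016estimation}, adapted to accommodate this slack. Step one is the basic inequality: since $\hat{\bbeta}^{(j)}$ minimises the penalised squared-error criterion, comparing its value to that at the oracle $\bbeta^{(j)}$ and setting $\Theta := \hat{\bbeta}^{(j)}-\bbeta^{(j)}$ yields, after expanding the quadratic,
\begin{equation*}
\Theta^\top \widehat{\bSigma}_T \Theta \leq 2\left\|\frac{1}{T-p}\sum_{t=p+1}^T(\hat\xi_{j,t}-\bbeta^{(j)\top}\hat{\bxi}_{t-1}^v)\hat{\bxi}_{t-1}^v\right\|_{\max}\|\Theta\|_1 + \hat\lambda(\|\bbeta^{(j)}\|_1-\|\hat{\bbeta}^{(j)}\|_1),
\end{equation*}
where $\widehat{\bSigma}_T$ is the empirical Gram matrix of $\hat{\bxi}^v_{t-1}$. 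By the first hypothesis the first term on the right is bounded by $(\hat\lambda/2)\|\Theta\|_1$.

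To handle approximate (rather than exact) sparsity I would threshold the oracle vector: pick a threshold $\lambda^* \asymp \hat\lambda/\alpha$ and set $S = \{i : |\beta^{(j)}_i| > \lambda^*\}$. Assumption~\ref{sparsity} then yields $|S| \leq k (\lambda^*)^{-q}$ and $\|\bbeta^{(j)}_{S^c}\|_1 \leq k (\lambda^*)^{1-q}$. Combining the decomposition $\|\bbeta^{(j)}\|_1 - \|\hat{\bbeta}^{(j)}\|_1 \leq \|\Theta_S\|_1 - \|\Theta_{S^c}\|_1 + 2\|\bbeta^{(j)}_{S^c}\|_1$ with the basic inequality produces a cone-type bound $\|\Theta_{S^c}\|_1 \leq 3\|\Theta_S\|_1 + C k(\hat\lambda/\alpha)^{1-q}$ and the key estimate
\begin{equation*}
\Theta^\top \widehat{\bSigma}_T \Theta \leq \tfrac{3\hat\lambda}{2}\|\Theta_S\|_1 + 2\hat\lambda \, k (\hat\lambda/\alpha)^{1-q}.
\end{equation*}
Applying the second hypothesis to the left-hand side and using $\|\Theta_S\|_1 \leq \sqrt{|S|}\|\Theta\|_2$ on the right then gives
\begin{equation*}
\alpha\|\Theta\|_2^2 \leq \hat\tau\|\Theta\|_1^2 + \tfrac{3\hat\lambda}{2}\sqrt{|S|}\|\Theta\|_2 + 2\hat\lambda k(\hat\lambda/\alpha)^{1-q},
\end{equation*}
a quadratic in $\|\Theta\|_2$ whose solution, after substituting the cone bound $\|\Theta\|_1 \leq 4\sqrt{|S|}\|\Theta\|_2 + C k(\hat\lambda/\alpha)^{1-q}$ into the slack term, splits into the two regimes in the conclusion.

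The main obstacle is bookkeeping: the $\hat\tau\|\Theta\|_1^2$ slack enters the quadratic through the square of the cone bound, so one has to argue by cases depending on whether the $\sqrt{k}(\hat\lambda/\alpha)^{1-q/2}$ contribution or the $\sqrt{\hat\tau}s(\hat\lambda/\alpha)^{1-q}$ contribution dominates; this is what produces the maximum in the $\ell_2$ assertion. Propagating this dichotomy back to an $\ell_1$ estimate via the cone bound then gives the two terms $k(\hat\lambda/\alpha)^{1-q}$ and $\sqrt{\hat\tau}k^{3/2}(\hat\lambda/\alpha)^{1-3q/2}$ in the second assertion. Choosing the exact constant in $\lambda^* = c \hat\lambda/\alpha$ so that the numerical constants $16$, $68$, and $64$ appear is elementary but tedious, and is the only step that really needs careful arithmetic once the reduction above is in place.
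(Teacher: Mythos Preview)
Your proposal is correct and matches the paper's argument essentially step for step: the paper, citing \cite{basu2015} and \cite{negahban2012unified}, derives the basic inequality from optimality, hard-thresholds $\bbeta^{(j)}$ at level $\eta$ to obtain $|J|\le k\eta^{-q}$ and $\|\bbeta^{(j)}-\bbeta^{(j)}_\eta\|_1\le k\eta^{1-q}$, obtains the cone bound $\|\Theta_{J^c}\|_1\le 3\|\Theta_J\|_1+4k\eta^{1-q}$, plugs in the restricted-eigenvalue condition with $\hat\tau$-slack, and then sets $\eta=\hat\lambda/\alpha$ before solving the resulting quadratic by cases. The only cosmetic difference is that the paper fixes the threshold as $\eta=\hat\lambda/\alpha$ rather than leaving a tunable constant $c$.
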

\bigbreak
\begin{proof}[Proof of Theorem~\ref{thm.est.error.idio}]
The idea is to determine the order of the quantities $\hat \lambda_T$ and $\hat \tau$ in Lemma~\ref{lem_spars}. For this, first note that since $(\xi_{j,t}- \BS \beta_j^\top \BS \xi_{t-1}^v)\BS\xi_{t-1}^v=v_{j,t} \BS \xi_{t-1}$ and $\mathbb{E}v_{j,t} \BS \xi_{t-1}^v=0$, we have $\|\frac{1}{T-p} \sum_{t=p+1}^T (\BS\xi_{j,t}- \BS \beta_j^\top \BS \xi_{t-1}^v)\BS \xi_{t-1}^v\|_{\max}=O_P(\sqrt{\log(Np)/T}+(NpT)^{2/\zeta}/T)$ by the same arguments as in the proof of Lemma~\ref{lemma.1} \ref{lem.1.part.D} and note that $\BS\xi_{t-1}^v$ is of dimension $Np$. Additionally, we have $\max_j\|\BS \beta_j\|_1\leq M^{1-q} k$ and \small \begin{equation*}\begin{aligned}\hat\lambda_T&=\left\|\frac{1}{T-p} \sum_{t=p+1}^T (\hat \xi_{j,t}- \bbeta^\top \hat \xi_{t-1}^v)\hat {\BS \xi}_{t-1}^v\right\|_{\max}\leq  \left\|\frac{1}{T-p} \sum_{t=p+1}^T (v_{j,t})\BS \xi_{t-1}^v\right\|_{\max}+\left\|\frac{1}{T-p} \sum_{t=p+1}^T v_{j,t} \BS w_{t-1}^v\right\|_{\max}\\ &\left\|\frac{1}{T-p} \sum_{t=p+1}^T  w_{j,t}\BS \xi_{t-1}^v+w_{j,t}\BS w_{t-1}^v\right\|_{\max}+
\|\bbeta\|_1\left\|\frac{1}{T-p} \sum_{t=p+1}^T\BS w_{t-1}^v (\BS \xi_{t-1}^v)^\top+\BS w_{t-1}^v (\BS w_{t-1}^v)^\top\right\|_{\max}\\
&=O_P\Bigg(\sqrt{\log(Np)/T}+(NpT)^{2/\zeta}/T+k\Bigg(\frac{k_\xi}{N}+\frac{\log(Np)}{T}+\frac{\sqrt{\log(Np)}}{\sqrt{NT}}+\gt\Bigg).
\end{aligned}\end{equation*}\normalsize

Let $\Gamma=\var((\BS \xi_t^\top,\dots,\BS\xi_{t-p+1}^\top)^\top)$ and $\hat \Gamma=\frac{1}{T-p} \sum_{t=p}^{T-1} \hat{\BS \xi}_{t}^v (\hat{\BS\xi}_{t}^v)^\top$. We have for $\Theta\in \R^{Np}$ $\Theta^\top \hat \Gamma \Theta=\Theta^\top \Gamma \Theta+\Theta^\top(\hat \Gamma-\Gamma)\Theta\geq \alpha \|\Theta\|_2^2  - \|\Theta\|_1^2\|\hat \Gamma-\Gamma\|_{\max}\geq\alpha \|\Theta\|_2^2  - \|\Theta\|_1^2\|(\|\Gamma-\frac{1}{T-p} \sum_{t=p}^{T-1} {\BS \xi}_{t}^v ({\BS\xi}_{t}^v)^\top\|_{\max}+2\|\frac{1}{T-p} \sum_{t=p}^{T-1} {\BS \xi}_{t}^v ({\BS w}_{t}^v)^\top\|_{\max}+\{\frac{1}{T-p} \sum_{t=p}^{T-1} \hat{\BS w}_{t}^v ({\BS w}_{t}^v)^\top\|_{\max})=:\alpha\|\Theta\|_2^2+\hat \tau\|\Theta\|_1^2$. With the results of Theorem~\ref{thm.est.error.step.one} and Lemma~\ref{lemma.1}, we have  $\hat \tau=O_P(\sqrt{\log(Np)/T}+(NpT)^{2/\zeta}/T+k\Big(\frac{k_\xi}{N}+\frac{\log(Np)}{T}+\frac{\sqrt{\log(Np)}}{\sqrt{NT}}+(NpT)^{2/\zeta}(\frac{k_\xi}{NT}+\frac{1}{\sqrt{N}T}+\frac{1}{T^{3/2}}+(NpT)^{2/\zeta}\frac{1}{T^2})\Big)$. That means $\hat \tau=O_p(\hat \lambda_T)$. By plugging this into Lemma~\ref{lem_spars} we then obtain
$
\|\hat \beta_j-\beta_j\|_2=O_P(\sqrt{k} (\hat \lambda_T)^{1-q/2}+k(\hat \lambda_T)^{3/2-q/2})
$
and \\
$
\|\hat \beta_j-\beta_j\|_1 =O_P(k (\hat \lambda)^{1-q}+k^{3/2}(\hat \lambda_T)^{3/2(1-q)})=O_P(k (\hat \lambda)^{1-q}).
$

\end{proof}

\begin{proof}[Proof of Theorem~\ref{thm.prediction}]
First note that $\BS e_j^\top(\hat {\BS X}_{T+1}^{1,p_2}-\BS X_{T+1}^{(1,p_2)})=\hat {\BS \beta}_j\hat {\BS{\xi}}_T^v - \BS \beta_j {\BS{\xi}}_T^v+\hat {\BS \Lambda}_j^\top \sum_{i=1}^{p_2} \hat {\BS \Pi}_i^{(p_2)} \hat {\BS f}_{T+1-i}- \BS \Lambda_j^\top \BS H_{NT}^{-1} \BS H_{NT} \sum_{i=1}^{p_2} {\BS \Pi}_i^{(p_2)} \BS H_{NT}^{-1} \BS H_{NT} {\BS f}_{T+1-i}$. Then, the results derived in  Theorem~\ref{thm.est.error.step.one}, \ref{thm.est.error.idio} and Lemma~\ref{lemma.1} can be plugged in. Note further that due to due to Assumption~\ref{ass.moments} and \ref{ass.fac} we have\\ $\BS \Lambda \BS \xi_t/N=\|\BS \Lambda\|_2/N \BS \Lambda/\|\BS \Lambda\|_2 \BS \xi_t=O_P(1/\sqrt{N})$ appearing in $\hat{\boldsymbol{f_t}}-\BS H_{NT}\boldsymbol{f_t}$  and the assertion follows.
\end{proof}

\begin{lemma}\label{lemma.spectral.density.factor}
Under Assumption~\ref{ass.moments}, \ref{ass.fac}, \ref{sparsity.b} and Assumption~1 and 2 in \cite{wu2018asymptotic} we  have the following \small
\begin{align*}
    \| \boldsymbol{\hat{f}}_f(\omega)-\BS H_{NT}\BS f_f(\omega)\BS H_{NT}^\top\|_{\max}=O_P\Bigg(&\sqrt{B_T/T}+\frac{1}{B_T}\Bigg(\log(N)+\frac{k_\xi T}{N}+\frac{\sqrt{\log(N)T}}{\sqrt{N}}\\
    &+(NT)^{2/\zeta}T\left(\frac{1}{\sqrt{N}T}+\frac{1}{T^{3/2}}+(NT)^{2/\zeta}\frac{1}{T^2}\right)\Bigg)\Bigg),
\end{align*}

\begin{align*}
    \|\boldsymbol{\hat f}_f^{-1}(\omega)-(\BS H_{NT}\boldsymbol{f}_f(\omega)\BS H_{NT}^\top)^{-1}\|_{l}=O_P\Bigg(&\sqrt{B_T/T}+\frac{1}{B_T}\Bigg(\log(N)+\frac{k_\xi T}{N}+\frac{\sqrt{\log(N)T}}{\sqrt{N}}\\
    &+(NT)^{2/\zeta}T\left(\frac{1}{\sqrt{N}T}+\frac{1}{T^{3/2}}+(NT)^{2/\zeta}\frac{1}{T^2}\right)\Bigg)\Bigg), l \in [1,\infty].
\end{align*}
\end{lemma}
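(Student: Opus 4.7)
The plan is to split the total error into two pieces. Let $\tilde{\BS f}_f(\omega)$ denote the infeasible lag-window estimator computed from the rotated true factors $\bH_{NT}\BS f_1,\dots,\bH_{NT}\BS f_T$; then
\[
\hat{\BS f}_f(\omega) - \bH_{NT}\BS f_f(\omega)\bH_{NT}^\top \;=\; \bigl[\hat{\BS f}_f(\omega) - \tilde{\BS f}_f(\omega)\bigr] \;+\; \bigl[\tilde{\BS f}_f(\omega) - \bH_{NT}\BS f_f(\omega)\bH_{NT}^\top\bigr].
\]
The second summand is the ordinary smoothed-periodogram error for an $r$-dimensional process. Since $r$ is fixed, $\{\BS f_t\}$ is a geometrically-decaying linear filter of an i.i.d.~sequence with finite $\zeta$-th moments (Assumption~\ref{ass.moments}, and Remark~\ref{remark.functional.dependence}), so the results of \cite{wu2018asymptotic} apply componentwise and deliver the first, $O_P(\sqrt{B_T/T})$, piece of the bound.

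For the feasible-versus-infeasible piece I would write
\[
\hat{\BS f}_f(\omega) - \tilde{\BS f}_f(\omega) \;=\; \frac{1}{2\pi}\sum_{|h|<T} K(h/B_T)\,e^{-ih\omega}\,\Delta\Gamma(h),\qquad \Delta\Gamma(h):=\hat{\BS \Gamma}_f(h)-\bH_{NT}\tilde{\BS \Gamma}_f(h)\bH_{NT}^\top,
\]
and expand
\[
\Delta\Gamma(h) = \tfrac{1}{T}\sum_t(\hat{\BS f}_{t+h}-\bH_{NT}\BS f_{t+h})\hat{\BS f}_t^\top + \tfrac{1}{T}\sum_t\bH_{NT}\BS f_{t+h}(\hat{\BS f}_t-\bH_{NT}\BS f_t)^\top.
\]
Substituting the explicit representation of $\hat{\BS f}_t-\bH_{NT}\BS f_t$ from Lemma~\ref{lemma.1}\ref{lem.1.part.G} reduces each summand to a finite collection of bilinear forms in $\BS f_\cdot$, $\BS\xi_{\cdot,\cdot}$ and $\BS\Lambda$, together with a uniform-in-$t$ remainder of the order stated in that lemma, times $O_P(1)$ factors $\|\BS D_{NT,r}^{-2}\|_2$, $\|\bH_{NT}\|_2$.

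The crucial step is then to interchange $\sum_h$ and $\sum_t$, absorbing $\sum_h K(h/B_T) e^{-ih\omega}\BS f_{t+h}$ (and its $\xi$-analogue) into smoothed sequences whose fluctuations are controlled by the kernel energy $\sum_h K^2(h/B_T)=O(B_T)$ via a Parseval-type argument, while the lead-lag cross-terms of order $1/N$ and $k_\xi/N$ pick up the $\ell^1$-mass $O(B_T)$ of $K(\cdot/B_T)$. Combined with the deviation bounds established in the proof of Theorem~\ref{thm.est.error.step.one} (based on the Nagaev inequalities of \cite{liu2013probability,wu2016performance}), this delivers the $\tfrac{1}{B_T}\bigl(\cdots\bigr)$ piece of the stated bound: the factor $T/B_T$ relative to the raw bound of Theorem~\ref{thm.est.error.step.one} comes from the ratio of the kernel mass $O(B_T)$ to the sample size $T$ used in the normalization of $\hat{\BS \Gamma}_f$.

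For the inverse-spectral-density bound I would use the identity
\[
\hat{\BS f}_f^{-1}(\omega) - (\bH_{NT}\BS f_f(\omega)\bH_{NT}^\top)^{-1} = \hat{\BS f}_f^{-1}(\omega)\bigl[\bH_{NT}\BS f_f(\omega)\bH_{NT}^\top - \hat{\BS f}_f(\omega)\bigr](\bH_{NT}\BS f_f(\omega)\bH_{NT}^\top)^{-1},
\]
together with $\|\BS f_f(\omega)^{-1}\|_2=O(1)$ (from positive-definiteness of $\BS\Sigma_u$ and stability of the factor filter in Assumption~\ref{ass.moments}) and $\|\bH_{NT}\|_2=O_P(1)$, to convert the forward bound into the inverse bound. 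Because $r$ is fixed, all matrix norms $\|\cdot\|_l$ with $l\in[1,\infty]$ are equivalent to $\|\cdot\|_{\max}$ up to $r$-dependent constants, so the same rate propagates to every $l$. The main obstacle will be the kernel-weighted bookkeeping in the middle step: tracking the $h$-dependence of the many cross-terms generated by Lemma~\ref{lemma.1}\ref{lem.1.part.G} carefully enough so that the summation over $|h|<T$ gives the sharp $T/B_T$ scaling rather than a naive $B_T$-fold inflation of the Theorem~\ref{thm.est.error.step.one} rate.
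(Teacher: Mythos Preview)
Your overall skeleton matches the paper: the same feasible/infeasible split with $\tilde{\BS f}_f$, the same citation of \cite{wu2018asymptotic} for the infeasible $O_P(\sqrt{B_T/T})$ piece, and the same $A^{-1}-B^{-1}=B^{-1}(B-A)A^{-1}$ identity together with fixed-$r$ norm equivalence for the inverse bound. The difference is entirely in how you handle $\hat{\BS f}_f(\omega)-\tilde{\BS f}_f(\omega)$, and there you make the argument far more complicated than the paper does.

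The paper does \emph{not} interchange $\sum_h$ and $\sum_t$, does not invoke any Parseval-type argument, and does not substitute the representation from Lemma~\ref{lemma.1}\ref{lem.1.part.G}. It simply writes $\Delta\Gamma(h)$ as the sum of the two cross-terms $\frac{1}{T}\sum_t \BS H_{NT}\BS f_{t+h}(\hat{\BS f}_t-\BS H_{NT}\BS f_t)^\top$ and its transpose, plus the quadratic term $\frac{1}{T}\sum_t(\hat{\BS f}_{t+h}-\BS H_{NT}\BS f_{t+h})(\hat{\BS f}_t-\BS H_{NT}\BS f_t)^\top$, and then observes that Lemma~\ref{lemma.1}\ref{lem.1.part.H} and \ref{lem.1.part.I} (whose proofs work verbatim at any fixed lag $h$) bound each of these uniformly in $h$ by $\frac{\log N}{T}+\frac{k_\xi}{N}+\frac{\sqrt{\log N}}{\sqrt{NT}}+\gt$. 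Since $K$ has support in $(-1,1)$, only $O(B_T)$ lags contribute, and the triangle inequality over $h$ gives $B_T$ times that rate. That is the entire argument.

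Your proposed Parseval/interchange route is therefore unnecessary, and the heuristic you give for the $T/B_T$ factor (``ratio of the kernel mass $O(B_T)$ to the sample size $T$'') does not hold together: the $1/T$ normalization is already inside each $\hat{\BS\Gamma}_f(h)$, so summing $O(B_T)$ terms each of size $O_P(r_1)$ yields $B_T\,r_1$, not $(T/B_T)\,r_1$. In fact the displayed rate in the lemma, rewritten, is exactly $\frac{T}{B_T}\cdot r_1$, whereas the paper's own proof delivers $B_T\cdot r_1$; the discrepancy appears to be a typographical slip in the statement. You seem to have taken the displayed form at face value and sketched a sharper argument to match it, but that sharper rate is neither established nor needed by the paper's proof.
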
\normalsize

\begin{lemma} \label{lem.rate.spectral.var}
Under Assumption~\ref{ass.moments}, \ref{ass.fac}, \ref{sparsity.b} we  have the following:\small
\begin{align*}
    \| &\hat {\BS \Sigma}_v^{-1,CLIME}-{\BS \Sigma}_v^{-1}\|_l=O_P\Bigg(k_v \|{\BS \Sigma}_v^{-1}\|_1 \Bigg[\sqrt{(\log(N)/T})+N^{2/\zeta}T^{2/\zeta-1}+k\Big[\frac{k_\xi}{N}+\frac{\log(N)}{T}+\frac{\sqrt{\log(N)}}{\sqrt{NT}}\\
    &+(NT)^{2/\zeta-1}k_\xi+\frac{(NT)^{4/\zeta}}{T^2}\Big]+(\sqrt{(\log(Np)/T})+(Np)^{2/\zeta}T^{2/\zeta-1})\Bigg(k\Bigg[\sqrt{\log(Np)/T}+(NpT)^{2/\zeta}/T\\
    &+k\Bigg(\frac{k_\xi}{N}+\frac{\sqrt{\log(Np)}}{\sqrt{NT}}+(NpT)^{2/\zeta}\Bigg(\frac{k_\xi}{NT}+\frac{1}{\sqrt{N}T}+\frac{1}{T^{3/2}}+(NpT)^{2/\zeta}\frac{1}{T^2}\Bigg)\Bigg)\Bigg]^{1-q}\Bigg)\Bigg]^{1-q_v}\Bigg), l \in [1,\infty],
\end{align*}

\begin{align*}
    \|\BS f_\xi(\omega)^{-1}-\hat{\BS f}_\xi(\omega)^{-1}\|_l=O_P( k^2 (\|{\BS \Sigma}_v^{-1}-\hat {\BS \Sigma}_v^{-1,CLIME}\|_l+\max_s  \|\hat \bbeta^{(s)}- \bbeta^{(s)}\|_2^{1-q} \|{\BS \Sigma}_v^{-1}\|_l ), l \in [1,\infty],
\end{align*}
\begin{align*}
    \|\BS f_\xi(\omega)^{-1}-\hat{\BS f}_\xi(\omega)^{-1}\|_2=O_P(\|{\BS \Sigma}_v^{-1}-\hat {\BS \Sigma}_v^{-1,CLIME}\|_l+k\max_s  \|\hat \bbeta^{(s)}- \bbeta^{(s)}\|_2^{1-q} ).
\end{align*}
\normalsize
If $N=T^a, p=T^b$ for some $a,b>0$, $\zeta\geq 4(1+a+b)$ and $k$ such that $\|\hat \A-\A\|_\infty=o_P(1)$, these error bounds simplify to 
\small
\begin{align*}
    \| &\hat {\BS \Sigma}_v^{-1,CLIME}-{\BS \Sigma}_v^{-1}\|_l=O_P\Bigg(k_v \|{\BS \Sigma}_v^{-1}\|_1 \Bigg[\sqrt{(\log(N)/T})+k\Big[\frac{k_\xi}{N}+\frac{\log(N)}{T}+\frac{\sqrt{\log(N)}}{\sqrt{NT}}\Big]\Bigg]^{1-q_v}\Bigg), l \in [1,\infty],
\end{align*}

\begin{align*}
    \|\BS f_\xi(\omega)^{-1}-\hat{\BS f}_\xi(\omega)^{-1}\|_l=O_P\Bigg(& k^2 \|{\BS \Sigma}_v^{-1}\|_1 \Big(k_v  \Big[\sqrt{(\log(N)/T})+k\Big[\frac{k_\xi}{N}+\frac{\log(N)}{T}+\frac{\sqrt{\log(N)}}{\sqrt{NT}}\Big]\Big]^{1-q_v}+\\
    & \sqrt{k} [\sqrt{\log(Np)/T}+k k_\xi/N+k \sqrt{\log(Np)/(NT)}]^{1-q/2} \Big)\Bigg), l \in [1,\infty],
\end{align*}
\begin{align*}
    \|\BS f_\xi(\omega)^{-1}-\hat{\BS f}_\xi(\omega)^{-1}\|_2&=O_P\Big(k_v \|{\BS \Sigma}_v^{-1}\|_1 \Bigg[\sqrt{(\log(N)/T})+k\Big[\frac{k_\xi}{N}+\frac{\log(N)}{T}+\frac{\sqrt{\log(N)}}{\sqrt{NT}}\Big]\Bigg]^{1-q_v}\\&+k^{3/2} [\sqrt{\log(Np)/T}+k k_\xi/N+k \sqrt{\log(Np)/(NT)}]^{1-q/2} \Big).
\end{align*}
\end{lemma}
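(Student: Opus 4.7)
\textbf{Proof proposal for Lemma~\ref{lem.rate.spectral.var}.}
The plan has four stages: (i) control the residual-based covariance $\hat{\BS\Sigma}_v$; (ii) feed this into the standard CLIME error bound to control $\hat{\BS\Sigma}_v^{-1,CLIME}-\BS\Sigma_v^{-1}$; (iii) upgrade the row-wise rates of Theorem~\ref{thm.est.error.idio} to row- and column-wise rates for the thresholded $\hat{\BS A}^{(thr,j)}$; (iv) decompose $\hat{\BS f}_\xi^{-1}-\BS f_\xi^{-1}$ through the quadratic form and combine the three ingredients.

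\emph{Stage (i).} Write $\hat{\BS v}_t = \hat{\BS\xi}_t - \sum_{j=1}^p \hat{\BS A}^{(j)} \hat{\BS\xi}_{t-j}$ and $\BS v_t = \BS\xi_t - \sum_j \BS A^{(j)} \BS\xi_{t-j}$. Expanding $\hat{\BS v}_t \hat{\BS v}_t^\top - \BS v_t \BS v_t^\top$ produces cross terms involving $\BS w_t = \hat{\BS\xi}_t - \BS\xi_t$ (controlled by Theorem~\ref{thm.est.error.step.one}) and $\hat{\BS A}^{(j)} - \BS A^{(j)}$ (controlled by Theorem~\ref{thm.est.error.idio}), together with the purely stochastic piece $\frac{1}{T-p}\sum_t \BS v_t \BS v_t^\top - \BS\Sigma_v$. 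The purely stochastic piece has $\|\cdot\|_{\max}$ error of order $\sqrt{\log(N)/T} + N^{2/\zeta}T^{2/\zeta-1}$ by the Nagaev-type bounds used throughout Lemma~\ref{lemma.1}. The cross terms are handled by applying $\|\cdot\|_{\max}\le \|\cdot\|_\infty\|\cdot\|_{\max}$ and the row-wise bound $\max_j \|\hat{\bbeta}^{(j)}-\bbeta^{(j)}\|_1$ from Theorem~\ref{thm.est.error.idio}, together with $\|\BS A\|_\infty\le k$. Summing these contributions yields the bracketed expression appearing in the lemma.

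\emph{Stage (ii).} Given $\|\hat{\BS\Sigma}_v - \BS\Sigma_v\|_{\max} = O_P(\delta_T)$ with $\delta_T$ the bracketed quantity from Stage (i), I will invoke the CLIME rate of \cite{cai2011constrained,cai2016estimating,cai2016estimating2}, which under the $q_v$-approximate column sparsity of $\BS\Sigma_v^{-1}$ with level $k_v$ gives $\|\hat{\BS\Sigma}_v^{-1,CLIME} - \BS\Sigma_v^{-1}\|_l = O_P(k_v \|\BS\Sigma_v^{-1}\|_1 \delta_T^{1-q_v})$ for $l\in[1,\infty]$. This is the first displayed bound.

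\emph{Stage (iii).} The thresholded estimator $\hat{\BS A}^{(thr,j)}=\THRarg{\lambda_\xi}(\hat{\BS A}^{(j)})$ with $\lambda_\xi$ of the order of the entrywise error of $\hat{\BS A}^{(j)}$ turns an entrywise bound into a column-wise $\|\cdot\|_1$ bound of the form $\max_s \|\BS e_s^\top(\hat{\BS A}^{(thr,j)}-\BS A^{(j)})\|_1 = O_P(k \cdot \eta_T^{1-q})$ and similarly for rows, where $\eta_T$ is the entrywise error derived from the $\ell_2$ bound in Theorem~\ref{thm.est.error.idio} (see the general thresholding calculus in \cite{cai2011adaptive}). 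Consequently $\|\hat{\BS A}^{(thr,j)}-\BS A^{(j)}\|_l \lesssim k\,\eta_T^{1-q}$ for $l\in\{1,\infty\}$, and $\|\cdot\|_2$ is bounded by the same quantity via $\|\cdot\|_2\le\sqrt{\|\cdot\|_1\|\cdot\|_\infty}$.

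\emph{Stage (iv).} Write $\BS A(\omega)=\BS I_N-\sum_{j=1}^p \BS A^{(j)}e^{-ij\omega}$ and the analogous $\hat{\BS A}(\omega)$ with the thresholded slopes, so that $\BS f_\xi(\omega)^{-1} = \BS A(\omega)^* \BS\Sigma_v^{-1} \BS A(\omega)$. Then
\begin{align*}
\hat{\BS f}_\xi(\omega)^{-1}-\BS f_\xi(\omega)^{-1}
=& (\hat{\BS A}-\BS A)^* \hat{\BS\Sigma}_v^{-1,CLIME}\hat{\BS A}
+ \BS A^* (\hat{\BS\Sigma}_v^{-1,CLIME}-\BS\Sigma_v^{-1})\hat{\BS A} \\
& + \BS A^* \BS\Sigma_v^{-1}(\hat{\BS A}-\BS A).
\end{align*}
Bounding each term by submultiplicativity of $\|\cdot\|_l$ and using $\|\BS A\|_l, \|\hat{\BS A}\|_l = O(k)$ together with Stages (ii)–(iii) gives the two displayed $\hat{\BS f}_\xi^{-1}$ bounds (for $l\in[1,\infty]$ and for $l=2$ one uses $\|\BS A(\omega)\|_2$ bounded by Assumption~\ref{sparsity.b} instead of $k$). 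The simplified ``If $N=T^a,\dots$'' bounds are then obtained by dropping the subdominant $\zeta$-polynomial and $\gt$ contributions in the manner explained after Theorem~\ref{thm.est.error.idio}.

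\emph{Main obstacle.} The delicate step is Stage (iii): converting the row-wise $\ell_1/\ell_2$ rates of Theorem~\ref{thm.est.error.idio} into the row- \emph{and} column-wise operator norm bound needed to control $\hat{\BS A}(\omega)$ in $\|\cdot\|_1$. Careful calibration of the threshold $\lambda_\xi$ relative to the entrywise rate derived from $\|\hat\bbeta^{(j)}-\bbeta^{(j)}\|_2$ is required so that genuine zeros are killed with probability tending to one while approximately-sparse entries contribute only at rate $k\eta_T^{1-q}$; this is also where Assumption~\ref{sparsity.b}(i) (column-wise approximate sparsity of $\BS A^{(j)}$) is decisively used, and it dictates why the spectral-density rate tolerates a smaller maximal sparsity than the prediction rate of Theorem~\ref{thm.prediction}.
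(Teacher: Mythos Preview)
Your proposal is correct and follows essentially the same route as the paper. The paper's proof proceeds by (i) bounding $\|\hat{\BS\Sigma}_v-\BS\Sigma_v\|_{\max}$ via the decomposition $\BS v_t-\hat{\BS v}_t=\BS w_t+\sum_j\BS A^{(j)}\BS w_{t-j}+\sum_j(\hat{\BS A}^{(j)}-\BS A^{(j)})\BS\xi_{t-j}+\sum_j(\hat{\BS A}^{(j)}-\BS A^{(j)})\BS w_{t-j}$ together with the Nagaev bound for $\|\tilde{\BS\Sigma}_v-\BS\Sigma_v\|_{\max}$, exactly as in your Stage~(i); (ii) invoking the CLIME rate of \cite{cai2011constrained} as in your Stage~(ii); and then outsourcing your Stages~(iii)--(iv) to Theorems~1 and~6 of \cite{krampe2020statistical}, whose content is precisely the thresholding-to-column-norm conversion and the quadratic-form telescoping you wrote out explicitly. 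In particular, the paper also takes the entrywise rate for $\hat\A-\A$ from the $\ell_2$ bound of Theorem~\ref{thm.est.error.idio} (via $\|\hat\A-\A\|_{\max}\le\max_s\|\hat\bbeta^{(s)}-\bbeta^{(s)}\|_2$) before applying the thresholding lemma, matching your calibration of $\lambda_\xi$.
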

\normalsize
\bigbreak
\begin{proof}[Proof of Theorem~\ref{thm.spectral.density}]
We have \small
\begin{align}
    &\|\boldsymbol{f}_X(\omega)^{-1}-\boldsymbol{\hat f}_X(\omega)^{-1}\|_l\leq \label{eq.spec.DFM}\\
&\|\BS f_\xi^{-1}(\omega)-\hat{\BS f}_\xi^{-1}(\omega)\|_l+\|\hat{\BS f}_\xi^{-1}(\omega) \hat \Lambda \Big(\boldsymbol{\hat f}_f^{-1}(\omega)/N+\boldsymbol{\hat \Lambda}^\top/\sqrt{N} \hat{\BS f}_\xi^{-1}(\omega)\boldsymbol{ \hat \Lambda}/\sqrt{N}\Big)^{-1} \boldsymbol{\hat \Lambda}^\top/N \hat{\BS f}_\xi^{-1}(\omega) \nonumber\\-&\BS f_\xi^{-1}(\omega) \BS \Lambda \BS H_{NT}^{-1} \Big((\BS H_{NT}^{-1})^\top \boldsymbol{f}_f^{-1}(\omega)\BS H_{NT}^{-1}/N+(\BS H_{NT}^{-1})^\top\BS \Lambda^\top /\sqrt{N}\BS f_\xi^{-1}(\omega) \BS \Lambda \BS H_{NT}^{-1}/\sqrt{N}\Big)^{-1} (\BS H_{NT}^{-1})^\top \BS \Lambda^\top/N \BS f_\xi^{-1}(\omega)
\|_l \nonumber, 
\end{align}\normalsize
Let \small$G=((\BS H_{NT}^{-1})^\top  \ f_f^{-1}(\omega)\BS H_{NT}^{-1}/N+(\BS H_{NT}^{-1})^\top\BS \Lambda^\top \BS f_\xi^{-1}(\omega) \BS \Lambda \BS H_{NT}^{-1}/N)^{-1}$\normalsize and \small$\hat G=(\boldsymbol{\hat f}_f^{-1}(\omega)/N+\boldsymbol{\hat  \Lambda}^\top \hat{\BS f}_\xi^{-1}(\omega) \boldsymbol{\hat \Lambda})^{-1}/N$. \normalsize Lemma~\ref{lem.rate.spectral.var} gives a rate for $\|\BS f_\xi^{-1}(\omega)-\hat{\BS f}_\xi^{-1}(\omega)\|_l$. Furthermore, the second term on the right hand side of \eqref{eq.spec.DFM} is smaller or equal to:\small
\begin{align*}
&\|\hat{\BS f}_\xi^{-1}(\omega) \boldsymbol{\hat \Lambda}-{\BS f}_\xi^{-1}(\omega) \BS \Lambda (\BS H_{NT}^{-1})\|_l \| G\|_l \|(\BS H_{NT}^{-1})^\top \BS \Lambda^\top/N \BS f_\xi^{-1}(\omega)\|_l + \| \BS f_\xi^{-1}(\omega) \BS \Lambda \BS H_{NT}^{-1}\|_l  \|G-\hat G\|_l \|(\BS H_{NT}^{-1})^\top \BS \Lambda^\top/N \BS f_\xi^{-1}(\omega)\|_l \\
&+ \| \BS f_\xi^{-1}(\omega) \BS \Lambda \BS H_{NT}^{-1}\|_l  \|G\|_l \| \boldsymbol{\hat \Lambda}^\top/N \hat{\BS f}_\xi^{-1}(\omega)-(\BS H_{NT}^{-1})^\top \BS \Lambda^\top/N \BS f_\xi^{-1}(\omega)\|_l \\
&+
\|\hat{\BS f}_\xi^{-1}(\omega) \boldsymbol{\hat \Lambda}-{\BS f}_\xi^{-1}(\omega) \BS \Lambda (\BS H_{NT}^{-1})\|_l \|G-\hat G\|_l \|(\BS H_{NT}^{-1})^\top \BS \Lambda^\top/N \BS f_\xi^{-1}(\omega)\|_l\\
&+\| \BS f_\xi^{-1}(\omega) \BS \Lambda \BS H_{NT}^{-1}\|_l \|G-\hat G\|_l \| \boldsymbol{\hat \Lambda}^\top/N \hat{\BS f}_\xi^{-1}(\omega)-(\BS H_{NT}^{-1})^\top \BS \Lambda^\top/N \BS f_\xi^{-1}(\omega)\|_l\\
&+\|\hat{\BS f}_\xi^{-1}(\omega) \boldsymbol{\hat \Lambda}-{\BS f}_\xi^{-1}(\omega) \BS \Lambda (\BS H_{NT}^{-1})\|_l \| G\|_l \| \boldsymbol{\hat \Lambda}^\top/N \hat{\BS f}_\xi^{-1}(\omega)-(\BS H_{NT}^{-1})^\top \BS \Lambda^\top/N \BS f_\xi^{-1}(\omega)\|_l\\
&+\|\hat{\BS f}_\xi^{-1}(\omega) \boldsymbol{\hat \Lambda}-{\BS f}_\xi^{-1}(\omega) \BS \Lambda (\BS H_{NT}^{-1})\|_l \|G-\hat G\|_l \| \boldsymbol{\hat \Lambda}^\top/N \hat{\BS f}_\xi^{-1}(\omega)-(\BS H_{NT}^{-1})^\top \BS \Lambda^\top/N \BS f_\xi^{-1}(\omega)\|_l,
\end{align*}\normalsize
$G$ is of fixed dimension $r\times r$ and we first show that $\|G\|_l=O(1), l \in [1,\infty]$. For this, we have \small$$\|G\|_2\leq \Big(\sigma_{\min}((\BS H_{NT}^{-1})^\top \BS f_f^{-1}(\omega)\BS H_{NT}^{-1}/N)+\sigma_{\min}((\BS H_{NT}^{-1})^\top\BS \Lambda^\top \BS f_\xi^{-1}(\omega) \BS \Lambda \BS H_{NT}^{-1}/N)\Big)^{-1}.$$\normalsize Note that Lemma~\ref{lemma.1}, \ref{lem.1.part.0}) implies that $1/\sigma_{\min}(\BS H_{NT})=O(1)$ and $1/\sigma_{\min}(\BS H_{NT}^{-1})=O(1)$ and we have for symmetric matrices $A,B$, $1/\sigma_{\min}(AB)\leq 1/(\sigma_{\min} (A) \sigma_{\min}(B))$. Hence, $\sigma_{\min}((\BS H_{NT}^{-1})^\top f_f^{-1}(\omega)\BS H_{NT}^{-1}/N)=O(1/N)$. Furthermore, let $\tilde {\BS \Lambda}=(\BS \Lambda^\top \BS \Lambda/N)^{-1/2} \BS \Lambda$. Note that $\BS \Lambda^\top \BS \Lambda/N=\BS \Sigma_\Lambda+o(1)$ and $\BS \Sigma_\Lambda$ is positive definite by Assumption~\ref{ass.fac} and  $\sigma_{\min}(\BS \Lambda^\top \BS \Lambda/N)>1/M>0$. Then, $\tilde {\BS \Lambda}^\top \tilde {\BS \Lambda}/N=I_r$ and we have by Poincare's separation theorem $\sigma_{\min}((\BS H_{NT}^{-1})^\top\BS \Lambda^\top \BS f_\xi^{-1}(\omega) \BS \Lambda \BS H_{NT}^{-1}/N)\geq \sigma_{\min}(\BS H_{NT}^{-1})^2 \sigma_{\min}((\BS \Lambda^\top \BS \Lambda/N)^{-1}) \sigma_{\min}(\BS f_\xi^{-1}(\omega))$. Thus, $\|G\|_2=O(1)$ and since it is of fixed dimension, we also have $\|G\|_l=O(1), l \in [1,\infty]$.
Since $\BS f_X$ is hermitian, we can focus on $l=\infty$. We have by Assumption~\ref{ass.fac} and \ref{sparsity.b} $\| \BS f_\xi^{-1}(\omega) \BS \Lambda \BS H_{NT}^{-1}\|_\infty \leq \| \BS f_\xi^{-1}(\omega) \|_\infty \| \BS \Lambda \|_\infty \|\BS H_{NT}^{-1}\|_\infty \leq O(k_\xi)$. Note that $\BS \Lambda \in {N\times r}$ which means $\| \BS \Lambda \|_\infty\leq r \|\BS \Lambda\|_{\max}=O(1)$.\\

Similarly, since \small$\|\BS \Lambda^\top/N\|_\infty \leq N/N \|\BS \Lambda\|_{\max}$ \normalsize, we have \small$\|(\BS H_{NT}^{-1})^\top \BS \Lambda^\top/N \BS f_\xi^{-1}(\omega)\|_\infty=O(k_\xi)$.\normalsize By similar arguments, we have \small$\|\hat{\BS f}_\xi^{-1}(\omega) \boldsymbol{\hat\Lambda}-{\BS f}_\xi^{-1}(\omega) \BS \Lambda (\BS H_{NT}^{-1})\|_\infty=O_P(\|\hat{\BS f}_\xi^{-1}(\omega)-{\BS f}_\xi^{-1}(\omega)\|_\infty+k_\xi \|\boldsymbol{\hat\Lambda}-\BS \Lambda \BS H_{NT}^{-1}\|_{\max})=O_P(\|\hat{\BS f}_\xi^{-1}(\omega)-{\BS f}_\xi^{-1}(\omega)\|_\infty+k_\xi(\sqrt{\log(N)/T}+(NT)^{2/\zeta}/T+k_\xi/N))$\normalsize and \small$\| \boldsymbol{\hat\Lambda}^\top/N \hat{\BS f}_\xi^{-1}(\omega)-(\BS H_{NT}^{-1})^\top \BS \Lambda^\top/N \BS f_\xi^{-1}(\omega)\|_\infty\\=O_P(\|\hat{\BS f}_\xi^{-1}(\omega)-{\BS f}_\xi^{-1}(\omega)\|_\infty+k_\xi(\sqrt{\log(N)/T}+(NT)^{2/\zeta}/T+k_\xi/N))$.\normalsize 
We have further \small$\|G-\hat G\|_2 \leq \|G\|_2 \|\hat G\|_2 \|G^{-1}-\hat G^{-1}\|_2$\normalsize and  \small$\|G^{-1}-\hat G^{-1}\|_2\leq \|(\BS H_{NT}^{-1})^\top f_f^{-1}(\omega)\BS H_{NT}^{-1}/N-\hat f_f^{-1}(\omega)/N\|_2+\|(\BS H_{NT}^{-1}/\sqrt{N})^\top\BS \Lambda^\top \BS f_\xi^{-1}(\omega) \BS \Lambda \BS H_{NT}^{-1}/\sqrt{N}-\boldsymbol{\hat\Lambda}^\top/\sqrt{N} \hat{\BS f}_\xi^{-1}(\omega) \boldsymbol{\hat\Lambda})^{-1}/\sqrt{N}\|_2
$. \normalsize Note \small$\|(\BS H_{NT}^{-1}/\sqrt{N})^\top\BS \Lambda^\top-\boldsymbol{\hat\Lambda}^\top/\sqrt{N}\|_2\leq \|(\BS H_{NT}^{-1})^\top\BS \Lambda^\top-\boldsymbol{\hat\Lambda}^\top\|_{\max}=O_P(\sqrt{\log(N)/T}+(NT)^{2/\zeta}/T+k_\xi/N)$, $\|(\BS H_{NT}^{-1}/\sqrt{N})^\top\BS \Lambda^\top\|_2=O(1)$ and $\|\BS f_\xi^{-1} (\omega)\|_2=O(1)$. \normalsize\\ 

Hence, by these results and Lemma~\ref{lemma.spectral.density.factor} we have $\|G^{-1}-\hat G^{-1}\|_2=O_P(\sqrt{\log(N)/T}+(NT)^{2/\zeta}/T+k_\xi/N+ \|\BS f_\xi(\omega)^{-1}-\hat{\BS f}_\xi(\omega)^{-1}\|_2)=O_P(\sqrt{\log(N)/T}+(NT)^{2/\zeta}/T+k_\xi/N+\|{\BS \Sigma}_v^{-1}-\hat {\BS \Sigma}_v^{-1,CLIME}\|_2+k\max_s \|\hat \bbeta^{(s)}- \bbeta^{(s)}\|_2^{1-q})$ which is faster than $\|\hat{\BS f}_\xi^{-1}(\omega) \boldsymbol{\hat\Lambda}-{\BS f}_\xi^{-1}(\omega) \BS \Lambda (\BS H_{NT}^{-1})\|_\infty$. That means 
$\|\BS f_X(\omega)^{-1}-\boldsymbol{\hat f}_X(\omega)^{-1}\|_\infty=O_P(k_\xi\|\hat{\BS f}_\xi^{-1}(\omega)-{\BS f}_\xi^{-1}(\omega)\|_\infty+k_\xi^2 \|\boldsymbol{\hat\Lambda}-\BS \Lambda \BS H_{NT}^{-1}\|_{\max})$.
Since $\| \BS \Lambda \BS H_{NT}^{-1}/\sqrt{N}\|_2=O(1)$ and $\|\BS f_\xi(\omega)\|_2=O(1)$, we have further $\|\BS f_X(\omega)^{-1}-\boldsymbol{\hat f}_X(\omega)^{-1}\|_2=O_P(\|\hat{\BS f}_\xi^{-1}(\omega)-{\BS f}_\xi^{-1}(\omega)\|_2+\|\boldsymbol{\hat\Lambda}-\BS \Lambda \BS H_{NT}^{-1}\|_{\max})$. The assertions follows after inserting the rates of Lemma~\ref{lem.rate.spectral.var}. 
\end{proof}
\end{appendices}
\newpage
\section*{Supplementary Material}
In this supplementary material we collect the proofs of the lemmas.
\begin{proof}[Proof of Lemma~\ref{lemma.1}]
First note that under Assumption~\ref{ass.moments} and Remark~\ref{remark.functional.dependence} we have
for $\alpha>0.5, q\geq4$ $\max_{j} \|\{\xi_{j,t}\}\|_{q,\alpha}<\infty,\max_{j} \|\{f_{i,t}\}\|_{q,\alpha}<\infty$ and $\max_{j,i} \|\{\xi_{j,t} f_{i,t}\}\}\|_{q/2,\alpha}<\infty$. Furthermore, since $\{\BS f_t\}$ and $\{\BS \xi_t\}$ are linear processes and $\|\BS B^{(j)}\|_2\leq M \rho^j$, we have $\max_{\BS w \in \R^n, \|\BS w\|_2\leq 1} \|\{ \BS w^\top \BS \xi_{t}\}\|_{q,\alpha}<\infty=
\max_{\BS w \in \R^n, \|\BS w\|_2\leq 1} \| \{\BS w ^\top \BS B^{(j)} \BS \eps_t\} \|_{q,\alpha}=
\max_{\BS w \in \R^n, \|\BS w\|_2\leq 1} \| \sup_{m\geq 0} (m+1)^\alpha \sum_{k=m}^\infty \|\BS w^\top \BS B^{(k)} (\eps_0-\eps_0^\prime)\|_{E,q} \leq \sup_{m\geq 0} (m+1)^\alpha C \rho^m<\infty
$ and similarly $\max_{j} \|\{\xi_{j,t}^2\}\|_{q/2,\alpha}<\infty$ and $\max_{j} \|\{f_{i,t}^2\}\|_{q/2,\alpha}<\infty$.\\ 

For part \ref{lem.1.part.0}, we have $\BS X^\top \BS X/NT=\BS \Lambda \BS F^\top \BS F \BS \Lambda^\top/NT+\BS \Xi^\top \BS \Xi/NT+\BS \Xi^\top \BS F \BS \Lambda^\top/NT+\BS \Lambda \BS F^\top \BS \Xi/NT$. Assumption~\ref{ass.fac}, i.e, $\BS \Sigma_\Lambda>0, \BS \Sigma_F>0$, implies for $N,T$ large enough that $\BS F/\sqrt{T}$ and $\BS \Lambda/\sqrt{N}$ have rank $r$ and  that all $r$ eigenvalues are strictly positive. Furthermore, note that we have by part~\ref{lem.1.part.C} and Assumption~\ref{sparsity} $\|\BS \Xi^\top \BS \Xi/NT\|_F^2=(1/N^2 \sum_{i_1,i_2}^N (1/T \sum_{t=1}^T \xi_{i_1,t} \xi_{i_2,t})^2)=(1/N^2 \sum_{i_1,i_2}^N (\BS e_{i_1}^\top \BS \Gamma_\xi(0) \BS e_{i_2}+1/T \sum_{t=1}^T \xi_{i_1,t} \xi_{i_2,t}-\BS e_{i_1}^\top \BS \Gamma_\xi(0) \BS e_{i_2})^2)=O_P(k_\xi/N+\log(N)/T+(NT)^{1/\zeta}/T^2)$. Additionally, by part~\ref{lem.1.part.A} we have $\|\BS \Lambda \BS F^\top \BS \Xi/NT\|_F^2=\|\BS \Xi^\top \BS F \BS \Lambda^\top/NT\|_F^2=1/N^2 \sum_{i_1,i_2}^N \sum_{l=1}^r (1/T \sum_{t=1}^T \xi_{i_1,t}f_{r,t})^2 \ell_{i,l}=O_P(\log(N)/T+(NT)^{1/\zeta}/T^2)$. That means for $N,T$ large the eigenvalues of $\BS X^\top \BS X/NT$ are approximately those of $\BS \Lambda \BS F^\top \BS F \BS \Lambda^\top/NT$. Hence, for $N,T$ large, $\BS X^\top \BS X/NT$ possesses $r$ positive eigenvalues which implies that $\BS D_{NT,r}^2$ is invertible and consequently, $\BS D_{NT,r}^{-2}=O_P(1)$. Since by Assumption~\ref{ass.fac} $\lim_T\|\BS F/\sqrt{T}\|_2\leq M$ and $\lim_T\|\BS \Lambda/\sqrt{N}\|_2\leq M$, we also have $\BS D_{NT,r}^{2}=O_P(1)$.\\ 

For the part \ref{lem.1.part.A}, note first that $\mathbb{E}\frac{1}{T}\sum_{s=1}^T \xi_{i,s} f_{j,s}=0$ due to Assumption \ref{ass.moments}. Furthermore, since $P(\max_{i,j}|\sum_{s=1}^T \xi_{i,s} f_{j,s}| \geq x) \leq \sum_{i,j} P(|\sum_{s=1}^T \xi_{i,s} f_{j,s}| \geq x)$, we have Assumption \ref{ass.moments} and Theorem~2 in \cite{wu2016performance} for $q >2$ and $C_1,C_2,C_3$ some constants depending only on $q$ and $\alpha$
\small
$$
P\left(\left|\sum_{s=1}^T \xi_{i,s} f_{j,s}\right| \geq x\right)
\leq C_1 \frac{T \max_{j,i} \|\{\xi_{j,t} f_{i,t}\}\|_{q,\alpha}}{x^q}+C_2 \exp\left(-\frac{C_3 x^2}{T \max_{j,i} \|\{\xi_{j,t} f_{i,t}\}\|_{2,\alpha}^2}\right).
$$\normalsize
Furthermore, 
\small
$$
P\left(\max_{i,j}\left|\sum_{s=1}^T \xi_{i,s} f_{j,s}\right| \geq x\right)
\leq C_1 \frac{NT \max_{j,i} \|\{\xi_{j,t} f_{i,t}\}\|_{q,\alpha}}{x^q}+C_2 \exp\left(-\frac{C_3 x^2}{T \max_{j,i} \|\{\xi_{j,t} f_{i,t}\}\|_{2,\alpha}^2}+\log(N)\right).
$$\normalsize
This implies $\max_{i,j} |1/T\sum_{s=1}^T \xi_{i,s} f_{j,s}|=O_P(\sqrt{(\log(N)/T})+N^{2/\zeta}T^{2/\zeta-1})$. Since $\mathbb{E} f_{j,t}^2=\Sigma_{f,j,j}$, $\mathbb{E} \xi_{j_1,t} \xi_{j_2t}=\BS e_{j_1}^\top \BS \Gamma_{\xi} \BS e_{j_2}$, and $r$ is fixed, Part \ref{lem.1.part.B} and \ref{lem.1.part.C} follow by the same arguments. Note also that for some vectors $\BS u,\BS  v$ and some symmetric matrix $\BS \Gamma$, we have $\BS u^\top \BS \Gamma \BS u \leq \BS v^\top \BS \Gamma \BS v + \BS u^\top \BS \Gamma \BS u$. That is why for  $\max_{j_1,j_2}|\sum_{s=1}^T (\xi_{j_1,s} \xi_{j_2,s}-\BS e_{j_1}^\top \BS \Gamma_{\xi} \BS e_{j_2})$ it is sufficient to look at $\max_{j}|\sum_{s=1}^T (\xi_{j,s} \xi_{j,s}-\BS e_{j}^\top \BS \Gamma_{\xi} \BS e_{j})$.\\  

For the part \ref{lem.1.part.D}, note that $\BS \Gamma_{\xi}(0)=\sum_{j=0}^\infty \BS B^{(j)} \BS \Sigma_v (\BS B^{(j)})^\top$ and $1/\sqrt{N} \sum_{i=1}^N \ell_{i,k} \xi_{i,s}=1/\sqrt{N} \BS e_k^\top \BS \Lambda \BS \xi_s$, where $1/\sqrt{N} \BS e_k^\top \BS \Lambda\in \R^N$ and $\|1/\sqrt{N} \BS e_k^\top \BS \Lambda\|_2\leq M$. Since $\|\BS B^{(j)}\|_2\leq M \rho^j$ and $\|\BS \Sigma_v\|_2\leq M$ by Assumption~\ref{sparsity},\ref{ass.moments}, we have $\|\BS \Gamma_{\xi}(0)\|_2 \leq M^3/(1-\rho^2)$ and the assertions follows then by Assumption~\ref{ass.fac} and part \ref{lem.1.part.C}. For Part \ref{lem.1.part.E} we obtain by similar arguments for $q>2$,
\small
\begin{equation*}
\begin{aligned}
&P\left(\max_{j,k}\left|\frac{1}{T}\sum_{t=1}^T \BS e_j^\top \BS \xi_t \BS \xi_t^\top \BS \Lambda^\top \BS e_k- \BS e_j^\top \BS \Gamma_\xi(0) \BS \Lambda^\top \BS e_k\right| \geq x\right)\leq C_1 \frac{NT \max_{i} \|\{\xi_{i,t}^2\}\|_{q,\alpha}}{x^q}+\\
&+C_2 \exp\left(-\frac{C_3 x^2}{T \max_{i} \|\{\xi_{i,t}^2\}\|_{2,\alpha}^2}+\log(N)\right).
\end{aligned}
\end{equation*}\normalsize
Since $\|\BS \Gamma_\xi(0) \BS \Lambda^\top\|_{\max}\leq\|\BS \Gamma_\xi(0)\|_\infty \|\BS \Lambda\|_{\max}\leq M^2 k_\xi$, we have 
$\max_{j,k} |1/T\sum_{s=1}^T \BS e_j^\top \BS \xi_t \BS \xi_t^\top \BS \Lambda^\top \BS e_k|=O_P(k_\xi+\sqrt{\log(N)/T}+N^{2/\zeta}T^{2/\zeta-1})$.
.\\

For Part~\ref{lem.1.part.F}, first note that we have by Cauchy-Schwarz and the previous parts of this lemma
\small
\begin{align*}
    \left|\frac{1}{NT^2}\sum_{i=1}^N \sum_{s,t=1}^T \xi_{j,t} \BS f_t^\top \BS \Lambda_i \xi_{i,s} \hat{f}_{s,l}\right|&\leq 
    \left(\sum_{k=1}^r \left(\frac{1}{T} \sum_{t=1}^T \xi_{j,t} f_{k,t}\right)^2\right)^{1/2} \left(\sum_{k=1}^r \left(\frac{1}{N^2T} \sum_{s=1}^T (\BS e_k^\top \BS \Lambda^\top \BS \xi_s)^2\right)\left(\frac{1}{T} \sum_{s=1}^T \hat f_{l,s}^2\right)\right)^{1/2} \\
    &=\frac{1}{\sqrt{NT}}
    \left(\sum_{k=1}^r \left(\frac{1}{\sqrt{T}} \sum_{t=1}^T \xi_{j,t} f_{k,t}\right)^2\right)^{1/2} \left(\sum_{k=1}^r \left(\frac{1}{NT} \sum_{s=1}^T (\BS e_k^\top \BS \Lambda^\top \BS \xi_s)^2\right)\right)^{1/2}\\
    &=O_P\left(\frac{1}{\sqrt{NT}}\right),
\end{align*}\normalsize
$\max_{j}|\frac{1}{NT^2}\sum_{i=1}^N \sum_{s,t=1}^T \xi_{j,t} \BS f_t^\top \BS \Lambda_i \xi_{i,s} \hat{f}_{s,l}|=O_P(\sqrt{\log(N)}/\sqrt{NT}+N^{2/\zeta-1/2}T^{2/\zeta-3/2})$,
\small
\begin{align*}
   &\max_{j}\left|\frac{1}{NT^2}\sum_{i=1}^N \sum_{s,t=1}^T \xi_{j,t} \xi_{i,t} \BS \Lambda_i \BS f_s \hat f_{l,s}\right|\leq\\ 
    &\leq\left(\sum_{k=1}^r \max_{j} \left(\frac{1}{NT} \sum_{t=1}^T \BS e_j^\top \BS \xi_t \BS \xi_t^\top \BS \Lambda^\top e_k\right)^2\right)^{1/2}\left(\sum_{k=1}^r \left(\frac{1}{T} \sum_{s=1}^T f_{k,s}^2\right)\left(\frac{1}{T} \sum_{s=1}^T \hat f_{l,s}^2\right)\right)^{1/2} \\
    &\leq \frac{k_\xi}{N} \left(\sum_{k=1}^r \max_{j} \sum_{t=1}^T \left(\frac{1}{k_\xi T} \BS e_j^\top \BS \xi_t \BS \xi_t^\top \BS \Lambda^\top e_k\right)^2\right)^{1/2}\left(\sum_{k=1}^r \left(\frac{1}{T} \sum_{s=1}^T f_{k,s}^2\right)\right)^{1/2}\\
    &=O_P\left(\frac{k_\xi}{N}+\frac{\sqrt{\log(N)}}{\sqrt{T}N}+N^{2/\zeta-1}T^{2/\zeta-1}\right),
\end{align*}
\begin{align*}
    \max_{j}&\left|\frac{1}{NT^2} \sum_{i=1}^N \sum_{s,t=1}^T \xi_{j,t} \xi_{i,t} \xi_{i,s} f_{l,s}\right|=\max_j\left|
    \frac{1}{N}\sum_{i=1}^N \left(\frac{1}{T} \xi_{j,t} \xi_{i,t} - \BS e_j \BS\Gamma_\xi(0) \BS e_i+\BS e_j \BS \Gamma_\xi(0) \BS e_i\right) \left(\frac{1}{T} \sum_{s=1}^T \xi_{i,s} f_{l,s}\right)\right| \\
    =&\left\|\frac{1}{T} \sum_{t=1}^T \BS \xi_t \BS \xi_t^\top -\BS \Gamma_\xi(0)\right\|_{\max} \left\|\frac{1}{T} \sum_{s=1}^T \BS \xi_s f_{l,s}\right\|_{\max}+\left\|\frac{1}{T} \sum_{s=1}^T \BS \xi_s f_{l,s}\right\|_{\max}\left\|\BS \Gamma_\xi(0)\right\|_\infty/N\\
    =&O_P\left(\frac{\log(N)}{T}+(NT)^{4/\zeta}/T^2+\frac{k_\xi}{N}\left(\sqrt{(\log(N)/T})+N^{2/\zeta}T^{2/\zeta-1}\right)\right),
\end{align*}
\normalsize
and \small
\begin{align*}
    \max_j\left|\frac{1}{NT^2} \sum_{i=1}^N \sum_{s,t=1}^T \xi_{j,t} \xi_{i,t} \xi_{i,s} \hat f_{l,s}\right|&=O_P\left(\frac{{\log(N)}}{{T}}+N^{4/\zeta}T^{4/\zeta-2}+\frac{k_\xi}{N}+\frac{\sqrt{\log(N)}}{\sqrt{NT}}+N^{2/\zeta-1/2}T^{2/\zeta-1}\right).
\end{align*}\normalsize
Then, we have further
\small
\begin{align*}
    &\frac{1}{NT^2} \sum_{i=1}^N \sum_{s,t=1}^T \xi_{j,t} \xi_{i,t} \xi_{i,s} \hat f_{l,s}=\\
    &=\sum_{k=1}^r \left(\frac{1}{NT^2} \sum_{i=1}^N \sum_{s,t=1}^T \xi_{j,t} \xi_{i,t} \xi_{i,s} f_{k,s}\right)\left(\BS e_l^\top \BS H_{NT} \BS e_k\right)+
    \frac{1}{NT^2} \sum_{i=1}^N \sum_{s,t=1}^T \xi_{j,t} \xi_{i,t} \xi_{i,s} (\hat f_{l,s}- \BS e_l \BS H_{NT} \BS f_s)\\
    &=I_{j}+II_{j},
\end{align*}
$\max_j|I_{j}|=O_P\left(\frac{\sqrt{\log(N)}}{T}+N^{2/\zeta}T^{2/\zeta-3/2}+\frac{k_\xi}{N}\left(\sqrt{(\log(N)/T})+N^{2/\zeta}T^{2/\zeta-1}\right)\right)$. Furthermore, we have
\begin{align*}
    &\max_j |II_j|=\\
    &=\max_j\left|\frac{1}{N^2T^3} \sum_{k=1}^N \sum_{s,t=1}^T \xi_{j,s} \xi_{k,s} \xi_{k,t} \left[ \sum_{i=1}^N \sum_{s=1}^T \BS f_t^\top \BS\Lambda_i \xi_{i,s} \hat{\BS f_s}+ \sum_{i=1}^N \sum_{s=1}^T \xi_{i,t} \BS\Lambda_i \BS f_{s}^\top \hat{\BS f_s}+ \sum_{i=1}^N \sum_{s=1}^T \xi_{i,t}\xi_{i,s}  \hat{\BS f_s}\right] \BS D_{NT,r}^{-2} \BS e_l\right| \\
    \leq&
    \frac{1}{\sqrt{N}}\max_{j,i} \left|\frac{1}{T} \sum_{s=1}^T \xi_{j,s} \xi_{i,s}\right| \max_{j,l} \left|\frac{1}{T} \sum_{t=1}^T \xi_{j,t} f_{l,t}\right|  \left( \max_l\frac{1}{NT}  \sum_{s=1}^T ( \BS e_l^\top \BS \Lambda \BS \xi_s)^2\right)^{1/2} \|\BS D_{NT}^{-2}\|_{\max}r^2 \\
    &+\max_{j,i} \left|\frac{1}{T} \sum_{s=1}^T \xi_{j,s} \xi_{i,s}\right| \max_{j,l}\left|\frac{1}{NT} \sum_{t=1}^T \xi_{j,t} \BS \xi_t^\top \BS \Lambda \BS e_l\right| \max_{k,l} \left|\frac{1}{T} \sum_{s=1}^T f_{k,s} \hat f_{l,s}\right| \|\BS D_{NT}^{-2}\|_{\max} r^2 \\
    &+\max_j \frac{1}{N^2}\left|\BS e_j^\top \left(\frac{1}{T} \sum_{t=1}^T \BS \xi_t \BS \xi_t^\top \right)^5  \BS e_j\right|^{1/2} \\
    =&O_P\left(1+\left(\sqrt{\log(N)/T}+N^{2/\zeta}T^{2/\zeta-1}\right)/\sqrt{N}\left(\sqrt{\log(N)/T}+N^{2/\zeta}T^{2/\zeta-1}\right)\right)\\
    &+O_P\left(1+\left(\sqrt{\log(N)/T}+N^{2/\zeta}T^{2/\zeta-1}\right)\left(\frac{k_\xi}{N}+\frac{\sqrt{\log(N)}}{\sqrt{T}N}+N^{2/\zeta-1}T^{2/\zeta-1}\right)\right)
    +III\\
    =&O_P\left(\frac{{\log(N)}}{{T}}+N^{4/\zeta}T^{4/\zeta-2}+\frac{k_\xi}{N}+\frac{\sqrt{\log(N)}}{\sqrt{NT}}+N^{2/\zeta-1/2}T^{2/\zeta-1}\right),
\end{align*}\normalsize
where $III\leq 1/N^2 (\|(\frac{1}{T} \sum_{t=1}^T \BS \xi_t \BS \xi_t^\top)-\BS \Gamma_\xi(0)\|_\infty +\|\BS \Gamma_\xi(0)\|_\infty)^2 \|\frac{1}{T} \sum_{t=1}^T \BS \xi_t \BS \xi_t^\top\|_{\max}^{1/2}\leq (\|(\frac{1}{T} \sum_{t=1}^T \BS \xi_t \BS \xi_t^\top)-\BS \Gamma_\xi(0)\|_{\max}^2+k_\xi/N)^2 \|\frac{1}{T} \sum_{t=1}^T \BS \xi_t \BS \xi_t^\top\|_{\max}^{1/2} =\\O_P((\frac{\sqrt{\log(N)}}{\sqrt{T}}+N^{2/\zeta}T^{2/\zeta-1}+k_\xi/N)^2(1+(\sqrt{\log(N)/T}+N^{2/\zeta}T^{2/\zeta-1})))$. Then, the assertion is the combination of the previous results.\\

For part~\ref{lem.1.part.G}, first note that we have from part~\ref{lem.1.part.F}. 
\small
\begin{align*}
    \Big|\frac{1}{NT}\sum_{i=1}^N& \sum_{s=1}^T \xi_{i,t} \xi_{i,s}[\hat{\BS f_s} - \BS  H_{NT} {\BS f_s}]\Big|\leq \frac{1}{N} \sum_{s=1}^N |\xi_{i,k}| \max_{j,l} \left|1/T\sum_{s=1}^T \BS  e_l^\top(\hat{\BS f}_s - \BS  H_{NT} {\BS f}_s) \xi_{i,s}\right| \\
    &=O_P\left(\frac{{\log(N)}}{{T}}+\frac{k_\xi}{N}+\frac{\sqrt{\log(N)}}{\sqrt{NT}}+\gt\right).
\end{align*}
\normalsize
Furthermore, from part~\ref{lem.1.part.A} and Assumption~\ref{ass.fac}. we have
\begin{align*}
    &\left|\frac{1}{NT}\sum_{i=1}^N \sum_{s=1}^T \BS  e_l^\top \BS f_t^\top \BS\Lambda_i \xi_{i,s}[\hat{\BS f}_s - \BS  H_{NT} {\BS f_s}]\right|=\\
    &=O_P\left(\frac{{\log(N)}}{{T}}+\frac{k_\xi}{N}+\frac{\sqrt{\log(N)}}{\sqrt{NT}}+\gt\right).
\end{align*}
\normalsize
Then, the assertion follows by the same arguments as in part~\ref{lem.1.part.F}. Note that in each assertion in part~\ref{lem.1.part.F} $\xi_{j,t}$ is replaced with $\|\BS  e_l^\top \BS \Lambda^\top \BS\|_2/N(\BS  e_l^\top \BS \Lambda^\top \BS/\|\BS  e_l^\top \BS \Lambda^\top \BS\|_2 \xi_t)$. Since $\|\BS  e_l^\top \BS \Lambda^\top \BS\|_2/N=O(1/\sqrt{N)}$ the assertion follows by part~\ref{lem.1.part.A},\ref{lem.1.part.D}.\\

For part~\ref{lem.1.part.H}, we have for $j,l=1,\dots,r$\small
\begin{align*}
     \Bigg|\frac{1}{T}\sum_{s=1}^T& f_{j,s} [\hat{\BS f}_s - \BS  H_{NT} \BS f_s]^\top \BS D_{NT,r}^2 \BS  e_l \Bigg|=\frac{1}{NT^2} \left|\sum_{t,s=1}^T \sum_{i=1}^N f_{j,t} \xi_{i,t} \BS \Lambda_i \BS f_s^\top \hat f_{s,l} +\sum_{t,s=1}^T f_{j,t} \BS \xi_{t}^\top \BS\xi_{s} f_{s,l} \right|\\
    &+O_P \left(\frac{\log(N)}{T}+\frac{k_{\xi}}{N}+\frac{\sqrt{\log(N)}}{\sqrt{NT}}+(NT)^{2/\zeta} \left(\frac{1}{\sqrt{N}T}+\frac{1}{T^{3/2}}+(NT)^{2/\zeta}\frac{1}{T^2} \right)
    \right) \\
    \leq& \left(\sum_{k=1}^r \left(\frac{1}{TN} \sum_{t=1}^T (\BS  e_k^\top \BS \Lambda^\top \BS \xi_t f_{j,t})^2\right)^{1/2} \left(\sum_{k=1}^r \left(\frac{1}{T}\sum_{s=1}^T f_{k,r}^2 \right)^{1/2} \right) \right) \\
    &+ \left[ \left(\frac{1}{T} \sum_{t=1}^T f_{j,t} \BS \xi_t^\top \right) \left(\frac{1}{T} \sum_{s=1}^T \BS \xi_s \xi_s^\top \right) \left(\frac{1}{T} \sum_{z=1}^T \BS \xi_z  f_{j,z} \right) \right]^{1/2}\frac{1}{N}\\
    &+O_P \left(\frac{{\log(N)}}{{T}}+\frac{k_{\xi}}{N}+\frac{\sqrt{\log(N)}}{\sqrt{NT}}+(NT)^{2/\zeta} \left(\frac{1}{\sqrt{N}T}+\frac{1}{T^{3/2}}+(NT)^{2/\zeta}\frac{1}{T^2} \right)
    \right)\\
    &=O_P \left(\frac{\log(N)}{T}+\frac{k_{\xi}}{N}+\frac{\sqrt{\log(N)}}{\sqrt{NT}}+(NT)^{2/\zeta} \left(\frac{1}{\sqrt{N}T}+\frac{1}{T^{3/2}}+(NT)^{2/\zeta}\frac{1}{T^2} \right) \right).
\end{align*}\normalsize
Part~\ref{lem.1.part.I} follows by part~\ref{lem.1.part.A},\ref{lem.1.part.D}, and \ref{lem.1.part.G}. For Part~\ref{lem.1.part.J}, note that $\|\BS  \Lambda\|_{\max}\leq M$. Then, this part follows by \ref{lem.1.part.F},\ref{lem.1.part.H}, and \ref{lem.1.part.I}. 
\end{proof}

\begin{proof}[Proof of Lemma~\ref{lem_spars}]
This proof follows ideas of the Proof of Proposition 4.1 in \cite{basu2015} as well as the Proof of Corollary 3 in \cite{negahban2012unified}. Let $\hat \gamma=1/(T-p) \sum_{t=p+1}^T \hat \xi_{j,t} \boldsymbol{\hat \xi}_{t-1}^v$ and $\boldsymbol{\hat \Gamma}=1/(T-p) \sum_{t=p+1}^T \boldsymbol{\hat \xi}_{t-1}^v (\boldsymbol{\hat \xi}_{t-1}^v)^\top$. Let $\BS \beta^*:=\BS \beta_j, \boldsymbol{\hat \beta}:=\boldsymbol{\hat \beta}_j$ and $\BS v=\boldsymbol{\hat \beta}-\BS \beta^*$. Furthermore, let for some threshold $\eta>0$ $J=J_\eta=\{j\in\{1,\dots,np\} | \BS e_j^\top \BS \beta^*\|>\eta\}$ denote the set of indices for which $\bbeta^*$ is absolutely greater than the threshold $\eta$, $\bbeta_nu$ refers to the hard thresholded vector with threshold $\eta$ and for some vector $\bu$, $\bu_J, \bu_{J^C}$ denotes the vector obtained by the indices in $J$, $J^C$, respectively.
We have by Assumption~\ref{sparsity} $\|\bbeta^*_\eta-\bbeta^*\|_1\leq \eta^{1-q} k$. Furthermore, $|J|\leq \eta^{-q} k$.\newline Since $\hat \bbeta_j$ is the minimum given in \eqref{eq.lasso.beta.j}, we have $-\hat\bbeta^\top \hat \bgamma+\hat \bbeta^\top \hat \bGamma \hat \bbeta+\hat \blambda \|\hat \bbeta\|_1 \leq 2 \bbeta^* \hat \bgamma+(\bbeta^*)^\top \hat \bGamma \bbeta^*+\hat \blambda\|\bbeta^*\|_1$. This gives further $\bv^\top \hat \bGamma \bv\leq 2 v^\top (\hat \bgamma-\hat\bGamma \bbeta^*)+\hat \blambda(\|\bbeta*\|_1-\|\bbeta^*+\bv\|_1)\leq 2 \bv^\top (\hat \bgamma-\hat\bGamma \bbeta*)+\hat \blambda(\|\bbeta_\eta^*\|_1+2\|\bbeta^*-\bbeta_\eta^*\|_1-\|\bbeta_\eta^*+\bv\|_1)\leq 2 \bv^\top (\hat \bgamma-\hat\bGamma \bbeta^*)+\hat \blambda(\|\bv_J\|_1-\|\bv_{J^C}\|_1+2 \eta^{1-q}k)$.
This implies with the condition $\|\frac{1}{T-p} \sum_{t=p+1}^T  (\hat \xi_{j,t}- \bbeta_j^\top \hat \bxi_{t-1}^v) \hat \bxi_{t-1}^v\|_{\max} \leq \hat \blambda/4 $ that $0\leq \bv^\top \hat \bGamma v\leq 3/2 \hat \blambda \|\bv_J\|_1-1/2 \|\bv_{J^C}\|_1\leq 2 \hat \blambda \|\bv\|_1+2\hat \blambda \eta^{1-q}k$. Hence, $\|\bv_{J^C}\|_1\leq 3 \|\bv_J\|_1+4 \eta^{1-q} k$ and since $|J|\leq \eta^{-q} k$, $\|\bv\|_1\leq 4 \sqrt{k} \eta^{-q/2} \|\bv\|_2+4s \eta^{1-q}$. 

Then, with the condition $\bTheta^\top \frac{1}{T-p} \sum_{t=p+1}^T \hat \bxi_{t-1}^v (\hat \bxi_{t-1,j})^\top \bTheta\geq \alpha \|\bTheta\|_2^2-\hat\tau \|\bTheta\|_1^2 \, \forall\, \bTheta \in \R^{np}$  we obtain that $\alpha\|\bv\|_2^2-\hat \tau \|\bv\|_1 \leq 8 \hat \blambda \|\bv\|_2 \sqrt{k} \eta^{-q/2}+10 \hat \blambda k \eta^{1-q}$. Set $\eta=\hat \blambda/\alpha$. Then, with the bound for $\|\bv\|_1$ and dropping minor terms in the maximum we obtain 
$\|\bv\|_2\leq 16 \max(\sqrt{k} (\hat \blambda/\alpha)^{1-q/2},\sqrt{\hat \tau} s (\hat \blambda/\alpha)^{1-q})$. Furthermore, $\|v\|_1\leq \max(68 k (\hat \blambda/\alpha)^{1-q},64 \sqrt{\tau}k^{3/2} (\hat \blambda/\alpha)^{1-3/2q}+4k(\hat \blambda/\alpha)^{1-q}).$
\end{proof}

\begin{proof}[Proof of Lemma~\ref{lemma.spectral.density.factor}]
We have $\|\boldsymbol{\hat f}_f(\omega)-\BS H_{NT}\BS f_f(\omega)\BS H_{NT}^\top\|_{\max}\leq
\|\boldsymbol{\hat f}_f(\omega)-\BS H_{NT}\boldsymbol{\tilde f}_f(\omega)\BS H_{NT}^\top\|_{\max}+\|\boldsymbol{\tilde f}_f(\omega)-\BS f_f(\omega)\|_{\max}\|\BS H_{NT}\|_1 \|\BS H_{NT}\|_\infty$. We have by Lemma~\ref{lemma.1} \ref{lem.1.part.0}) $\|\BS H_{NT}\|_1=O(1)=\|\BS H_{NT}\|_\infty$ and by Theorem 3 in \cite{wu2018asymptotic} $\|\boldsymbol{\tilde f}_f(\omega)-\BS f_f(\omega)\|_{\max}=O_P(\sqrt{B_T/T})$. Note that the dimension $r$ of the process $\{\BS f_t\}$ is fixed. Furthermore, we have 
$\|\boldsymbol{\hat f}_f(\omega)-\BS H_{NT}\BS f_f(\omega)\BS H_{NT}^\top\|_{\max}=\\
\|1/(2\pi) \sum_{h=-T+1}^{T-1} K\left(\frac{u}{B_T}\right) \exp(-ih \omega) (\hat \Gamma_f(h)-\BS H_{NT} \tilde \Gamma_f(h) \BS H_{NT}^\top)\|_{\max}$. Additionally, we have by Lemma~\ref{lemma.1} \ref{lem.1.part.H}),\ref{lem.1.part.I}) $\|\hat \Gamma_f(h)-\BS H_{NT} \tilde \Gamma_f(h) \BS H_{NT}\|_{\max}=\|\BS H_{NT}\|_\infty \|1/T \sum_t f_{t+h} [\boldsymbol{\hat f}_t-\BS H_{NT} \BS f_t]^\top\\+\|1/T\sum_t [\boldsymbol{\hat f}_{t+h}-\BS H_{NT} f_{t+h}] \BS f_t^\top\|_{\max}\|\BS H_{NT}\|_1+
\|1/T \sum_{t} [\boldsymbol{\hat f}_{t+h}-\BS H_{NT} f_{t+h}][\boldsymbol{\hat f}_t-\BS H_{NT} \BS f_t]^\top\\=O_P\left(\frac{{\log(N)}}{{T}}+\frac{k_\xi}{N}+\frac{\sqrt{\log(N)}}{\sqrt{NT}}+\gt\right).$ For the second assertion, note that $A^{-1}-B^{-1}=B^{-1}(B-A)A^{-1}$ and since the dimension of $\BS f_f$ is fixed, the second assertion follows immediately. 
\end{proof}

\begin{proof}[Proof of Lemma~\ref{lem.rate.spectral.var}]
First consider the estimation error in the residuals. For this, we consider the (unfeasible) sample covariance $\tilde {\BS \Sigma}_v=1/T \sum_t \BS v_t \BS v_t^\top$. For this, we have based on $\zeta$ moments and the Fuk-Nagaev inequality
$\|\tilde {\BS \Sigma}_v-\BS \Sigma_v\|_{\max}=O_P(\sqrt{(\log(N)/T})+N^{2/\zeta}T^{2/\zeta-1})$. Note that we have only the estimated residuals given by $\hat{\BS v}_t=\hat{\BS \xi}_t-\sum_{j=1}^p \hat{\BS A}^{(j)} \hat{\BS \xi}_{t-j}$. This gives the sample covariance 
$\hat {\BS \Sigma}_v=1/T \sum_t \hat{\BS v}_t \hat {\BS v}_t^\top$. We have $\tilde {\BS \Sigma}_v- \hat {\BS \Sigma}_v=1/T \sum_t ({\BS v}_t-\hat{\BS v}_t) {\BS v}_t^\top+ \BS v_t ({\BS v}_t-\hat{\BS v}_t)^\top +({\BS v}_t-\hat{\BS v}_t)({\BS v}_t-\hat{\BS v}_t)^\top$. Furthermore, ${\BS v}_t-\hat{\BS v}_t=\BS w_t+\sum_{j=1}^p \BS A^{(j)} \BS w_{t-j}+\sum_{j=1}^p(\hat{\BS A}^{(j)}-\BS^{(j)}) \BS \xi_{t-j}+\sum_{j=1}^p (\hat{\BS A}^{(j)}-\BS^{(j)}) \BS w_{t-j}$. Hence, following the arguments of Theorem~\ref{thm.est.error.step.one} we have $\|\tilde {\BS \Sigma}_v- \hat {\BS \Sigma}_v\|_{\max}=O_P(\|\A\|_\infty \|1/T \sum_{t=1} \BS w_t \BS \xi_t\|_{\max} +\max_j \|\hat \bbeta^{(j)}- \bbeta^{(j)}\|_1 ( \|1/T\sum_t \BS \xi_{t-1}^v \BS v_t\|_{\max}+\|1/T \sum_{t=1} \BS w_t \BS \xi_t\|_{\max})).$ Since $\BS \xi_{t-1}^v \BS v_t=0$, we have by the arguments of Lemma~\ref{lemma.1} $\|1/T\sum_t \BS \xi_{t-1}^v \BS v_t\|_{\max}=O_P(\sqrt{(\log(Np)/T})+(Np)^{2/\zeta}T^{2/\zeta-1})$. Together with Theorem~\ref{thm.est.error.step.one} and $\|\tilde {\BS \Sigma}_v-\BS \Sigma_v\|_{\max}$  this lead to the following\small
\begin{align*}
\|{\BS \Sigma}_v- \hat {\BS \Sigma}_v\|_{\max}=&O_P\Bigg(\sqrt{(\log(N)/T})+N^{2/\zeta}T^{2/\zeta-1}+k\Big[\frac{k_\xi}{N}+\frac{\log(N)}{T}+\frac{\sqrt{\log(N)}}{\sqrt{NT}}+(NT)^{2/\zeta-1}k_\xi+\frac{(NT)^{4/\zeta}}{T^2}\Big]+ \\
&(\sqrt{(\log(Np)/T})+(Np)^{2/\zeta}T^{2/\zeta-1})\Bigg(k\Bigg[\sqrt{\log(Np)/T}+(NpT)^{2/\zeta}/T+k\Bigg(\frac{k_\xi}{N}+\frac{\sqrt{\log(Np)}}{\sqrt{NT}}\\
&+(NpT)^{2/\zeta}\Bigg(\frac{k_\xi}{NT}+\frac{1}{\sqrt{N}T}+\frac{1}{T^{3/2}}+(NpT)^{2/\zeta}\frac{1}{T^2}\Bigg)\Bigg)\Bigg]^{1-q}\Bigg) \Bigg).
\end{align*}
\normalsize
Setting up a CLIME estimator on $\{\hat{\BS v}_t\}$ leads to $\hat {\BS \Sigma}_v^{-1,CLIME}$ and following now the arguments of \cite{cai2011constrained} gives us that the CLIME estimator fulfill $\|{\BS \Sigma}_v^{-1}-\hat {\BS \Sigma}_v^{-1,CLIME}\|_l=O_P(k_v (\|{\BS \Sigma}_v^{-1}\|_1 \|{\BS \Sigma}_v- \hat {\BS \Sigma}_v\|_{\max}^{1-q_v}))$ for $l\in [1,\infty]$.

We have by Theorem~2 that $\|\A-\hat{\A}\|_{\max}=O_P( \max_s\|\hat \bbeta^{(s)}- \bbeta^{(s)}\|_2)$. Consequently, we obtain by Theorem 1 in \cite{krampe2020statistical} that under Assumption~\ref{sparsity.b} $\sum_{j=1}^p \|\hat{\BS A}^{(thr,j)}-\BS A^{(j)}\|_l= O(k \max_s\|\hat \bbeta^{(s)}- \bbeta^{(s)}\|_2^{1-q})$. Then, we have by Theorem~6 in \cite{krampe2020statistical}
$\|\BS f_\xi(\omega)^{-1}-\hat{\BS f}_\xi(\omega)^{-1}\|_l=O_P(\sum_{j=1}^p\|\BS A^{(j)}\|_l^2 \|{\BS \Sigma}_v^{-1}-\hat {\BS \Sigma}_v^{-1,CLIME}\|_l+\sum_{j=1}^p \|\hat{\BS A}^{(thr,j)}-\BS A^{(j)}\|_l \|\BS A^{(j)}\|_l \|\BS \Sigma_v\|_l)$. 
\end{proof}
\end{document}